\begin{document}
\title{%
The Tractability of Model-Checking for LTL:\\
    The Good, the Bad, and the Ugly Fragments\thanks{%
   Supported in part by DFG VO 630/6-1 and the Postdoc Programme of the German Academic Exchange Service (DAAD).}
}

\author{%
  Michael Bauland\inst{1}  \and
  Martin Mundhenk\inst{2}  \and
  Thomas Schneider\inst{3} \and \\
  Henning Schnoor\inst{4}  \and
  Ilka Schnoor\inst{5}     \and
  Heribert Vollmer\inst{5}%
}
\institute{%
  Knipp GmbH, Martin-Schmei\ss er-Weg 9, 44227 Dortmund, Germany                            \\
  \protect\url|Michael.BaulandATknipp.de|                                                            \and
  Informatik, Friedrich-Schiller-Universit\"{a}t, 07737 Jena, Germany                       \\
  \protect\url|mundhenkATcs.uni-jena.de|                                                             \and
  Computer Science, University of Manchester, Oxford Road, Manchester M13 9PL, UK           \\
  \protect\url|schneiderATcs.man.ac.uk|                                                              \and
  Theoret.\ Informatik, Christian-Albrechts-Universit\"at, 24098 Kiel, Germany              \\
  \protect\url|schnoorATti.informatik.uni-kiel.de|                                                   \and
  Theoret.\ Informatik, Universit\"at L\"{u}beck, 23538 L\"{u}beck, Germany                 \\
  \protect\url|schnoorATtcs.uni-luebeck.de|                                                         \and 
  Theoret.\ Informatik, Universit\"{a}t Hannover, Appelstr. 4, 30167 Hannover, Germany      \\
  \protect\url|vollmerATthi.uni-hannover.de|
}

\maketitle

\begin{abstract}
  In a seminal paper from 1985, Sistla and Clarke showed
  that the model-checking problem for Linear Temporal Logic (LTL) is either \NP-complete
  or \PSPACE-complete, depending on the set of temporal operators used.
  If, in contrast, the set of propositional operators is restricted, the complexity may decrease.
  This paper systematically studies the model-checking problem for LTL formulae over restricted sets
  of propositional and temporal operators. For almost all combinations of temporal and propositional
  operators, we determine whether the model-checking problem is tractable (in \PTIME) or
  intractable (\NP-hard). We then focus on the tractable cases, showing that they all are \NL-complete
  or even logspace solvable.
  This leads to a surprising gap in complexity between tractable and intractable cases.
  It is worth noting that our analysis covers an infinite set of problems, since
  there are infinitely many sets of propositional operators.
\end{abstract}


\section{Introduction}

  Linear Temporal Logic (LTL) has been proposed by Pnueli~\cite{pnu77}
  as a formalism to specify properties of parallel programs and
  concurrent systems, as well as to reason about their behaviour.
  Since then, it has been widely used for these purposes.
  Recent developments require reasoning
  tasks---such as deciding satisfiability, validity, or
  model checking---to be performed automatically.
  Therefore, decidability and computational complexity
  of the corresponding decision problems are of great interest.

  The earliest and fundamental source of complexity results for the satisfiability problem (SAT) and
  the model-checking problem (MC) of LTL is certainly Sistla and Clarke's paper
  \cite{sicl85}. They have established \PSPACE-completeness of SAT and MC for LTL with the
  temporal operators \F\ (eventually), \G\ (invariantly), \X\ (next-time), \U\ (until),
  and \S\ (since). They have also shown that these problems are \NP-complete for certain
  restrictions of the set of temporal operators. 
  This work was continued by Markey~\cite{mar04}.
  The results of Sistla, Clarke, and Markey imply
  that SAT and MC for LTL and a multitude of its fragments are intractable.
  In fact, they do not exhibit any tractable fragment.

  The fragments they consider are obtained by restricting the
  set of temporal operators and the use of negations.
  What they do not consider are arbitrary fragments of 
  temporal \emph{and} Boolean operators.
  For propositional logic, a complete analysis has been achieved by Lewis
  \cite{lew79}.
  He divides all infinitely many sets of Boolean operators into those with tractable (polynomial-time
  solvable) and intractable (\NP-complete) SAT problems.
  A similar systematic classification has been obtained by Bauland et al.\ in \cite{bss+07} for LTL.
  They divide fragments of LTL---determined by arbitrary combinations
  of temporal and Boolean operators---into those with polynomial-time
  solvable, \NP-complete, and \PSPACE-complete SAT problems.

  This paper continues the work on the MC problem for LTL.
  Similarly as in \cite{bss+07},
  the considered fragments are arbitrary combinations of temporal and Boolean operators.
  We will separate the MC problem for almost all LTL fragments
  into tractable (\ie, polynomial-time solvable) and intractable (\ie, \NP-hard) cases.
  This extends the work of Sistla and Clarke, and Markey \cite{sicl85,mar04},
  but in contrast to their results,
  we will exhibit many tractable fragments and exactly determine their computational complexity.
  Surprisingly, we will see that tractable cases for model checking are even very easy---that is,
  \NL-complete or even \LS-solvable.
  There is only one set of Boolean operators, consisting of the binary \XOR-operator,
  that we will have to leave open.
  This constellation has already proved difficult to handle in \cite{bss+07,bhss06},
  the latter being a paper where SAT for basic modal logics has been classified in a similar way.

  While the borderline between tractable and intractable fragments
  in \cite{lew79,bss+07} is quite easily recognisable (SAT for fragments containing the
  Boolean function $f(x,y) = x \wedge \overline{y}$ is intractable, almost all others are
  tractable), our results for MC will exhibit a rather diffuse borderline. This will become
  visible in the following overview and is addressed in the Conclusion. 
  Our most surprising
  intractability result is the \NP-hardness of the fragment that only allows the temporal operator
  \U\ and no propositional operator at all.
  Our most surprising tractability result
  is the \NL-completeness of MC for the fragment that only allows the temporal operators \F, \G, and
  the binary \OR-operator.
  Taking into account that MC for the fragment with only \F\ plus \AND\
  is already \NP-hard (which is a consequence from \cite{sicl85}), we would have expected
  the same lower bound for the ``dual'' fragment with only \G\ plus \OR, but in fact we show
  that even the fragment with \F\ and \G\ and \OR\ is tractable.
  In the presence of the \X-operator, the expected duality occurs: The fragment
  with \F, \X\ plus \AND\ and the one with \G, \X\ plus \OR\ are both \NP-hard.

  Table \ref{tab:overview} gives an overview of our results.
  The top row refers to the sets of Boolean operators given in Definition~\ref{def:clones}.
  These seven sets of Boolean operators are all relevant cases,
  which is due to Post's fundamental paper \cite{pos41} and Lemma \ref{lemma:const}.
  Entries in bold-face type denote completeness for the given complexity class
  under logspace reductions.
  (All reductions in this paper are logspace reductions $\redlogm$.)
  The entry \LS\ stands for logspace solvability.
  All other entries denote hardness results.
  Superscripts refer to the source of the corresponding result as explained in the legend.

  \newlength{\ab}\setlength{\ab}{-1pt}
  \newcommand{\stab}{\rule{0pt}{9.05pt}}
  \newcommand{\stabb}{\rule{0pt}{12.05pt}}

  \newcommand{\Stab}{\rule{0pt}{14pt}}

  \begin{table}[t]
    \renewcommand{\arraystretch}{1.3}
    \addtolength{\fboxsep}{-3pt}
    \centering
    \begin{small}
      \parbox{.64\textwidth}{%
        \begin{tabular}{@{~}l|@{\hspace{0.4pt}}*{7}{@{\hspace{-0.2pt}}c}@{~}}
          prop.\ operators & \cI         & \cN         & \cE         & \cV         & \cM         & \cL         & \cBF        \\[\ab]
          temp.\ operators &             &             &             &             &             &             &             \\
          \hline\stab
          \X               & $\WBNL{10}$ & $\WBNL{10}$ & $\WBNL{12}$ & $\WBNL{11}$ & $\GNP2$     & $\WBNL{14}$ & $\GNP{S}$   \\[\ab]
          \G               & $\WBNL{10}$ & $\WBNL{10}$ & $\WBNL{12}$ & $\WBNL{13}$ & $\GNP2$     &             & $\GNP{S}$   \\[\ab]
          \F               & $\WBNL{10}$ & $\WBNL{10}$ & $\GNP5$     & $\WBNL{11}$ & $\GNP2$     &             & $\GNP{S}$   \\[\ab]
          \F\G             & $\WBNL{10}$ & $\WBNL{10}$ & $\GNP{c}$   & $\WBNL{13}$ & $\GNP{c}$   &             & $\GNP{S}$   \\[\ab]
          \F\X             & $\WBNL{10}$ & $\WBNL{10}$ & $\GNP{c}$   & $\WBNL{11}$ & $\GNP{c}$   &             & $\GPS{T}$   \\[\ab]
          \G\X             & $\WBNL{10}$ & $\WBNL{10}$ & $\WBNL{12}$ & $\GNP6$     & $\GPS3$     &             & $\GPS{T}$   \\[\ab]
          \F\G\X           & $\WBNL{10}$ & $\WBNL{10}$ & $\GNP{c}$   & $\GNP{c}$   & $\GPS1$     &             & $\GPS{T}$   \\[\ab]
          \hline\stab
          \S               & $\WSLS{15}$ & $\WSLS{15}$ & $\WSLS{15}$ & $\WSLS{15}$ & $\WSLS{15}$ & $\WSLS{15}$ & $\WSLS{15}$ \\[\ab]
          \S\X             & $\GNP{8}$  & $\GNP{8}$  & $\GNP{8}$  & $\GNP{8}$  & $\GNP{8}$  & $\GNP{8}$  & $\GNP{8}$  \\[\ab]
          \S\G             & $\GNP8$     & $\GNP8$     & $\GNP8$     & $\GNP8$     & $\GPS4$     & $\GNP8$     & $\GPS4$     \\[\ab]
          \S\F             & $\WBNL{16}$ & $\GNP9$     & $\GNP9$     & $\WBNL{16}$ & $\GPS4$     & $\GNP9$     & $\GPS4$     \\[\ab]
          \S\F\G           & $\GNP{c}$   & $\GNP{c}$   & $\GNP{c}$   & $\GNP{c}$   & $\GPS{c}$   & $\GNP{c}$   & $\GPS{S}$   \\[\ab]
          \S\F\X           & $\GNP{c}$   & $\GNP{c}$   & $\GNP{c}$   & $\GNP{c}$   & $\GPS{c}$   & $\GNP{c}$   & $\GPS{T}$   \\[\ab]
          \S\G\X           & $\GNP{c}$   & $\GNP{c}$   & $\GNP{c}$   & $\GNP{c}$   & $\GPS{c}$   & $\GNP{c}$   & $\GPS{T}$   \\[\ab]
          \S\F\G\X         & $\GNP{c}$   & $\GNP{c}$   & $\GNP{c}$   & $\GNP{c}$   & $\GPS{c}$   & $\GNP{c}$   & $\GPS{T}$   \\[\ab]
          \hline\stab
          all other        & $\GNP7$     & $\GNP{c}$   & $\GNP{c}$   & $\GNP{c}$   & $\GPS3$     & $\GNP{c}$   & $\GPS{T}$   \\[-3pt]
          combinations     &             &             &             &             &             &             &             \\[-1pt]
          (\ie, with \U)  &             &             &             &             &             &             &
        \end{tabular}%
      }%
\hspace{-4mm}
      \framebox{%
      \parbox{.32\textwidth}{%
        \textbf{~~\rule{0mm}{5mm}Legend.}

        \medskip
        ~~(\PS\ stands for \PSPACE.)

        \medskip
        \renewcommand{\arraystretch}{1.15}
        ~~\begin{tabular}[t]{l@{~~}l@{}}
          1   &  Theorem~\ref{theorem:propositional negation} \ref{part:MC(F,G,X;M) PSPACE-h}                        \\
          2   &  Theorem~\ref{theorem:propositional negation} \ref{part:MC(F|G|X;M) NP-h}                            \\
          3   &  Theorem~\ref{theorem:propositional negation} \ref{part:MC(U|G,X;M) PSPACE-h}                        \\
          4   &  Theorem~\ref{theorem:propositional negation} \ref{part:MC(S,G|S,F;M) PSPACE-h}                      \\
          5   &  Corollary~\ref{cor:MC(F;E) NP-h}                                                             \\
          6   &  Theorem~\ref{theorem:MC(G,X;V) NP-h}                                                             \\
          7   &  Theorem~\ref{theorem:MC(U;.) NP-h}                                                               \\
          8   &  Theorem~\ref{theorem:MC(S,G;.) NP-h}                                                             \\
          9   &  Theorem~\ref{theorem:MC(S,F;E|N) NP-h}                                                           \\
          10  &  Theorem~\ref{theorem:MC(F,G,X;N) NL-c}                                                       \\
          11  &  Theorem~\ref{theorem:MC(F,X;V) MC(G,X;E) NL-c} \ref{part:FXV}                                \\
          12  &  Theorem~\ref{theorem:MC(F,X;V) MC(G,X;E) NL-c} \ref{part:GXE}                                \\
          13  &  Theorem~\ref{theorem:MC(F,G;V) NL-c}                                                         \\
          14  &  Theorem~\ref{theorem:MC(X;L) NL-c}                                                           \\
          15  &  Theorem~\ref{theorem:MC(S;.) in L}                                                           \\
          16  &  Theorem~\ref{theorem:MC(S,F;V) NL-c}                                                         \\
          $S$ &  Theorem~\ref{theorem:sicl85} \ref{part:MC(F;andorneg) NP-c}                                  \\
          $T$ &  Theorem~\ref{theorem:sicl85} \ref{part:MC(F,X|U|U,S,X;andorneg) PSPACE-c}                    \\[4pt]
          $c$ &  conclusion from                                                                              \\[-2pt]
              &  surrounding results
        \end{tabular}%
      }%
      }

    \end{small}
    \par\bigskip
    \renewcommand{\arraystretch}{1.5}
    \addtolength{\fboxsep}{3pt}
    \caption{An overview of complexity results for the model-checking problem}
    \label{tab:overview}
  \end{table}

  This paper is organised as follows. 
  Section \ref{sec:prelims} contains all necessary definitions and notation.
  In Section \ref{sec:NP}, we show \NP-hardness of all intractable cases,
  followed by Section \ref{sec:nl} with the \NL-completeness of almost all remaining cases. 
  We conclude in Section \ref{sec:concl}. 
\ifextended
\else
  Due to the limitations of space, we have left out a number of proofs.
  These can be found in the Technical Report \cite{bms+07}.
\fi

\section{Preliminaries}
\label{sec:prelims}

A \emph{Boolean function} is a function $f:\{0,1\}^n\rightarrow\{0,1\}$. 
We can identify an $n$-ary propositional function symbol $c$ with the $n$-ary Boolean function $f$ defined by: 
$f(a_1,\dots,a_n)=1$ if and only if the formula $c(x_1,\dots,x_n)$ becomes true 
when assigning $a_i$ to $x_i$ for all $1\leq i\leq n$.
An \emph{operator} is either a function or a function symbol, which becomes clear from the context.
Additionally to propositional operators we use the unary temporal operators 
$\X$ (next-time), $\F$ (eventually), $\G$ (invariantly) 
and the binary temporal operators $\U$ (until), and $\S$ (since).

Let $B$ be a finite set of Boolean operators and $T$ be a set of temporal operators. 
A \emph{temporal $B$-formula over $T$} is a formula $\varphi$ that is 
built from variables, propositional operators from $B$, and temporal operators from $T$.
  More formally, a temporal $B$-formula over $T$ is either a propositional variable or of the form $f(\varphi_1,\dots,\varphi_n)$
  or $g(\varphi_1,\dots,\varphi_m)$, where $\varphi_i$ are temporal $B$-formulae over $T$, $f$ is an $n$-ary
  propositional operator from $B$ and $g$ is an $m$-ary temporal operator from $T$. 
In \cite{sicl85}, complexity results for formulae using the temporal operators
  \F, \G, \X\ (unary), and \U, \S\ (binary) were presented. We extend these results to temporal $B$-formulae over subsets of those temporal operators.
  The set of variables appearing in $\varphi$ is denoted by $\VAR(\varphi).$ If $T=\{\X,\F,\G,\U,\S\}$ we call $\varphi$ a \emph{temporal
  $B$-formula}, and if $T=\emptyset$ we call $\varphi$ a \emph{propositional $B$-formula} or simply a \emph{$B$-formula}. The
  set of all temporal $B$-formulae over $T$ is denoted with $\L{T,B}.$

  A \emph{Kripke structure} is a triple $K = (W, R, \eta)$, where $W$ is a finite set of states,
  $R \subseteq W \times W$ is a total binary relation (meaning that, for each $a \in W$, there is some $b \in W$ such that $aRb$)\footnote{%
    In the strict sense, Kripke structures can have arbitrary binary relations.
    However, when referring to Kripke structures, we always assume their relations to be total.
  },
  and $\eta: W \to 2^{\VAR}$ for a set \VAR\ of variables.

  A model in linear temporal logic is a linear structure of states,
  which intuitively can be seen as different points of time, with propositional assignments.
  Formally, 
  a \emph{path} $p$ in $K$ is an infinite sequence denoted as $(p_0,p_1,\dots)$,
  where, for all $i \ge 0$, $p_i \in W$ and $p_iRp_{i+1}$.

  For a temporal $\{\wedge,\neg\}$-formula over $\{\F,\G,\X,\U,\S\}$ with variables from \VAR,
  a Kripke structure $K=(W,R,\eta)$, and a path $p$ in $K$,
  we define what it means that \emph{$\struct{p}{K}$ satisfies $\varphi$ in $p_i$}
  ($\struct{p}{K},i\vDash\varphi$):
  let $\varphi_1$ and $\varphi_2$ be temporal $\{\wedge,\neg\}$-formulae over $\{\F,\G,\X,\U,\S\}$ and let $x\in \VAR$ be a variable.

  \begin{center}
    \renewcommand{\arraystretch}{1.2}
    \begin{tabular}{lll}
      \multicolumn{3}{l}{$\struct{p}{K}, i\vDash 1$ ~ ~ and ~ ~ $\struct{p}{K}, i\not\vDash 0$} \\
      $\struct{p}{K},i\vDash x$                        &~~iff~~& $x\in\eta(p_i)$                                                        \\
      $\struct{p}{K},i\vDash \varphi_1\wedge\varphi_2$ &~~iff~~& $\struct{p}{K},i\vDash\varphi_1$ and $\struct{p}{K},i\vDash\varphi_2$  \\
      $\struct{p}{K},i\vDash \neg\varphi_1$            &~~iff~~& $\struct{p}{K},i \nvDash\varphi_1$                                     \\
      $\struct{p}{K},i\vDash \F\varphi_1$              &~~iff~~& there is a $j\geq i$ such that $\struct{p}{K},j \vDash\varphi_1$       \\
      $\struct{p}{K},i\vDash \G\varphi_1$              &~~iff~~& for all $j\geq i$, $\struct{p}{K},j \vDash\varphi_1$                   \\
      $\struct{p}{K},i\vDash \X\varphi_1$              &~~iff~~& $\struct{p}{K},i+1 \vDash\varphi_1$                                    \\
      $\struct{p}{K},i\vDash \varphi_1\U\varphi_2$     &~~iff~~& there is an $\ell\geq i$ such that $\struct{p}{K},\ell\vDash\varphi_2$,\\[-3pt]
                                                       &       & and for every $i\leq j<\ell$,~ $\struct{p}{K},j \vDash\varphi_1$       \\
      $\struct{p}{K},i\vDash \varphi_1\S\varphi_2$     &~~iff~~& there is an $\ell\leq i$ such that $\struct{p}{K},\ell\vDash\varphi_2$,\\[-3pt]
                                                       &       & and for every $\ell<j\leq i$,~ $\struct{p}{K},j \vDash\varphi_1$
    \end{tabular}
    \renewcommand{\arraystretch}{1.5}
  \end{center}

  Since
  every Boolean operator can be composed from $\wedge$ and $\neg$, the above definition generalises to temporal
  $B$-formulae for arbitrary sets $B$ of Boolean operators.

  This paper examines the model-checking problems \mc{T,B} for finite sets $B$ of Boolean functions
and sets $T$ of temporal operators.

  \dproblem{\mc{T,B}}{$\mcinst{\varphi,K,a}$, where $\varphi \in \L{T,B}$ is a formula, $K = (W, R, \eta)$ is a Kripke structure,
      and $a \in W$ is a state}{Is there a path $p$ in $K$ such that $p_0 = a$ and $\struct{p}{K},0 \vDash \varphi$?}

  Sistla and Clarke \cite{sicl85} have established the computational complexity of the model-checking problem
  for temporal $\{\wedge,\vee,\neg\}$-formulae over some sets of temporal operators.

  \begin{theorem}[\cite{sicl85}]\label{theorem:sicl85}
    ~\par
    \begin{Enum}
      \item\label{part:MC(F;andorneg) NP-c}
            $\mc{\{\F\},\{\wedge,\vee,\neg\}}$ is \NP-complete.
      \item\label{part:MC(F,X|U|U,S,X;andorneg) PSPACE-c}
            $\mc{\{\F,\X\},\{\wedge,\vee,\neg\}}$,
            $\mc{\{\U\},\{\wedge,\vee,\neg\}}$, and
            $\mc{\{\U,\S,\X\},\{\wedge,\vee,\neg\}}$ are \linebreak
            \PSPACE-complete.
    \end{Enum}
  \end{theorem}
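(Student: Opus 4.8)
\medskip\noindent
The plan is to prove, for each of the two assertions, a matching upper and a matching lower bound.

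\emph{Assertion~\ref{part:MC(F;andorneg) NP-c}.} Membership in \NP\ rests on a small-witness argument that exploits that \F\ is the only temporal operator present: the truth of an \F-subformula $\F\psi$ at a position merely records \emph{whether} $\psi$ holds at some later position, never \emph{where}. Hence, along any path $p$, the set of subformulae of $\varphi$ that hold at some position $\geq i$ is non-increasing in $i$ and therefore changes at most $|\varphi|$ times. A fixpoint/induction argument over the subformula structure then shows: if some path from $a$ satisfies $\varphi$, then so does an ultimately periodic path $uv^\omega$ whose length is polynomial in $|\varphi|$ and $|W|$ (within each of the $\leq|\varphi|$ ``phases'' on which the set above is constant, at most $|W|$ states are needed to reach the next witnessing state or to close a loop). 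An \NP-procedure guesses $u$ and $v$ and verifies in polynomial time that $uv^\omega$ is a path of $K$ starting in $a$ that satisfies $\varphi$ at position $0$. For \NP-hardness I would reduce from $3$-SAT: given a $3$-CNF formula with variables $x_1,\dots,x_n$ and clauses $C_1,\dots,C_m$, build the ``diamond chain'' Kripke structure with a start state $s$, two states $t_i,f_i$ for each $i$ (visiting $t_i$ encodes $x_i\mapsto 1$, visiting $f_i$ encodes $x_i\mapsto 0$), a self-looping final state, and edges forcing every path from $s$ to pass, in order, through exactly one of $t_i,f_i$ for each $i$; label $t_i$ with $\{c_j:x_i\in C_j\}$ and $f_i$ with $\{c_j:\neg x_i\in C_j\}$, and put $\varphi:=\F c_1\wedge\dots\wedge\F c_m$. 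Then $(\varphi,K,s)$ is a positive instance iff the assignment read off a satisfying path satisfies every clause, that is, iff the $3$-CNF formula is satisfiable; the map is logspace-computable. Since $\varphi$ uses only \F\ and $\wedge$, this also yields the \NP-hardness of $\mc{\{\F\},\{\wedge\}}$ announced in the introduction.

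\emph{Upper bound for Assertion~\ref{part:MC(F,X|U|U,S,X;andorneg) PSPACE-c}.} One \PSPACE\ argument covers all three problems: on a fixed Kripke structure every $\{\F,\X\}$-formula becomes an equivalent $\{\U,\X\}$-formula by replacing each $\F\psi$ with $(x\vee\neg x)\,\U\,\psi$, and $\{\U,\X\}$- as well as $\{\U\}$-formulae are special cases of $\{\U,\S,\X\}$-formulae, so it suffices to place $\mc{\{\U,\S,\X\},\{\wedge,\vee,\neg\}}$ in \PSPACE. I would use the automata-theoretic method: build a B\"uchi automaton of size $2^{O(|\varphi|)}$ over the alphabet $2^{\VAR(\varphi)}$ accepting exactly the label sequences of paths satisfying $\varphi$ (the past operator \S\ is handled by keeping, in the automaton state, the truth values of past subformulae, which are updated deterministically while reading the input), take its product with $K$, and test non-emptiness; as the exponential-size product is generated on the fly, non-emptiness is decidable in nondeterministic polynomial space, hence in \PSPACE\ (Savitch's theorem). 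A direct bounded-model argument works as well: a satisfying path may be assumed to be a lasso $uv^\omega$ of length at most $|W|\cdot 2^{O(|\varphi|)}$, which a nondeterministic polynomial-space procedure guesses position by position, storing only the current state of $K$, the current ``type'' (a locally consistent set of relevant subformulae), the type and state at the loop start, and the set of as-yet-unfulfilled until/eventuality obligations.

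\emph{Lower bound for Assertion~\ref{part:MC(F,X|U|U,S,X;andorneg) PSPACE-c}.} I would use the classical ``computation-as-a-path'' reduction from the acceptance problem of a Turing machine $M$ that runs in space $p(|w|)$ on input $w$. A computation is encoded by an infinite path whose states are the symbols of $\Gamma\cup(Q\times\Gamma)$, together with a position counter modulo $p(|w|)$ built into the states (so the Kripke structure can be taken to be this product graph with all transitions that increment the counter); the path spells out the successive configurations $C_0C_1C_2\cdots$ as blocks of length $p(|w|)$, with the accepting configuration repeated forever. For the fragment $\{\F,\X\}$ the formula $\varphi$ is the conjunction of: (i) a length-$p(|w|)$ \X-prefix asserting that $C_0$ is the initial configuration; (ii) a $\G$-guarded local-consistency condition (with $\G$ written as $\neg\F\neg$) asserting that every window of consecutive cells of one configuration, together with the corresponding cell $p(|w|)$ positions later, obeys $M$'s transition rule---expressed with forward \X-operators up to depth $p(|w|)+O(1)$, so that $\varphi$ has polynomial size; and (iii) $\F$ of ``an accepting-state symbol occurs''. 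A path from the start state satisfies $\varphi$ iff $M$ accepts $w$. Since $\mc{\{\F,\X\},B}\redlogm\mc{\{\U,\S,\X\},B}$ by the rewriting above, this also settles the problem for $\{\U,\S,\X\}$; only $\mc{\{\U\},\{\wedge,\vee,\neg\}}$ still needs separate attention, and this is the step I expect to be the main obstacle: lacking \X, one cannot address a configuration cell by counting steps, so the same idea must be re-implemented using auxiliary ``delimiter'' propositions that mark cell and configuration boundaries, with the block navigation in (i) and (ii) performed by nested \U-formulae that scan forward to the next delimiter of the required kind.
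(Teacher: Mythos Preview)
The paper does not prove this theorem at all: it is stated with the attribution \texttt{[\textbackslash cite\{sicl85\}]} and immediately followed by the next paragraph, so there is no proof in the paper to compare against. The result is simply quoted from Sistla and Clarke.

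That said, your sketch is a faithful outline of the original Sistla--Clarke arguments. The \NP\ upper bound via the monotonically shrinking set of ``eventualities'' and a polynomial lasso, the \threesat\ reduction via the diamond-chain structure (which the present paper also alludes to when discussing Corollary~\ref{cor:MC(F;E) NP-h}), the \PSPACE\ upper bound via an on-the-fly tableau/automaton, and the \PSPACE\ lower bound by encoding a polynomial-space Turing-machine computation with $\X^{p(|w|)}$-windows are all the standard ingredients. Your remark that the $\{\U\}$-only lower bound is the delicate step is accurate: in \cite{sicl85} this is handled exactly as you anticipate, by introducing marker propositions and replacing the counting done by $\X$ with nested $\U$-formulae that hop from one marker to the next.

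One minor point: in your \PSPACE\ upper bound you appeal to Savitch's theorem after describing a \emph{nondeterministic} polynomial-space procedure; that is fine, but note that the on-the-fly emptiness test for a B\"uchi automaton of size $2^{O(|\varphi|)}\cdot|W|$ is already in $\mathrm{NLOGSPACE}$ in the size of the product, hence directly in \PSPACE\ in the input size, so Savitch is not strictly needed there.
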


  Since there are infinitely many finite sets of Boolean functions,
  we introduce some algebraic tools to classify the complexity of the infinitely many arising satisfiability problems.
  We denote with \proj{n}{k} the $n$-ary projection to the $k$-th variable, where $1\leq k\leq n$, \ie, $\proj{n}{k}(x_1,\dots,x_n)=x_k$,
  and with \const{n}{a} the $n$-ary constant function defined by $\const{n}{a}(x_1,\dots,x_n)=a$.
  For $\const{1}{1}(x)$ and $\const{1}{0}(x)$ we simply write 1 and 0.
  A set $C$ of Boolean functions is called a \emph{clone} if it is closed under superposition,
  which means $C$ contains all projections and $C$ is closed under arbitrary composition \cite{pip97b}.
  For a set $B$ of Boolean functions we denote with $\clone{B}$ the smallest clone containing $B$
  and call $B$ a \emph{base} for $\clone{B}$.
  In \cite{pos41} Post classified the lattice of all clones
  and found a finite base for each clone.

  The definitions of all clones as well as the full inclusion graph can be found, for example, in~\cite{bcrv03}.
\ifextended
  The following lemma
\else
  The following lemma, which we prove in \cite{bms+07}, 
\fi
  implies that only clones with both constants $0,1$ are relevant for the model-checking problem;
  hence we will only define those clones. Note, however, that our results will carry over to all clones.

\begin{lemma}\label{lemma:const}
Let $B$ be a finite set of Boolean functions and $T$ be a set of temporal operators. 
Then $\dt{T,B\cup\{0,1\}}\equiv^{\log}_m\dt{T,B}$.
\end{lemma}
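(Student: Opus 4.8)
The plan is to prove the two directions separately, and the easy one is genuinely easy: every temporal $B$-formula over $T$ is also a temporal $(B\cup\{0,1\})$-formula over $T$, and the Kripke structure and the distinguished state are left unchanged, so the identity map is a (trivially logspace) reduction witnessing $\mc{T,B}\redlogm\mc{T,B\cup\{0,1\}}$. For the nontrivial direction $\mc{T,B\cup\{0,1\}}\redlogm\mc{T,B}$ the idea is to trade the two $0$-ary operators $1$ and $0$ for two fresh propositional variables: one variable, $t$, that we force to hold in \emph{every} state of the structure, and one variable, $f$, that is automatically false everywhere because it occurs in no label.

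Concretely, given an instance $\mcinst{\varphi,K,a}$ with $\varphi\in\L{T,B\cup\{0,1\}}$ and $K=(W,R,\eta)$, I would pick two variables $t,f$ that occur neither in $\varphi$ nor in any label $\eta(w)$ (for instance, variables whose encoding is longer than the whole input string; this can be done in logspace). Let $\varphi'$ be obtained from $\varphi$ by replacing every occurrence of the constant subformula $1$ by $t$ and every occurrence of the constant subformula $0$ by $f$; since no $0$-ary Boolean operator remains, $\varphi'\in\L{T,B}$, and the rewriting is a purely local substitution at the leaves, hence logspace computable. Let $K'=(W,R,\eta')$ with $\eta'(w)=\eta(w)\cup\{t\}$ for every $w\in W$; the relation $R$ is untouched, so $K'$ is still a total Kripke structure and the paths of $K'$ are exactly the paths of $K$. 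The reduction outputs $\mcinst{\varphi',K',a}$, which is clearly producible from the input in logarithmic space.

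For correctness I would show, by induction on the structure of a subformula $\psi$ of $\varphi$ (with $\psi'$ its image in $\varphi'$), that for every path $p$ of $K$ and every $i\ge 0$ we have $\struct{p}{K},i\vDash\psi$ if and only if $\struct{p}{K'},i\vDash\psi'$. The cases $\psi=1$ and $\psi=0$ are immediate from $t\in\eta'(p_i)$ and $f\notin\eta'(p_i)$ for all $i$; if $\psi$ is an ordinary variable then $\psi=\psi'$ and $\psi\notin\{t,f\}$, so $\eta$ and $\eta'$ agree on it; the Boolean cases follow from the induction hypothesis together with the fact that the semantics of an arbitrary Boolean operator is obtained by composing $\wedge$ and $\neg$ (as noted after the semantics definition); and each temporal clause for $\X,\F,\G,\U,\S$ transfers directly from the induction hypothesis, since the quantifications over path positions are identical in $K$ and $K'$. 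Instantiating this equivalence at $i=0$ with $\psi=\varphi$ and quantifying over paths with $p_0=a$ gives $\mcinst{\varphi,K,a}\in\mc{T,B\cup\{0,1\}}$ iff $\mcinst{\varphi',K',a}\in\mc{T,B}$, which completes the harder direction.

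I do not expect a real obstacle here; the lemma is routine. The only points that deserve a moment's care are (i) choosing $t$ and $f$ genuinely fresh while staying within the logspace bound, and (ii) checking that $f$ really is false at every position that can ever matter — which holds because $f$ appears in no label and every state visited by a path, including the past positions inspected by the $\S$ clause, lies in $W$. Everything else is a standard structural induction.
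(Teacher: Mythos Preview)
Your proposal is correct and follows essentially the same approach as the paper: introduce two fresh variables for the constants, add the ``true'' variable to every state's label, and replace $1$ and $0$ in the formula accordingly. You supply more detail (freshness in logspace, the structural induction for correctness) than the paper's terse proof, but the underlying construction is identical.
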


\ifextended
  \begin{proof}
  $\dt{T,B}\redlogm\dt{T,B\cup\{0,1\}}$ is trivial. For $\dt{T,B\cup\{0,1\}}\redlogm\dt{T,B}$ let $\langle \varphi,K,a\rangle$
  be an instance of \dt{T,B\cup\{0,1\}} for a Kripke structure $K=(W,R,\eta)$ and let $\perp$ and $\top$ be two fresh variables.
  We define a new Kripke structure $K'=(W,R,\eta')$ 
  where $\eta'(\alpha)=\eta(\alpha)\cup\{\top\}$ and we define $\varphi'$ to be a copy of $\varphi$ where
  every appearance of 0 is replaced by $\perp$ and every appearance of 1 by $\top$. It holds that $\langle \varphi',K',a\rangle$ is an instance of
  \dt{T,B} and that $\langle \varphi,K,a\rangle\in\dt{T,B\cup\{0,1\}}$ if and only if $\langle \varphi',K',a\rangle\in\dt{T,B}$.
  \end{proof}
\fi

  Because of Lemma \ref{lemma:const} it is sufficient to look only at the clones with constants,
  which are introduced in Definition~\ref{def:clones}. Their bases and inclusion structure
  are given in Figure~\ref{figure:clones with constants}.

  \begin{figure}
    \renewcommand{\arraystretch}{1.1}
    \centering
    \parbox{.3\textwidth}{%
      \begin{tabular}{l|l}
        clone & base \\
        \hline\stabb
        \cBF  & $\{\wedge,\neg\}$        \\
        \cM   & $\{\vee,\wedge,0,1\}$    \\
        \cL   & $\{\oplus,1\}$           \\
        \cV   & $\{\vee,1,0\}$           \\
        \cE   & $\{\wedge,1,0\}$         \\
        \cN   & $\{\neg,1,0\}$           \\
        \cI   & $\{0,1\}$
      \end{tabular}
    }
    \hspace*{1cm}
    \parbox{.4\textwidth}{%
      \includegraphics{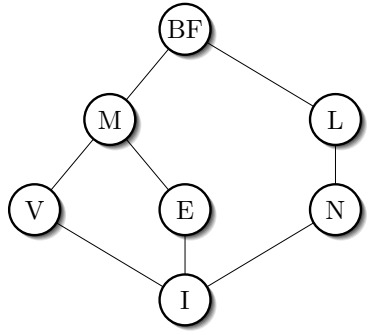}
    }
    \renewcommand{\arraystretch}{1.5}
    \caption{Clones with constants}
    \label{figure:clones with constants}
  \end{figure}

  \begin{definition}
    \label{def:clones}
    Let $\oplus$ denote the binary exclusive or. Let $f$ be an $n$-ary Boolean function.
    \begin{Enum}
      \item
        $\cBF$ is the set of all Boolean functions.
      \item
        $\cM$ is the set of all monotone functions, that is,
        the set of all functions $f$ where $a_1\leq b_1$, $\dots$, $a_n\leq b_n$ implies $f(a_1,\dots,a_n)\leq f(b_1,\dots,b_n)$.
      \item
        $\cL$ is the set of all linear functions, that is,
        the set of all functions $f$ that satisfy $f(x_1,\dots,x_n) = c_0 \oplus (c_1 \wedge x_1) \oplus \dots \oplus (c_n \wedge x_n)$,
        for constants $c_i$.
      \item
        $\cV$ is the set of all functions $f$ where $f(x_1,\dots,x_n) = c_0 \vee (c_1\wedge x_1) \vee \dots \vee (c_n\wedge x_n)$,
        for constants $c_i$.
      \item
        $\cE$ is the set of all functions $f$ where $f(x_1,\dots,x_n) = c_0 \wedge(c_1\vee x_1)\wedge\dots\wedge(c_n\vee x_n)$,
        for constants $c_i$.
      \item
        $\cN$ is the set of all functions that depend on at most one variable.
      \item
        $\cI$ is the set of all projections and constants.
    \end{Enum}
  \end{definition}

  There is a strong connection between propositional formulae and Post's lattice.
  If we interpret propositional formulae as Boolean functions,
  it is obvious that $[B]$ includes exactly those functions that can be represented by $B$-formulae.
  This connection has been used various times to classify the complexity of problems related to propositional formulae.
  For example, Lewis presented a dichotomy for the satisfiability problem for propositional $B$-formulae:
  it is \NP-complete if $x\wedge\overline{y}\in\clone{B}$, 
  and solvable in \PTIME\ otherwise \cite{lew79}. Furthermore,
  Post's lattice has been applied to the equivalence problem \cite{rei01},
  to counting \cite{rewa99-dt} and finding minimal \cite{revo03} solutions,
  and to learnability \cite{dal00} for Boolean formulae.
  The technique has been used in non-classical logic as well:
  Bauland et al. achieved a trichotomy in the context of modal logic,
  which says that the satisfiability problem for modal formulae is, depending on the allowed propositional connectives,
  \PSPACE-complete, \CONP-complete, or solvable in \PTIME\ \cite{bhss06}.
  For the inference problem for propositional circumscription, Nordh presented another trichotomy theorem \cite{nor05}.

  An important tool in restricting the length of the resulting formula in many of our reductions is the
\ifextended
  following lemma.
\else
  following lemma, which we prove in \cite{bms+07}.
\fi

\begin{lemma}\label{lemma:subred}
Let $B\subseteq\{\wedge,\vee,\neg\}$, and
let $C$ be a finite set of Boolean functions such that $B\subseteq[C]$.
Then $\dt{T,B}\redlogm\dt{T,C}$ for every set $T$ of temporal operators.
\end{lemma}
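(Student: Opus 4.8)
The plan is to leave the Kripke structure and the start state untouched and only rewrite the formula. Given an instance $\langle\varphi,K,a\rangle$ of $\dt{T,B}$, I would produce $\langle\varphi',K,a\rangle$ where $\varphi'$ is a temporal $C$-formula defining, on every pointed path, the same function as $\varphi$; then one is a yes-instance iff the other is. Since logspace reductions compose and Lemma~\ref{lemma:const} gives $\dt{T,C\cup\{0,1\}}\redlogm\dt{T,C}$, it suffices to build, in logarithmic space, such a $\varphi'$ over $C\cup\{0,1\}$, i.e.\ I may freely use the constants $0,1$ in $\varphi'$.

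The only real difficulty is the \emph{length} of $\varphi'$: replacing each occurrence of an operator $f\in B$ by an arbitrary equivalent $C$-formula duplicates the argument subformulae and in general blows the formula up exponentially. The key point is that, because $B\subseteq\{\wedge,\vee,\neg\}$, every $f\in B$ is a \emph{read-once} function (arity at most $2$, a single gate), and I claim it therefore has a read-once representation over $C\cup\{0,1\}$: a $(C\cup\{0,1\})$-formula $\widehat f(x_1,\dots,x_n)\equiv f$ in which each $x_i$ occurs at most once. Granting this, define $\mathrm{tr}$ recursively by $\mathrm{tr}(x)=x$, $\mathrm{tr}(g(\psi_1,\dots,\psi_m))=g(\mathrm{tr}(\psi_1),\dots,\mathrm{tr}(\psi_m))$ for temporal $g\in T$, and $\mathrm{tr}(f(\psi_1,\dots,\psi_n))=\widehat f(\mathrm{tr}(\psi_1),\dots,\mathrm{tr}(\psi_n))$ for $f\in B$, and set $\varphi'=\mathrm{tr}(\varphi)$. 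Since each $\mathrm{tr}(\psi_i)$ occurs at most once in $\widehat f(\mathrm{tr}(\psi_1),\dots,\mathrm{tr}(\psi_n))$, sizes add rather than multiply, so $|\varphi'|=O(|\varphi|)$; and $\mathrm{tr}$ is a local tree transformation, hence computable in logarithmic space. Correctness is immediate: the semantics of the temporal operators $g\in T$ depends only on the truth values of their arguments along the path, and each $\widehat f$ computes the same Boolean function as $f$, so $\struct{p}{K},i\vDash\varphi$ if and only if $\struct{p}{K},i\vDash\varphi'$ for every path $p$ and index $i$ (using, as noted after the definition of $\vDash$, that the semantics extends to $B$-formulae by composition).

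The main obstacle, then, is the existence of the read-once gadgets $\widehat f$, and this is where I would put the work. By Lemma~\ref{lemma:const} only clones $[C]$ containing $\{0,1\}$ matter, so the relevant targets are few: $\neg\in B$ forces $[C]\in\{\cN,\cL,\cBF\}$, $\wedge\in B$ forces $[C]\in\{\cE,\cM,\cBF\}$, and $\vee\in B$ forces $[C]\in\{\cV,\cM,\cBF\}$. For $\neg$: since $\neg\in[C]$, not every function in the finite base $C$ is monotone, so some $g\in C$ has inputs $a\le b$ with $g(a)=1>0=g(b)$; walking from $a$ to $b$ one coordinate at a time isolates a coordinate $i$ such that fixing the other arguments to the appropriate constants turns $g$ into $\neg x_i$ — a read-once formula for $\neg$. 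For $\wedge$ and $\vee$: one shows that some $g\in C$ has a two-dimensional subcube (fix all but two arguments to constants) on which $g$ behaves like a conjunction or disjunction of two literals; when negation is available — which it is precisely when $[C]=\cBF$ — combining this subcube with the read-once $\neg$ just built yields read-once $\wedge$ and $\vee$. When $[C]\in\{\cE,\cV,\cM\}$ no negation exists, but then $B$ contains no $\neg$, and a short monotone-function argument (a monotone function not in $\cE$ has a $\vee$-subcube; a monotone function not in $\cV$ has a $\wedge$-subcube) shows that a base of the clone must already exhibit the subcubes needed for exactly those operators occurring in $B$.

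In short, once the read-once representations are in hand the reduction is a one-line syntactic substitution plus Lemma~\ref{lemma:const}; I expect essentially all of the effort to go into the case analysis establishing those representations, in particular the subcube arguments for $\wedge$ and $\vee$ over the clones $\cE,\cV,\cM,\cBF$.
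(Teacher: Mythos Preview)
Your approach is essentially identical to the paper's: reduce first to $C\cup\{0,1\}$ via read-once representations of $\wedge,\vee,\neg$ (so that the syntactic substitution causes only linear growth), then invoke Lemma~\ref{lemma:const} to drop the constants. The only difference is that the paper does not carry out your subcube case analysis but simply cites the existence of such read-once $(C\cup\{0,1\})$-formulae for $\wedge,\vee,\neg$ from Lemmas~1.4.4 and~1.4.5 of~\cite{sch07}; your sketch of that argument is sound and in fact more self-contained than what the paper provides.
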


\ifextended
 \begin{proof}
  Let $D = C \cup \{0,1\}$.
  From Lemmas 1.4.4 and 1.4.5 in~\cite{sch07} we directly conclude:
  Let $f$ be one of the functions \OR, \AND, and \NOT\ such that $f\in\clone D$. Let $k$ be the arity of $f$. 
  Then there is a $D$-formula $\varphi(x_1,\dots,x_k)$ representing $f$, such that every variable occurs only once in $\varphi$. 
  Hence $\dt{T,B}\redlogm\dt{T,C\cup\{0,1\}}$.
  From Lemma~\ref{lemma:const} follows $\dt{T,C\cup\{0,1\}}\redlogm\dt{T,C}$.
 \end{proof}
\fi

It is essential for this Lemma that $B\subseteq\{\wedge,\vee,\neg\}$.
For, \eg, $B=\{\oplus\}$,
it is open whether $\dt{T,B}\redlogm\dt{T,\cBF}$.
This is a reason why we cannot immediately transform
upper bounds proven by Sistla and Clarke~\cite{sicl85}---%
for example, $\dt{\{\F,\X\},\{\wedge,\vee,\neg\}}\in\PSPACE$---%
to upper bounds for all finite sets of Boolean functions---%
\ie, it is open whether for all finite sets $B$ of Boolean functions,
$\dt{\{\F,\X\},B}\in\PSPACE$.

\newcommand{\cA}{\mathcal{A}}

\section{The bad fragments: intractability results}
\label{sec:NP}

Sistla and Clarke \cite{sicl85} and Markey \cite{mar04} have considered
the complexity of model-checking for temporal $\{\wedge,\vee,\neg\}$-formulae
restricted to atomic negation and propositional negation, respectively.
\ifextended
  We define a temporal $B$-formula with \textit{propositional negation} to be a temporal $B$-formula
  where additional negations are allowed, but only in such a way
  that no temporal operator appears in the scope of a negation sign.
  In the case that negation is an element of $B$,
  a temporal $B$-formula with propositional negation is simply a temporal $B$-formula.
  In \cite{sicl85}, \textit{atomic negation} is considered,
  which restricts the use of negation even further---negation is only allowed directly for variables.
  We will now show that propositional negation does not make any difference for the complexity of the model checking problem.
  Since this obviously implies that atomic negation inherits the same complexity behaviour,
  we will only speak about propositional negation in the following.
  The proof of the following lemma is similar to that of Lemma~\ref{lemma:const}.

  \begin{lemma}\label{lemma:propositional negation}
  Let $T$ be a set of temporal operators, and $B$ a finite set of Boolean functions.
  We use $\dtp{T,B}$ to denote the model-checking problem 
  $\dt{T,B}$ extended to $B$-formulae with propositional negation.
  Then $\dtp{T,B}\equiv^{\log}_m\dt{T,B}$.
  \end{lemma}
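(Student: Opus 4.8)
The plan is to mimic the proof of Lemma~\ref{lemma:const}: absorb the awkward constructs — here the propositional negations — into the labelling of fresh variables of the Kripke structure, and substitute them away in the formula. The reduction $\dt{T,B}\redlogm\dtp{T,B}$ is trivial, since every temporal $B$-formula is in particular a $B$-formula with propositional negation; so all the work is in the other direction.

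Let $\mcinst{\varphi,K,a}$ with $K=(W,R,\eta)$ be an instance of $\dtp{T,B}$. I would call a subformula of $\varphi$ \emph{propositional} if it contains no temporal operator; the point of restricting to \emph{propositional} negation is precisely that then every occurrence of $\neg$ in $\varphi$ sits inside some propositional subformula. Let $\chi_1,\dots,\chi_k$ be the \emph{maximal} propositional subformulae of $\varphi$. Since any two subformulae are nested or disjoint and the $\chi_i$ are maximal, the $\chi_i$ are pairwise disjoint, and every variable, every constant, and — crucially — every negation occurrence of $\varphi$ lies in exactly one $\chi_i$. I then introduce fresh variables $z_1,\dots,z_k$, let $\varphi'$ be $\varphi$ with each $\chi_i$ replaced by $z_i$ (so $\varphi'$ contains no negation and no constant, hence $\varphi'\in\L{T,B}$), and let $K'=(W,R,\eta')$ live on the same frame, with
\[
  \eta'(w)=\eta(w)\ \cup\ \{\,z_i : 1\le i\le k\ \mbox{and}\ \chi_i\ \mbox{holds at}\ w\,\},
\]
where ``$\chi_i$ holds at $w$'' means that the (temporal-operator-free, hence purely Boolean) formula $\chi_i$ evaluates to true under the assignment sending each variable $v$ to $1$ iff $v\in\eta(w)$. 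The output of the reduction is $\mcinst{\varphi',K',a}$.

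Two things then need checking. \emph{Logspace computability}: producing $\varphi'$ needs only scanning together with parenthesis matching to locate the $\chi_i$; producing $\eta'$ amounts, for each pair $(w,i)$, to evaluating the Boolean formula $\chi_i$ on the assignment given by $\eta(w)$, which is the Boolean formula value problem over the fixed finite base $B\cup\{\neg\}$ and hence logspace computable, and looping over all $(w,i)$ stays in \LS. \emph{Correctness}: by structural induction on $\varphi$ — treating the maximal propositional subformulae $\chi_i$ as atoms — one shows that for every path $p$ (a path of $K$ is a path of $K'$, as the frame is unchanged) and every $i\ge 0$, $\struct{p}{K},i\vDash\varphi$ iff $\struct{p}{K'},i\vDash\varphi'$. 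The base case is a $\chi_i$: being temporal-operator-free, its truth at position $i$ depends only on $p_i$ and equals the Boolean value of $\chi_i$ under $\eta(p_i)$, which by construction is exactly ``$z_i\in\eta'(p_i)$'', i.e.\ $\struct{p}{K'},i\vDash z_i$. The induction step for an operator from $B$ or from $T$ is immediate, since the substitution commutes with it and the hypothesis applies to the structurally smaller immediate subformulae at all positions. Specialising to $i=0$ and quantifying over paths with $p_0=a$ yields $\mcinst{\varphi,K,a}\in\dtp{T,B}$ iff $\mcinst{\varphi',K',a}\in\dt{T,B}$.

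I expect the only delicate point to be book-keeping rather than mathematics: pinning down ``maximal propositional subformula'' so that it provably swallows \emph{every} negation of $\varphi$ — this is exactly the place where the propositional-negation hypothesis is used, and the construction would fail without it — and observing that defining the labels of the $z_i$ by Boolean evaluation of the $\chi_i$ at each state is genuinely a logspace computation, which is covered by the standard fact that the Boolean formula value problem over a finite base is in \LS. Note that a naive ``dual-rail'' variant (add a fresh $\overline{v}$ for each variable $v$ and push negations down to the variables) does \emph{not} work for arbitrary $B$, since negations cannot in general be pushed past $B$-operators; this is why the label of each $z_i$ must encode the value of a whole maximal propositional block.
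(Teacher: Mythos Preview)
Your proof is correct and follows essentially the same idea as the paper's: absorb the offending propositional material into the labelling of fresh variables in the Kripke structure and substitute those variables into the formula. The only difference is granularity---the paper introduces a fresh variable $y_{\neg\psi}$ for each negated subformula $\neg\psi$ and replaces just those, whereas you replace each \emph{maximal propositional block} $\chi_i$ wholesale; your variant is slightly coarser but equally valid, and you are more explicit than the paper about why computing $\eta'$ (Boolean formula evaluation at each state) is in \LS.
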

\begin{proof}
The reduction $\dt{T,B}\redlogm\dtp{T,B}$ is trivial. For $\dtp{T,B}\redlogm\dt{T,B}$, assume that negation is not an element of $B$, otherwise there is nothing to prove. Let $\mcinst{\varphi, K, a}$ be an instance of \dtp{T,B}, where $K=(W,R,\eta)$.
Let $x_1,\ldots,x_m$ be the variables that appear in $\varphi$, and for each formula of the kind $\neg\psi(x_1,\dots,x_n)$ appearing in $\varphi,$ let $y_{\neg\psi}$ be a new variable. Note that since only propositional negation is allowed in $\varphi,$ in these cases $\psi$ is purely propositional.

We obtain $K'=(W,R,\eta')$ from $K$ by extending $\eta$
to the variables $y_{\neg\psi}$ in such a way that $y_{\neg\psi}$ is \textit{true} in a state if and only if $\psi(x_1,\dots,x_n)$ is \textit{false}.
Finally,
to obtain  $\varphi'$ from $\varphi$ 
we replace every appearance of $\neg\psi(x_1,\dots,x_n)$ with $y_{\neg\psi}.$
Now, $\varphi'$ is a temporal $B$-formula.
By the construction 
it is straightforward to see that
$\langle \varphi, K, a \rangle \in \dtp{T,B}$ iff
$\langle \varphi', K', a \rangle \in \dt{T,B}$.
\end{proof}

\else
  Propositional negation does not affect
  the complexity of the model checking problem.
  This can be proven similarly to Lemma~\ref{lemma:const}.
\fi
Using Lemma~\ref{lemma:subred} in addition,
we can generalise the above mentioned hardness results from \cite{sicl85,mar04}
for temporal monotone formulae
to obtain the following intractability results for model-checking.
\ifextended
\else
  The proofs of all results in this section are given in \cite{bms+07}.
\fi

\begin{theorem}\label{theorem:propositional negation}
   Let $M_{+}$ be a finite set of Boolean functions 
   such that $\cM\subseteq\clone{M_{+}}$.
   Then 
    \begin{Enum}
      \item\label{part:MC(F,G,X;M) PSPACE-h}
            \dt{\{\F,\G,\X\},M_{+}} is \PSPACE-hard.
      \item\label{part:MC(F|G|X;M) NP-h}
            $\mc{\{\F\},M_{+}}$, $\mc{\{\G\},M_{+}}$, 
            and $\mc{\{\X\},M_{+}}$ are \NP-hard.
      \item\label{part:MC(U|G,X;M) PSPACE-h}
            $\mc{\{\U\},M_{+}}$ and $\mc{\{\G,\X\},M_{+}}$ are \PSPACE-hard.
      \item\label{part:MC(S,G|S,F;M) PSPACE-h}
            $\mc{\{\S,\G\},M_{+}}$ and $\mc{\{\S,\F\},M_{+}}$ are \PSPACE-hard.
    \end{Enum}
\end{theorem}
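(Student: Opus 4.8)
The plan is to reduce from the known hardness results for temporal $\{\wedge,\vee,\neg\}$-formulae (Theorem~\ref{theorem:sicl85}, together with Markey's work~\cite{mar04}) via Lemmas~\ref{lemma:const}, \ref{lemma:subred}, and \ref{lemma:propositional negation}. The key observation is that the clone $\cM$ is generated by $\{\wedge,\vee,0,1\}$, so $\{\wedge,\vee\}\subseteq\clone{\cM}\subseteq\clone{M_+}$; moreover, by Lemma~\ref{lemma:const} we may freely add the constants $0,1$ to $M_+$ without changing the complexity up to logspace equivalence. Thus for each part it suffices to exhibit a hardness result for $\dt{T,\{\wedge,\vee,\neg\}}$ \emph{with propositional negation only} — i.e.\ for $\dtp{T,\{\wedge,\vee\}}$ in the notation of Lemma~\ref{lemma:propositional negation} — and then invoke Lemma~\ref{lemma:propositional negation} (negation is eliminable) followed by Lemma~\ref{lemma:subred} (since $\{\wedge,\vee\}\subseteq[M_+\cup\{0,1\}]$) to transfer the bound to $\dt{T,M_+}$.

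So the real content is: Sistla--Clarke and Markey already prove these bounds for the full Boolean basis $\{\wedge,\vee,\neg\}$; we need the versions with negation pushed down to atoms (equivalently, propositional negation). For part~\ref{part:MC(F,G,X;M) PSPACE-h}, $\dt{\{\F,\G,\X\},\{\wedge,\vee,\neg\}}$ is $\PSPACE$-hard by Theorem~\ref{theorem:sicl85}\ref{part:MC(F,X|U|U,S,X;andorneg) PSPACE-c} (it contains the $\PSPACE$-complete fragment with $\F,\X$); one checks that the reduction used there, or Markey's tableau/QBF-style reduction, can be arranged to use only atomic negation — standard in these constructions because the formula encoding a computation is built positively over the transition relation. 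For part~\ref{part:MC(F|G|X;M) NP-h}, $\NP$-hardness of $\dt{\{\F\},\{\wedge,\vee,\neg\}}$ comes from Theorem~\ref{theorem:sicl85}\ref{part:MC(F;andorneg) NP-c}; the symmetric cases $\{\G\}$ and $\{\X\}$ follow by the duality $\F\leftrightarrow\G$ under global negation (which at the Kripke-structure level is complementation of the labelling, hence fits atomic negation) and by a direct encoding for $\{\X\}$ where each time step selects one disjunct. Parts~\ref{part:MC(U|G,X;M) PSPACE-h} and~\ref{part:MC(S,G|S,F;M) PSPACE-h} are handled the same way, drawing the $\{\wedge,\vee,\neg\}$-level $\PSPACE$-hardness of $\dt{\{\U\},\cdot}$, $\dt{\{\U,\S,\X\},\cdot}$ from Theorem~\ref{theorem:sicl85}\ref{part:MC(F,X|U|U,S,X;andorneg) PSPACE-c} and the corresponding past-operator results from Markey~\cite{mar04}, again noting the reductions can be made negation-atomic.

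The main obstacle is the bookkeeping for part~\ref{part:MC(F|G|X;M) NP-h}: the three fragments $\{\F\}$, $\{\G\}$, $\{\X\}$ each need a separate argument, and the $\{\G\}$-case in particular requires care because naively dualising $\F$ to $\G$ turns an existential path-choice problem into a universal one — but the model-checking problem as defined here already quantifies existentially over paths, so one must instead set up the reduction from $3$-SAT directly, using $\G$ over a ``gadget'' Kripke structure where the single chosen path must thread every clause. Likewise, making all reductions use atomic negation rather than arbitrary negation is the only genuinely non-bureaucratic point; once that is verified, Lemmas~\ref{lemma:const}, \ref{lemma:subred}, and \ref{lemma:propositional negation} do the rest automatically and uniformly across all four parts and all admissible $M_+$.
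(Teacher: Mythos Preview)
Your approach is correct and essentially the same as the paper's: invoke the hardness results of Sistla--Clarke~\cite{sicl85} and Markey~\cite{mar04}, which are already established for formulae with only atomic or propositional negation, then apply Lemma~\ref{lemma:propositional negation} to drop the negation and Lemma~\ref{lemma:subred} to pass from $\{\wedge,\vee\}$ to an arbitrary base $M_{+}$ with $\cM\subseteq\clone{M_{+}}$. The paper simply takes the atomic-negation versions of all the base results (including the $\{\G\}$ and $\{\X\}$ cases) as established in~\cite{sicl85,mar04} rather than re-deriving them, so your extra bookkeeping for part~\ref{part:MC(F|G|X;M) NP-h} is more caution than the paper exercises, but not a different method.
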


In Theorem 3.5 in \cite{sicl85} it is shown that  \dt{\{\F\},\set{\wedge,\vee,\neg}} is \NP-hard.
In fact, Sistla and Clarke give a reduction from \threesat\ to   \dt{\{\F\},\set{\wedge}}.
The result for arbitrary bases $B$ generating a clone above $\cE$ 
follows from Lemma~\ref{lemma:subred}.

\begin{corollary}\label{cor:MC(F;E) NP-h}
Let $E_{+}$ be a finite set of Boolean functions such that $\cE\subseteq\clone{E_{+}}$.
Then \dt{\{\F\},E_{+}} is \NP-hard.
\end{corollary}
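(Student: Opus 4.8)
The plan is to derive Corollary~\ref{cor:MC(F;E) NP-h} directly from the stated facts immediately preceding it together with Lemma~\ref{lemma:subred}. The key observation is that Sistla and Clarke's \NP-hardness proof for $\dt{\{\F\},\{\wedge,\vee,\neg\}}$ is in fact a reduction from \threesat\ that produces a formula using only the single connective $\wedge$; hence $\dt{\{\F\},\{\wedge\}}$ is already \NP-hard. So the only work left is to lift this from the specific base $\{\wedge\}$ to an arbitrary finite base $E_{+}$ with $\cE\subseteq\clone{E_{+}}$.

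First I would recall that $\wedge\in\cE$ (indeed $\cE=\clone{\{\wedge,0,1\}}$, and $\wedge$ alone generates the clone $\cE_2$ which sits below $\cE$). From the hypothesis $\cE\subseteq\clone{E_{+}}$ we therefore get $\wedge\in\clone{E_{+}}$, i.e. $\{\wedge\}\subseteq\clone{E_{+}}$. Now set $B=\{\wedge\}$ and $C=E_{+}$: since $B\subseteq\{\wedge,\vee,\neg\}$ and $B\subseteq\clone{C}$, Lemma~\ref{lemma:subred} applies and yields $\dt{\{\F\},\{\wedge\}}\redlogm\dt{\{\F\},E_{+}}$ (taking $T=\{\F\}$). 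Chaining this logspace reduction after the \threesat-reduction of Sistla and Clarke gives \NP-hardness of $\dt{\{\F\},E_{+}}$, which is the claim.

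There is essentially no obstacle here beyond making the first step precise: one must confirm that the reduction in Theorem~3.5 of \cite{sicl85} really only uses $\wedge$ (no $\vee$, no $\neg$), which is exactly what the sentence ``Sistla and Clarke give a reduction from \threesat\ to $\dt{\{\F\},\{\wedge\}}$'' already asserts. If one wanted to be safe and avoid relying on that precise reading, an alternative is to observe $\{\wedge,\vee,\neg\}\subseteq\cBF$ but that would only give hardness for bases generating $\cBF$, not for $\cE_{+}$; so the sharper reading of the Sistla--Clarke construction is genuinely needed, and it is the only non-mechanical point. Everything else is the routine composition of logspace reductions and the membership $\wedge\in\clone{E_{+}}$ coming straight from Post's lattice and the hypothesis.
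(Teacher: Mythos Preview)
Your proposal is correct and is essentially identical to the paper's own argument: the paper also notes that Sistla and Clarke's reduction from \threesat\ in fact lands in $\dt{\{\F\},\{\wedge\}}$, and then invokes Lemma~\ref{lemma:subred} to lift this to any base $E_{+}$ with $\cE\subseteq\clone{E_{+}}$.
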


The model-checking problem 
for temporal $\{\G,\X\}$-$\{\wedge,\vee\}$-formulae
is \PSPACE-complete 
(Theorem~\ref{theorem:propositional negation}\ref{part:MC(U|G,X;M) PSPACE-h} due to \cite{mar04}).
The Boolean operators $\{\wedge,\vee\}$ are a basis of $\cM$,
the class of monotone Boolean formulae.
What happens for fragments of \cM ?
In Theorem~\ref{theorem:MC(F,X;V) MC(G,X;E) NL-c} we will show
that \dt{\{\G,\X\},\cE} is \NL-complete,
\ie, the model-checking problem for
temporal $\{\wedge\}$-formulae over $\{\G,\X\}$ is very simple.
We can prove that switching from $\wedge$ to $\vee$ makes the problem intractable.
As notation, we use $\LIT(\varphi)$ to denote the literals obtained from variables that appear in $\varphi$.

\begin{theorem}
\label{theorem:MC(G,X;V) NP-h}
Let $V_{+}$ be a finite set of Boolean functions such that $\cV\subseteq\clone{V_{+}}$.
Then $\dt{\{\G,\X\},V_{+}}$ is \NP-hard.
\end{theorem}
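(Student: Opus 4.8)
The plan is to reduce from a known \NP-hard problem---\threesat\ is the natural choice, mirroring the Sistla--Clarke reduction to \dt{\{\F\},\{\wedge\}} that underlies Corollary~\ref{cor:MC(F;E) NP-h}. By Lemma~\ref{lemma:subred} and Lemma~\ref{lemma:const} it suffices to prove \NP-hardness of \dt{\{\G,\X\},\{\vee\}}, since $\{\vee\}$ is a base of $\cV$ modulo constants and any $V_{+}$ with $\cV\subseteq[V_{+}]$ lies above it; so throughout we may use $\vee$, the constants $0,1$, and the operators $\G,\X$ freely. The key observation making $\vee$ (rather than $\wedge$) the ``hard'' connective here is its interaction with $\G$: a formula $\G(\ell_1\vee\ell_2\vee\ell_3)$ asserts that \emph{at every future position} the disjunction of three literals holds. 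This is exactly the shape of a clause, but evaluated along an entire path rather than at a point---and the nondeterminism of the model-checking problem (choosing a path) is what will encode the choice of a satisfying assignment.

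First I would build the Kripke structure $K$ so that the chosen path $p$ starting at $a$ first traverses a ``gadget'' segment of length $n$ (one state per propositional variable $z_1,\dots,z_n$ of the \threesat\ instance) in which, at the $i$-th step, the path branches between a state labelled with $z_i$ and a state labelled with $\overline{z_i}$ (here I use fresh propositional variables $z_i$ and $\overline{z_i}$ in \VAR, not actual negations---this is legitimate, just as in the proof of Lemma~\ref{lemma:propositional negation}). After these $n$ steps the path is forced into a fixed state, say a self-loop on a distinguished state $d$ carrying no relevant labels, so that nothing nontrivially holds from position $n$ onward. Next I would encode each clause $C_j = (\ell_{j,1}\vee\ell_{j,2}\vee\ell_{j,3})$: the obstacle is that $\G$ quantifies over \emph{all} future positions, whereas we want clause $j$ to be ``checked'' only at the single position where the assignment to its relevant variable is recorded. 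The standard trick is to time-stamp positions with an extra propositional variable---add a marker $c$ true exactly on the first $n$ states of the gadget and false on $d$, and have each literal-state additionally carry a copy of the literal valid ``from now on.'' Concretely, I would make the label sets \emph{cumulative}: once $z_i$ (resp.\ $\overline{z_i}$) has been chosen at step $i$, that symbol stays in the label of every subsequent gadget state and of $d$. Then $z_i$ holds at position $n$ iff $z_i$ was chosen, so each clause can be expressed as $\F$-free: clause $C_j$ becomes $\X^{n}(\ell_{j,1}\vee\ell_{j,2}\vee\ell_{j,3})$, evaluated at the final position, and the whole formula is $\varphi = \bigwedge_j \X^{n}(\ell_{j,1}\vee\ell_{j,2}\vee\ell_{j,3})$.

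But $\varphi$ as written uses $\wedge$, which is not available. This is where $\G$ earns its keep, and it is the step I expect to be the real crux. I would instead not put all clauses at position $n$ but \emph{spread them out}: append a tail of $m$ further states $t_1,\dots,t_m$ after the gadget (before $d$), and arrange that at state $t_j$ the only literals that hold are those literals $\ell$ with $\ell\in C_j$ \emph{whose underlying variable was chosen consistently}---i.e., $t_j$ is labelled with $\ell_{j,k}$ iff the corresponding $z$-symbol is currently ``on.'' With cumulative labelling this is realisable: $t_j$'s label is determined by the gadget choices that precede it. Now $\G\bigl(\text{``I am a clause-state''}\to(\ell_{j,1}\vee\ell_{j,2}\vee\ell_{j,3})\bigr)$ would need implication; lacking that, I would instead make the \emph{only} propositional symbols appearing anywhere on the tail and on $d$ be the clause literals themselves, padded with the constant so that off the tail the disjunction is trivially satisfied by a dedicated always-true symbol $\tau$ that is in every non-tail label but in no tail label, giving $\varphi = \G(\tau \vee \ell^{\text{cur}}_{1}\vee\ell^{\text{cur}}_{2}\vee\ell^{\text{cur}}_{3}\vee\cdots)$---here the bookkeeping of which literal-slot is ``active'' at $t_j$ is pushed entirely into $\eta$. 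The clean way to make this precise is: one disjunct per clause, and at tail position $t_j$ all disjuncts except the $j$-th are made true by $\eta$ (they carry a sentinel symbol), while the $j$-th disjunct is $\ell_{j,1}\vee\ell_{j,2}\vee\ell_{j,3}$ and these literal-symbols are present at $t_j$ exactly when chosen. Then $\struct{p}{K},0\vDash\G(\bigvee_{\text{disjuncts}})$ iff at every tail position the live clause is satisfied by the chosen assignment iff the assignment read off from $p$ satisfies all clauses.

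For correctness I would argue both directions: given a satisfying assignment, route the path through the matching literal-states and verify $\G(\cdots)$ holds at positions $0,\dots,n-1$ (sentinel $\tau$), at $t_1,\dots,t_m$ (the live clause is satisfied), and at $d$ and beyond (sentinel again); conversely, any path satisfying the formula makes a consistent sequence of $z_i/\overline{z_i}$ choices (the structure forces exactly one per step) inducing an assignment that, by the $\G$-condition read at each tail state, satisfies every clause. The construction is clearly logspace-computable: $K$ has $O(n+m)$ states and the formula has size $O(m)$. Finiteness of the classification is not an issue---Lemma~\ref{lemma:subred} handles all bases above $\cV$ uniformly. The main obstacle, as flagged, is engineering the label function so that the single disjunction under one $\G$ simultaneously (i) is vacuous everywhere except the clause tail, and (ii) at tail position $t_j$ reduces precisely to clause $j$ under the path-chosen assignment; once the cumulative-labelling-plus-sentinel idea is in place, the rest is routine verification.
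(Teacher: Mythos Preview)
There is a genuine gap, and it occurs exactly at the step you flag as the crux.

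First, ``cumulative labelling'' cannot be realised in a structure with $O(n+m)$ states. The labelling function $\eta$ is fixed in advance; a state's label cannot depend on which path reached it. If $t_j$ is a single state, then $\eta(t_j)$ is the same on every path, so whether ``$\ell_{j,k}$ is present at $t_j$'' cannot record a path-dependent choice. Making the label depend on the accumulated $z_i/\overline{z_i}$ choices would require a separate state for each prefix of choices, i.e.\ $2^n$ states, destroying the logspace bound you claim.

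Second, even granting that issue, your single-$\G$ scheme with sentinels fails for an elementary reason. You propose $\G(D_1\vee\cdots\vee D_m)$ where, at tail position $t_j$, every $D_k$ with $k\neq j$ is forced true by a sentinel and $D_j$ reduces to the $j$-th clause. But then at $t_j$ the disjunction $D_1\vee\cdots\vee D_m$ is already true by any $D_k$, $k\neq j$, regardless of whether $D_j$ holds. A flat disjunction under a single $\G$ cannot isolate different positions to impose different constraints; the sentinels from the other clauses always cover for the one you want to test.

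What the paper does instead is precisely to simulate conjunction by \emph{nesting} $\G$: it builds formulae $\psi'_i=\G(\varphi_i\vee\G\psi'_{i+1})$, where $\varphi_i$ uses $\X$-offsets so that among the initial clause-states $q_1,\dots,q_m$ only $q_i$ can satisfy $\varphi_i$. Since the inner $\G\psi'_{i+1}$ fails at $q_i$ (the variable $c$ anchoring the recursion base is absent on the $q$-segment), the outer $\G$ at $q_i$ forces $\varphi_i$ there \emph{and} $\psi'_{i+1}$ from $q_{i+1}$ onward---so one peels off one clause per nesting level. The careful $\X^{k\cdot m-(i-1)}$ arithmetic (with each literal state replicated $m$ times) is what makes $\varphi_i$ detectable only at $q_i$. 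That nesting trick is the missing idea in your proposal.
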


\ifextended
  \begin{proof}
  It suffices to give a reduction from \threesat\  to $\dt{\{\G,\X\},\{\vee\}}$
  (due to Lemma~\ref{lemma:subred}).
  A formula $\psi$ in 3CNF is mapped to an instance $\mcinst{\psi',K(\psi),q_1}$ 
  of \dt{\{\G,\X\},\{\vee\}} as follows.
  Let $\psi=C_1\wedge\ldots\wedge C_m$ consist of $m$ clauses,
  and $n=|\VAR(\psi)|$ variables.
  The Kripke structure $K(\psi)$
  has states $Q=\{q_1,\ldots,q_m\}$ containing
  one state for every clause,
  a sequence of states 
  $P=\{l^j \mid l\in \LIT(\psi), 0\leq j\leq m-1\}$ for every literal,
  and a final sink state $z$.
  That is, the set of states is $Q\cup P \cup \{z\}$.
  The variables of $K(\psi)$ are $b_1,\ldots,b_m,c$.
  Variable $b_a$ is assigned \emph{true} in a state $l_i^0$ iff
  literal $l_i$ is contained in clause $C_a$.
  In all other states, every $b_i$ is \textit{false}.
  Variable $c$ is assigned \textit{true} in all states in $P \cup \{z\}$.

  The relation between the states is $E_1\cup E_2\cup E_3\cup E_4\cup E_5$ as follows.
  It starts with the path $q_1,\ldots,q_m$:
  $E_1=\{(q_i,q_{i+1})\mid i=1,2,\ldots,m-1\}$.
  $q_m$ has an edge to $x^0_1$ and an edge to $\overline{x}^0_1$:
  $E_2 = \{(q_m,x^0_1),(q_m,\overline{x}^0_1)\}$.
  Each $l_i^0$ is the starting point of a path
  $l_i^0,l_i^1,\ldots,l_i^{m-1}$:
  $E_3=\{(l_i^j,l_i^{j+1})\mid l_i\in\LIT(\psi), j=0,1,\ldots,m-2\}$.
  Each endpoint of these paths has
  both the literals with the next index resp. the final sink state as neighbours:
  $E_4=\{(l_i^{m-1},l_{i+1}^0) \mid i=1,2,\ldots,n-1, l_i\in\LIT(\psi)\}
  \cup \{ (x_n^{m-1},z),(\overline{x}_n^{m-1},z)\}$.
  The final sink state $z$ has an edge to $z$ itself, $E_5=\{(z,z)\}$.

  \begin{figure}[ht]
  {

  \centering

  \includegraphics[scale=0.8]{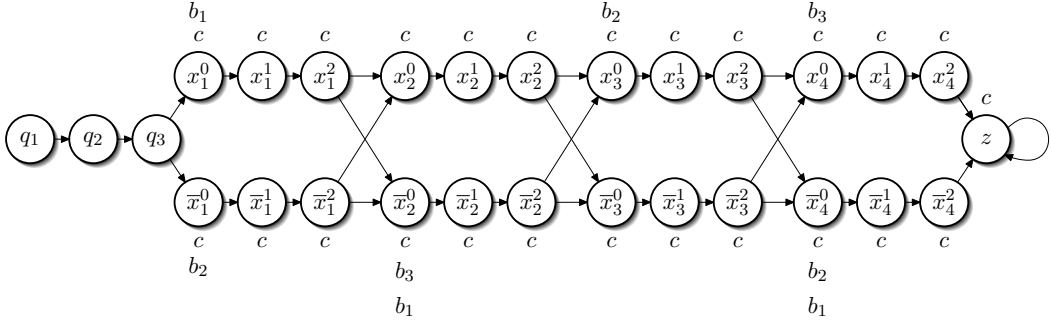}

  }

  \caption{The Kripke structure $K(\psi_0)$ for
  $\psi_0=(x_1\vee \neg x_2 \vee \neg x_4) \wedge (\neg x_1\vee x_3 \vee \neg x_4) \wedge (\neg x_2 \vee x_4)$.
  }
  \label{figure:structure for GXV}
  \end{figure}

  Figure~\ref{figure:structure for GXV} shows an example for a formula $\psi_0$ 
  and the Kripke structure $K(\psi_0)$.
  Notice that every path in such a Kripke structure $K(\psi)$ corresponds to an assignment
  to the variables in $\psi$.
  A path corresponds to a satisfying assignment iff
  for every $b_i$ the path contains a state that $b_i$ is assigned to.
  We are now going to construct a formula $\psi'$ to express this property.
  If we were allowed to use the $\wedge$ in $\psi'$, this would be easy.
  But, the formula $\psi'$ consists only of operators \G, \X, $\vee$,
  and of variables $b_1,\ldots,b_m,c$.
  In order to define $\psi'$, 
  we use formulae $\varphi_i$ and $\psi'_i$ defined as follows.
  For $i=1,2,\ldots,m$ define 
  $$\varphi_i= \bigvee\limits_{k=1,2,\ldots,n} \X^{k\cdot m-(i-1)} b_i ~ .$$
  Intuitively, $\varphi_i$ says that 
  $b_i$ is satisfied in a state in distance $d$,
  where $d \equiv m-(i-1) \pmod m$.
  The state $q_j$ is the only state in $Q$
  where $\varphi_j$ can hold.
  Every path $p$ in $K(\psi)$ has the form
  $p=(q_1,q_2,\ldots,q_m,l_1^0,\ldots,l_n^{m-1}, z, z,\ldots)$.
  Every state except for $z$ appears at most once in $p$.
  For the sake of simplicity,
  we use the notation $\struct{p}{K(\psi)},q_i \vDash \alpha$ 
  for $\struct{p}{K(\psi)},i-1\vDash\alpha$ (for $i=1,2,\ldots,m$),
  and $\struct{p}{K(\psi)},l_i^j \vDash \alpha$ 
  for $\struct{p}{K(\psi)},m+(i-1)\cdot m+j\vDash\alpha$.
  We use for a path $p=(q_1,q_2,\ldots)$ in $K(\psi)$ and $1\leq i\leq m$
  the notation $\struct{p}{K(\psi)},q_i \vDash \alpha$ for $\struct{p}{K(\psi)},i-1\vDash\alpha$.

  \begin{Claim}\label{claim:np1}
  For every path $p$ in $K(\psi)$ and $1\leq i,j\leq m$ holds:\
  If $\struct{p}{K(\psi)}, q_i \vDash \varphi_j$, then $i=j$.
  \end{Claim}
  
  \begin{Proofofclaim}[\ref{claim:np1}]
  Assume $\struct{p}{K(\psi)}, q_i \vDash \varphi_j$, where $1\leq i,j\leq m$.
  By the definition of $\varphi_j$,
  it follows that $\struct{p}{K(\psi)}, (j-1)+k\cdot m-(i-1) \vDash b_j$ for some $k$ with $1\leq k\leq n$.
  Consider any path $p$ in $K(\psi)$.
  After the initial part $(p_0,\ldots,p_{m-1})=(q_1,\ldots,q_m)$ of $p$ follows a sequence
  $(p_{m},\ldots,p_{m\cdot(n+1)})$
  of $n\cdot m$ states,
  where $p_i=l_{\lfloor i/m\rfloor}^{i \bmod m}$ (for $i=m,\ldots,m\cdot(n+1)$).
  Therefore, $p_{(j-1)+k\cdot m-(i-1)}=l_r^{(j+k\cdot m-i) \bmod m} = l_r^{(j-i) \bmod m}$
  for some $r$ (that does not matter here).
  But $\struct{p}{K(\psi)}, l_r^w \vDash b_j$ implies $w=0$, by the definition of $K(\psi)$,
  and therefore $(j-i) \bmod m = 0$.
  Since $1\leq i,j\leq m$, it follows that $j=i$.
  \end{Proofofclaim}
  
  The formulae $\psi'_i$ are defined inductively for $i=m+1,\ldots,2,1$ as follows (as before, we can use $\vee$ in our construction):
  $$\psi'_{m+1} = c \mbox{~ ~ ~ and ~ ~ ~}
     \psi'_{i} = \G\left(\varphi_i \vee \G \psi'_{i+1}\right) ~
  (\mbox{for } m\geq i \geq 1) ~ .
  $$
  Finally, $\psi'=\psi'_1$.

  It is clear that the reduction function $\psi \mapsto \mcinst{\psi',K(\psi),q_1}$ 
  can be computed in logarithmic space.
  It remains to prove the correctness of the reduction.
  Using Claim~\ref{claim:np1}, we make the following observation.

  \begin{Claim}\label{claim:np2}
  For every path $p=(q_1,q_2,\ldots)$ in $K(\psi)$
  and $i=1,2,\ldots, m$ holds:~ ~ 

  \centerline{$\struct{p}{K(\psi)},q_i \vDash \psi'_i$
  ~ if and only if ~ 
  for $j=i,i+1,\ldots,m$ holds $\struct{p}{K(\psi)},q_j \vDash \varphi_j$.}
  \end{Claim}

  \begin{Proofofclaim}[\ref{claim:np2}]
  The direction from right to left is straightforward.
  To prove the other direction, we use induction.

  As base case we consider $i=m$.
  Assume $\struct{p}{K(\psi)},q_m\vDash\G(\varphi_m \vee \G c)$.
  By construction of $K(\psi)$ holds $\struct{p}{K(\psi)},q_m\nvDash c$,
  and therefore $\struct{p}{K(\psi)},q_m\vDash \varphi_m$ holds.

  For the inductive step,
  assume $\struct{p}{K(\psi)},q_i \vDash \G(\varphi_i\vee\G\psi'_{i+1})$.
  Claim~\ref{claim:np1} proves $\struct{p}{K(\psi)},q_i \nvDash \varphi_j$ for $j\not=i$,
  and with $\struct{p}{K(\psi)},q_i\nvDash c$ we obtain $\struct{p}{K(\psi)},q_i \nvDash \G\psi'_{i+1}$.
  This implies $\struct{p}{K(\psi)},q_i \vDash \varphi_i$ and  $\struct{p}{K(\psi)},q_{i+1} \vDash \psi'_{i+1}$.
  By the inductive hypothesis, the claim follows.
  \end{Proofofclaim}
  
  For a path $p$ in $K(\psi)$, let $\cA_p$ be the corresponding assignment for $\psi$.
  It is clear that $\struct{p}{K(\psi)},q_i\vDash\varphi_i$ if and only if $\cA_p$ satisfies clause $C_i$ of $\psi$.
  Using Claim~\ref{claim:np2},
  it follows that  $\struct{p}{K(\psi)},q_1\vDash\psi'$ if and only if $\cA_p$ satisfies all clauses of $\psi$,
  \ie, $\cA_p$ satisfies $\psi$.
  Using the one-to-one correspondence between paths in $K(\psi)$ and
  assignments to the variables of $\psi$ we get
  $\psi\in \threesat$ if and only if $\mcinst{\psi', K(\psi), q_1}\in\dt{\{\G,\X\}, \{\vee\}}$.
  \end{proof}
\fi

From \cite{sicl85} it follows that $\dt{\{\G,\X\},\cV}$ is in \PSPACE.
It remains open whether $\dt{\{\G,\X\},\cV}$ or $\dt{\{\G,\X\},\cM}$
have an upper bound below \PSPACE.

Next,
we consider formulae with the until-operator or the since-operator.
We first show that using the until-operator makes model-checking intractable.

\begin{theorem}\label{theorem:MC(U;.) NP-h}
Let $B$ be a finite set of Boolean functions.
Then \dt{\{\U\},B} is \NP-hard.
\end{theorem}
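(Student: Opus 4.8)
The plan is to reduce from \threesat, after two simplifications that isolate the hard core. First, since $\emptyset\subseteq\{\wedge,\vee,\neg\}$ and trivially $\emptyset\subseteq[C]$ for every finite $C$, Lemma~\ref{lemma:subred} yields $\dt{\{\U\},\emptyset}\redlogm\dt{\{\U\},B}$ for every finite set $B$ of Boolean functions; hence it suffices to show that $\dt{\{\U\},\emptyset}$ is \NP-hard. Second, by Lemma~\ref{lemma:const} we have $\dt{\{\U\},\emptyset}\equiv^{\log}_m\dt{\{\U\},\{0,1\}}$, so when building temporal formulae I may freely use the constants $0$ and $1$; in particular $\F\varphi:=1\U\varphi$ is at my disposal, while no propositional operator is.

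Given a \threesat\ instance $\psi=C_1\wedge\dots\wedge C_m$ over variables $v_1,\dots,v_n$, I would build a Kripke structure $K(\psi)$ whose paths are forced along a fixed backbone with a few branch points, so that a path can ``guess'' a certificate: for each $v_i$ it branches into a state meaning ``$v_i$ true'' or one meaning ``$v_i$ false'', and for each clause $C_k$ it branches into one of three states, one per literal of $C_k$, meaning ``this literal is chosen to satisfy $C_k$''; a self\nobreakdash-looping sink $q_1$ closes the backbone, and $q_1$ is also its initial state. Dedicated variables label these states (recording, for a chosen literal, which variable it is and with which sign, and, for an assignment state, which variable and which truth value), and additional ``phase'' variables, true from fixed points of the backbone onward, delimit the clause\nobreakdash-choice part, the assignment part, and the sink. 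A backbone path $p$ thus induces an assignment $\cA_p$ together with one literal per clause, and $\psi$ is satisfiable iff some backbone path carries a \emph{consistent} certificate --- one in which each chosen occurrence of $v_i$ with positive (negative) sign is matched by the assignment state ``$v_i$ true'' (``$v_i$ false''); the point is that a clause is satisfied by its chosen literal exactly when consistency holds at the relevant variable. It then remains to produce $\varphi_\psi\in\L{\{\U\},\{0,1\}}$ such that, evaluated at $q_1$, $\varphi_\psi$ holds on precisely the backbone paths with a consistent certificate. The map $\psi\mapsto\mcinst{\varphi_\psi,K(\psi),q_1}$ is plainly logspace\nobreakdash-computable, and correctness follows from this characterisation together with ``consistent certificate iff $\cA_p\models\psi$''.

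The crux --- and the step I expect to be much the hardest --- is to express this global consistency condition using \U\ and nothing else. The lever is that $\alpha\U\beta$ (``$\alpha$ at every state up to the first $\beta$\nobreakdash-state'') already hides a bounded universal quantifier over a prefix of the path: with the phase variables, an inner formula of the shape $\alpha\U\pi$, where $\pi$ marks the next phase, acts as an invariance ``$\alpha$ throughout the current phase'', and a nesting of such untils can walk down the clause part of the backbone and verify the clauses one after another, in the spirit of the chain $\psi'_i=\G(\varphi_i\vee\G\psi'_{i+1})$ in the proof of Theorem~\ref{theorem:MC(G,X;V) NP-h}, but with $\G$ and $\vee$ replaced by suitable until\nobreakdash-idioms and with the branch\nobreakdash-point labels playing the role of the distance formulae $\varphi_i$ there. (A merely right\nobreakdash-nested chain $\chi_1\U(\chi_2\U(\cdots\U(\chi_N\U\mathsf{end})))$ would not suffice: such formulae only test membership of the path in a fixed pattern $\chi_1^\ast\cdots\chi_N^\ast\mathsf{end}$, which keeps model\nobreakdash-checking in \PTIME; so genuine nesting of \U\ is essential.) Pinning down the labelling of $K(\psi)$ and the exact shape of $\varphi_\psi$ so that satisfaction matches ``consistent certificate'' on the nose, in both directions, is where all the work lies; once that is done, correctness is a routine induction on the nesting depth of $\varphi_\psi$ along an arbitrary backbone path, structured like the inductive arguments in the proof of Theorem~\ref{theorem:MC(G,X;V) NP-h}.
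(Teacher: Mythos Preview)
Your framing is right --- reduce from \threesat, encode a certificate as a path through a branching structure, verify it with a left-nested \U-formula --- and your instinct that right-nested chains are too weak is correct. But the architecture you sketch introduces an obstacle the paper avoids: in your design the path first picks, for each clause, one of its literals, and only later picks truth values; the formula must then certify that the literal chosen at clause $k$ --- a \emph{path}-dependent datum --- is honoured by the later assignment. With no Boolean connectives the formula cannot case-split on which literal was picked, and any labelling that tries to fold this into a single \U-test seems to require a disjunction over the three candidates or a labelling that depends on the path rather than on the state; neither is available. So the step you flag as ``where all the work lies'' is, with your structure, not obviously doable at all, and if you try to patch it by pushing clause membership into labels further down the backbone, the clause-choice layer becomes redundant and you are back at the paper's construction.

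The paper drops the clause-choice layer entirely. Its $K(\psi)$ is a chain $q_1,\ldots,q_m$ (one state per clause), then $n$ binary choices picking one literal per variable, then a sink; the state for literal $\ell$ carries $b_j$ for each clause $C_j$ containing $\ell$, state $q_i$ carries step counters $a_1,\ldots,a_i$, and every literal state carries all the $a_j$. Then $a_j\U b_j$, read at $q_j$, holds on a path iff the encoded assignment satisfies $C_j$ --- no literal bookkeeping needed --- and since $a_j$ fails at $q_1,\ldots,q_{j-1}$ while no $b_j$ holds on the $q$-chain, $a_j\U b_j$ holds somewhere on the path iff it holds at $q_j$. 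This localisation (the step counters $a_j$ play the role you intended for your branch-point labels) is what makes the left-nested chain $\varphi_0=1$, $\varphi_{i+1}=\varphi_i\U(a_{i+1}\U b_{i+1})$ work: a short induction shows $\struct{p}{K(\psi)},q_1\vDash\varphi_m$ iff each $a_j\U b_j$ holds at its $q_j$, i.e., iff $\cA_p$ satisfies $\psi$.
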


\ifextended
  \begin{proof}
  We give a reduction from \threesat\ to $\dt{\{\U\},\emptyset}$.
  This means, that we do not need any Boolean operators in the temporal formula over $\{\U\}$
  to which a \threesat\ instance is mapped.
  Let $\psi= C_1 \wedge C_2 \wedge \ldots \wedge C_m$ 
  be a 3CNF formula consisting of $m$ clauses and $n$ variables.
  The structure $K(\psi)$ has states 
  $\{q_1,\ldots,q_m\} \cup \LIT(\psi) \cup\{s\}$,
  with initial state $q_1$.
  The assignment for state $q_i$ is $\{a_1,\ldots,a_i\}$ (for $i=1,2,\ldots,m$),
  and for state $l_i$ it is $\{a_1,\ldots,a_m\}\cup\{b_j\mid 
  \mbox{Literal $l_i\in\LIT(\psi)$ appears in clause $C_j$}\}$.
  In state $s$, no variable is assigned true.
  The relation between the states is as follows.
  Each $q_i$ ($i=1,2,\ldots,m-1$) has an edge to $q_{i+1}$,
  $q_m$ has edges to $x_1$ and to $\overline{x}_1$,
  each $l_i$ ($i=1,2,\ldots,n-1$) has edges to $x_{i+1}$ and to $\overline{x}_{i+1}$,
  and $x_n$ and $\overline{x}_n$ have an edge to $s$.
  $s$ has an edge to $s$ only.
  Figure~\ref{figure:structure for until}
  gives an example.
  \begin{figure}[h]
  {

  \centering

  \includegraphics[scale=0.8]{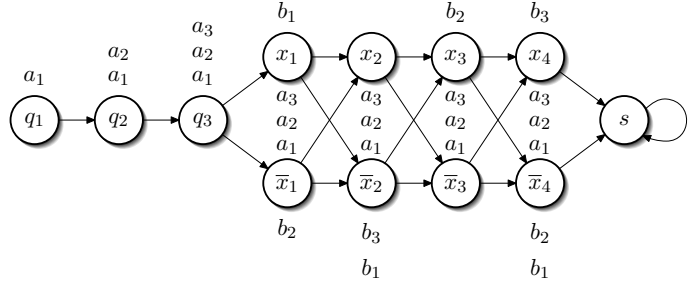}

  }

  \caption{Structure $K(\psi)$ for 
  $\psi=(x_1\vee \neg x_2 \vee \neg x_4) \wedge (\neg x_1\vee x_3 \vee \neg x_4) \wedge (\neg x_2 \vee x_4)$}
  \label{figure:structure for until}
  \end{figure}
  The following facts are easy to verify for any path $p$ in $K(\psi)$.
  For the sake of simplicity,
  we use for a path $p=(q_1,q_2,\ldots)$ in $K(\psi)$ and $1\leq i\leq m$
  the notation $\struct{p}{K(\psi)},q_i \vDash \alpha$ for $\struct{p}{K(\psi)},i-1\vDash\alpha$.
  \begin{description}
  \item[Fact 1] For $1\leq j < i \leq m$ holds: $\struct{p}{K(\psi)}, q_j \nvDash a_i \U b_i$ .
  \item[Fact 2] For $1\leq i \leq m$ holds: \ $\exists t: \struct{p}{K(\psi)},t\vDash a_i\U b_i$ iff $\struct{p}{K(\psi)}, q_i \vDash a_i \U b_i$ .
  \end{description}
  The formulae $\varphi_0, \varphi_1, \ldots$
  are defined inductively as follows.
  $$\varphi_0 ~ = ~ 1 \mbox{~ ~ ~ and ~ ~ ~} 
  \varphi_{i+1} ~ = ~ \varphi_i \U \left(a_{i+1} \U b_{i+1}\right).$$
  The reduction from $\threesat$ to \dt{\{\U\},\emptyset} is 
  the mapping $\psi \mapsto (\varphi_m,K(\psi),q_1)$,
  where $\psi$ is a 3CNF-formula with $m$ clauses.
  This reduction can evidently be performed in logarithmic space.
  To prove its correctness,
  we use the following claim.

  \begin{Claim}\label{claim:np3}
  Let $K(\psi)$ be constructed  from a formula $\psi$ with $m$ clauses,
  and let $p$ be a path in $K(\psi)$.
  For $j=1,2,\ldots,m$, it holds that  
  $\struct{p}{K(\psi)},q_1\vDash \varphi_j ~\mbox{if and only if}$
         $\struct{p}{K(\psi)},q_1\vDash \varphi_{j-1} \mbox{ and } \struct{p}{K(\psi)},q_j \vDash a_j \U b_j~ .
  $
  \end{Claim}
  
  \begin{Proofofclaim}[\ref{claim:np3}]
  We prove the claim by induction.
  The base case $j=1$ is straightforward:
  $\struct{p}{K(\psi)},q_1 \vDash \,1\, \U (a_1 \U b_1)$ is equivalent to $\exists t: \struct{p}{K(\psi)},t \vDash (a_1 \U b_1)$
  which by Fact 2 is equivalent to $\struct{p}{K(\psi)},q_1 \vDash a_1 \U b_1$.
  The inductive step is split into two cases.
  First, assume $\struct{p}{K(\psi)},q_1\vDash \varphi_{j+1}$.
  Since $\varphi_{j+1} = \varphi_j \U (a_{j+1} \U b_{j+1})$,
  it follows that $\exists t: \struct{p}{K(\psi)},t \vDash a_{j+1} \U b_{j+1}$.
  Using Fact 2, we conclude $\struct{p}{K(\psi)},q_{j+1} \vDash a_{j+1} \U b_{j+1}$.
  By Fact 1, $\struct{p}{K(\psi)},q_{1} \nvDash a_{j+1} \U b_{j+1}$.
  By the initial assumption, this leads to $\struct{p}{K(\psi)},q_1\vDash \varphi_{j}$.
  Second, assume 
  $\struct{p}{K(\psi)},q_1\vDash \varphi_{j}$ and $\struct{p}{K(\psi)},q_{j+1} \vDash a_{j+1} \U b_{j+1}$.
  Using the induction hypothesis,
  we obtain $\struct{p}{K(\psi)},q_i\vDash a_i \U b_i $ for $i=1,2,\ldots,j+1$.
  By the construction of $\varphi_{j+1}$ we immediately get
  $\struct{p}{K(\psi)},q_1 \vDash \varphi_{j+1}$.
  \end{Proofofclaim}

  We have a one-to-one correspondence
  between paths in $K(\psi)$ and assignments
  to variables of $\psi$.
  For a path $p$ we will denote the corresponding assignment by $\cA_p$.
  Using Claim~\ref{claim:np3}, it is easy to see that the following properties are equivalent.
  \begin{enumerate}
  \item 
   $\cA_p$ is a satisfying assignment for $\psi$.
  \item
  Path $p$ in $K(\psi)$ contains for every $i=1,2,\ldots,m$
  a state with assignment $b_i$.
  \item 
  $\struct{p}{K(\psi)},q_i\vDash a_i \U b_i$ for $i=1,2,\ldots,m$.
  \item
  $\struct{p}{K(\psi)},q_1\vDash \varphi_m$.
  \end{enumerate}
  This concludes the proof
  that $\psi\in\threesat$ if and only if $\mcinst{\varphi_m,K(\psi),q_1}\in \dt{\{\U\},\emptyset}$.
  \end{proof}
\fi


Although the until-operator and the since-operator appear to be similar,
model-checking for formulae that use the since-operator as only operator
is as simple as for formulae without temporal operators---see Theorem~\ref{theorem:MC(S;BF) in L}.
The reason is that the since-operator has no use
at the beginning of a path of states, where no past exists.
It needs other temporal 
operators that are able
to enforce to visit a state on a path that has a past.


\begin{theorem}\label{theorem:MC(S,G;.) NP-h}\label{theorem:MC(S,X;.) NP-h}
Let $B$ be a finite set of Boolean functions. 
Then \dt{\{\X,\S\},B} and \dt{\{\G,\S\},B} are \NP-hard.
\end{theorem}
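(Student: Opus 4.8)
The plan is to reduce \threesat\ (or more precisely to reuse the technique from Theorem~\ref{theorem:MC(U;.) NP-h}) to \dt{\{\X,\S\},\emptyset} and to \dt{\{\G,\S\},\emptyset}, so that by Lemma~\ref{lemma:subred} the result follows for every finite base $B$. The intuition expressed in the paragraph just before the statement is the key: the \S-operator is useless at the start of a path, but combined with an operator that forces the evaluation point to move forward along the path --- either \X\ or \G\ --- it becomes as powerful as \U. So first I would take the Kripke structure $K(\psi)$ essentially as in the proof of Theorem~\ref{theorem:MC(U;.) NP-h}: a chain $q_1,\dots,q_m$ of clause-states, followed by a branching tree of literal-states $x_1,\overline{x}_1,\dots,x_n,\overline{x}_n$ that encodes an assignment, ending in a sink $s$. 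Paths from $q_1$ are in one-to-one correspondence with truth assignments, and a path is satisfying iff for each clause index $i$ it visits a literal-state carrying the propositional variable $b_i$.

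For the \X\S-case, the trick is that $\varphi_1\U\varphi_2$ evaluated at position $k$ of a path $p$ is equivalent to ``there is $\ell\ge k$ with $\struct{p}{K},\ell\vDash\varphi_2$ and $\struct{p}{K},j\vDash\varphi_1$ for $k\le j<\ell$'', and this can be simulated going \emph{forward} and then looking \emph{back}: one can guess how far to step with \X, land at a state where $\varphi_2$ holds, and then use $\varphi_1\S(\text{something marking position }k)$. Concretely I would enrich $K(\psi)$ with fresh marker variables. The cleanest route: on each state $q_i$ put a variable that is true exactly on $q_1,\dots,q_i$ (as in the \U-proof), so that a back-looking \S\ can detect ``$a_i$ held continuously back to the start''. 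Then a formula of the shape $\X^t\bigl(b_i \wedge (a_i \S (\neg a_i \vee \text{start-marker}))\bigr)$ --- but since no Boolean operators are allowed, I must instead nest \X\ and \S\ so that the needed conjunctions and disjunctions are encoded structurally, exactly as $\varphi_{i+1}=\varphi_i\U(a_{i+1}\U b_{i+1})$ was built without connectives in the \U-proof. The main technical work is to find the right family of marker propositions and the right nesting of \X\ and \S\ so that the inductively defined formulae $\varphi_j$ capture ``clauses $1,\dots,j$ are all satisfied'' with no Boolean operator at all; I expect this to require a helper path or a few extra states so that a single \S\ can play the role of a conjunction over clause-indices.

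For the \G\S-case the situation is analogous but \G\ replaces the forward-stepping role of \X: $\G$ forces a property to hold at \emph{every} future position, and in particular at positions deep in the literal-part of $K(\psi)$, which do have a past, so there \S\ becomes meaningful. Here I would build $K(\psi)$ so that the clause-satisfaction information is detectable by a formula of the form $\G(\dots \S \dots)$ that, read at $q_1$, is forced to inspect every literal-state on the chosen path; the \S-operator then verifies, at each such state, that the corresponding clause marker has been ``carried forward'' since the clause-state $q_i$. Again the obstacle is doing the conjunction over the $m$ clauses without any Boolean connective, which I would handle by the same inductive $\varphi_1,\dots,\varphi_m$ scaffolding used for \U, substituting the \G/\S\ combination for \U\ at each level and adjusting $K(\psi)$ with auxiliary states and marker variables so each level's check is expressible. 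Correctness is then proved, as in Theorem~\ref{theorem:MC(U;.) NP-h}, via a short inductive claim showing $\struct{p}{K(\psi)},q_1\vDash\varphi_m$ iff the assignment $\cA_p$ satisfies all clauses of $\psi$, and the reduction is clearly computable in logarithmic space. The hardest part will be the connective-free encoding of the ``for all clauses'' quantification; everything else is a routine adaptation of the \U-construction together with an appeal to Lemma~\ref{lemma:subred} to lift from $\emptyset$ to arbitrary $B$.
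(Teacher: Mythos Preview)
Your proposal identifies the right reduction source (\threesat) and the right intuition (combine \S\ with a forward-moving operator), but it leaves the crucial step unresolved: you explicitly say ``the hardest part will be the connective-free encoding of the `for all clauses' quantification'' and then do not carry it out. That is not a detail to be filled in later; it is the entire content of the proof. In particular, your forward-structure plan for the \X\S-case runs into the problem that $\X^t$ fixes a single position, so without $\vee$ you cannot express ``for some $t$'', and your suggested shape $\X^t(b_i \wedge (a_i\S\cdots))$ already uses a conjunction you do not have.

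The paper sidesteps this difficulty by a trick you did not consider: it \emph{reverses} the Kripke structure. Starting from the $K(\psi)$ of Theorem~\ref{theorem:MC(U;.) NP-h}, it forms $H(\psi)$ by taking the inverse relation, deleting the loop at $s$, and adding a fresh sink $t$ with $(q_1,t),(t,t)$. A path from $s$ in $H(\psi)$ is then $s,l_n,\dots,l_1,q_m,\dots,q_1,t,t,\dots$, so at each $q_j$ the literal-states lie in the \emph{past} and $a_j\S b_j$ directly tests ``clause $C_j$ is satisfied by the chosen assignment''. The clauses are conjoined, without any Boolean connective, by the nested family $\varphi^m_{m+1}=d$ and $\varphi^m_i=(a_i\S b_i)\S\varphi^m_{i+1}$, where $d$ is a new marker true exactly on the literal-states and $s$. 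One then takes $\G(e\S\varphi^m_1)$ (with $e$ true only at $t$) for the \G\S-case, and $\X^{n+m+1}\varphi^m_1$ for the \X\S-case, both evaluated at $s$. The reversal is what makes \S\ behave like \U\ and makes the connective-free nesting go through; without it your forward-direction scaffolding has no obvious way to work.

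A minor point: once you reduce to $\dt{\{\X,\S\},\emptyset}$ (resp.\ $\dt{\{\G,\S\},\emptyset}$), hardness for arbitrary $B$ is immediate because every $\emptyset$-formula is already a $B$-formula; you do not need Lemma~\ref{lemma:subred}.
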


\ifextended
  \begin{proof}
  We give a reduction from \threesat\ to \dt{\{\G,\S\},\emptyset} that is similar to that
  in the proof of Theorem~\ref{theorem:MC(U;.) NP-h} for \dt{\{\U\},\emptyset}.
  Let $\psi$ be an instance of \threesat,
  and let $K(\psi)$ be the structure as in the proof of Theorem~\ref{theorem:MC(U;.) NP-h}.
  From $K(\psi)=(W,R,\eta)$ we obtain the structure $H(\varphi)=(W',R',\eta')$ as follows.
  First, we add a new state $t$, \ie, $W'=W\cup\{t\}$.
  Second, replace $R$ by its inverse $R^{-1}=\{(v,u)\mid (u,v)\in R\}$
  from which the loop at state $s$ is removed.
  The state $s$ has in-degree $0$ and will be seen as initial state of $H(\varphi)$.
  The new state $t$ will be used as sink state.
  Therefore, we add the arcs $(q_1,t)$ and $(t,t)$.
  This results in $R'=(R^{-1}-\{(s,s)\})\cup\{(q_1,t),(t,t)\}$.
  Finally, we add a new variable $e$ that is \textit{true} only in state $t$,
  and a variable $d$ that is \textit{true} in states $\LIT(\psi)\cup\{s\}$.
  For all other variables, $\eta'$ is the same as $\eta$.
  \begin{figure}[h]
  {

  \centering

  \includegraphics[scale=0.8]{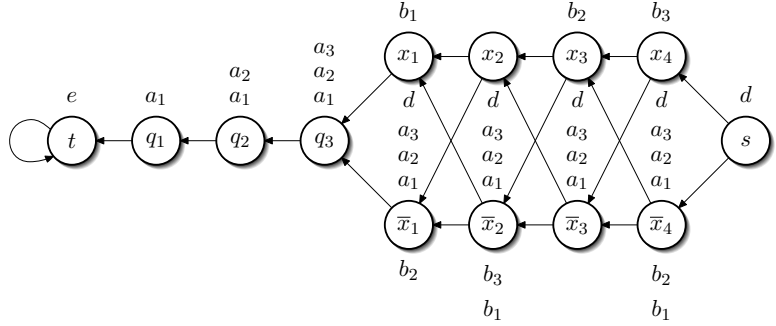}

  }

  \caption{Structure $H(\psi)$ for 
  $\psi=(x_1\vee \neg x_2 \vee \neg x_4) \wedge (\neg x_1\vee x_3 \vee \neg x_4) \wedge (\neg x_2 \vee x_4)$}
  \label{figure:structure for since}
  \end{figure}
  (Figure~\ref{figure:structure for since} shows an example.)

  The formulae $\varphi^m_1, \varphi^m_2, \ldots,\varphi^m_{m+1}$
  are defined inductively as follows.
  $$\varphi_{m+1}^m ~ = ~ d \mbox{~ ~ ~ and ~ ~ ~} 
  \varphi_{i}^m ~ = ~ \left( \left(a_{i} \S b_{i}\right)\,\S\,\varphi_{i+1}^m\right) ~\mbox{for}~ i=1,2,\ldots,m.$$
  The reduction from $\threesat$ to \dt{\{\G,\S\},\emptyset} is 
  the mapping $\psi \mapsto \mcinst{\G(e \S \varphi_1^m),H(\psi),s}$,
  where $\psi$ is a 3CNF-formula with $m$ clauses.
  This reduction can evidently be performed in logarithmic space.
  To prove its correctness,
  we use the following claim.
  Every path $p=(s,l_n,\ldots,l_1,q_m,\ldots,q_1,t,t,\ldots)$ in $H(\psi)$ that begins in state $s$
  corresponds to an assignment $\mathcal{A}_p=\{l_1,\ldots,l_n\}$ to the variables in $\psi$,
  that sets all literals to \textit{true} that appear on $p$.
  For the sake of simplicity,
  we use the notation $\struct{p}{H(\psi)},q_i \vDash \alpha$ for $\struct{p}{H(\psi)},n+m-i+1\vDash\alpha$.

  \begin{Claim}\label{claim:np4}
  Let $H(\psi)$ be constructed from a formula $\psi=C_1\wedge \ldots\wedge C_m$ with $m$ clauses,
  and let $p=(s,l_n,\ldots,l_1,q_m,\ldots,q_1,t,t,\ldots)$ be a path in $H(\psi)$.
  For $j=1,2,\ldots,m$ it holds that
  \begin{center}
  $\struct{p}{H(\psi)},q_j\vDash \varphi_j^m$  ~ if and only if ~
  the assignment $\mathcal{A}_p$ 
  satisfies clauses $C_j,\ldots,C_m$.
  \end{center}
  \end{Claim}

  \begin{Proofofclaim}[\ref{claim:np4}]
  Notice that $\mathcal{A}_p$ satisfies clause $C_j$ if and only if
  $p$ contains a state $w$ with $b_j\in \eta'(w)$.
  We prove the claim by induction.
  Since the variable $d$ holds in all predecessors of $q_m$ in $p$ but not in $q_m$,
  it follows that $\varphi_m^m = (a_m\S b_m) \S d$ holds in $q_m$ iff $a_m\S b_m$ holds in $q_m$.
  Since $b_m\not\in\eta'(q_m)$, it follows that
  $a_m\S b_m$ holds in $q_m$ iff $b_m$ holds in a predecessor of $q_m$
  iff $\mathcal{A}_p$ satisfies $C_m$.
  This completes the base case.
  For the inductive step,
  notice that $\struct{p}{H(\psi)},q_j\vDash \varphi_j^m$ iff $\struct{p}{H(\psi)},q_j\vDash a_j \S b_j$
  and $\struct{p}{H(\psi)},q_{j+1}\vDash\varphi_ {j+1}$.
  By the construction of $H(\psi)$ it follows that  $\struct{p}{H(\psi)},q_j\vDash a_j \S b_j$
  iff $\mathcal{A}_p$ satisfies $C_j$, and the rest follows from the induction hypothesis.
  \end{Proofofclaim}

  Finally, let $\psi$ be a 3CNF formula,
  and let $p=(s,l_n,\ldots,l_1,q_m,\ldots,q_1,t,t,\ldots)$ be a path in $H(\psi)$.
  On the first $n+1$ states of $p$, the variable $d$ holds.
  Therefore, $\varphi_1^m$ and henceforth $e\S\varphi_1^m$ is satisfied in all these states.
  On the $m$ following states $q_m,\ldots,q_1$, neither $d$ nor $e$ holds.
  Notice that $\struct{p}{H(\psi)},q_i\vDash \varphi_i^m$ iff $\struct{p}{H(\psi)},q_i\vDash \varphi_{i-1}^m$ (for $i=2,3,\ldots,m$).
  By Claim~\ref{claim:np4}, $\varphi_1^m$
  and henceforth $e\S\varphi_1^m$ is satisfied in all these states iff $\mathcal{A}_p$ satisfies $\psi$.
  On the remaining states, only the variable $e$ holds.
  Hence, $e\S\varphi_1^m$ is satisfied in all the latter states iff $\mathcal{A}_p$ satisfies $\psi$.
  Concluding, it follows that $\struct{p}{H(\psi)},0\vDash\G(e\S\varphi_1^m)$ iff $\mathcal{A}_p$ satisfies $\psi$.
  Since for every assignment to $\psi$ the structure $H(\psi)$ contains a corresponding path,
  the correctness of the reduction is proven.
  \end{proof}
\fi


The future-operator $\F$ alone is not powerful enough
to make the since-operator $\S$ \NP-hard:
We will show in Theorem~\ref{theorem:MC(S,F;V) NL-c}
that \dt{\{\F,\S\},B} for $\clone B\subseteq\cV$ is \NL-complete.
But with the help of $\neg$ or $\wedge$,
the model-checking problem for $\F$ and $\S$ becomes intractable.

\begin{theorem}\label{theorem:MC(S,F;E|N) NP-h}
 Let $N_{+}$ be a finite set of Boolean functions
      such that $\cN\subseteq\clone{N_{+}}$.
Then \dt{\{\F,\S\},N_{+}}\ is \NP-hard.
 \end{theorem}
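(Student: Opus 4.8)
The plan is to reduce the problem $\dt{\{\G,\S\},\emptyset}$, which is \NP-hard by Theorem~\ref{theorem:MC(S,G;.) NP-h}, to $\dt{\{\F,\S\},N_{+}}$. The key point is that the invariance operator $\G$ is already definable from $\F$ and $\neg$: for every Kripke structure $K$, every path $p$ in $K$, every position $i$, and every formula $\psi$, we have $\struct{p}{K},i\vDash\G\psi$ if and only if $\struct{p}{K},i\vDash\neg\F\neg\psi$, because ``for all $j\ge i$ we have $\psi$'' is precisely the negation of ``there is a $j\ge i$ with $\neg\psi$''.

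Using this, the first step is to transform an instance $\mcinst{\varphi,K,a}$ of $\dt{\{\G,\S\},\emptyset}$ into an instance of $\dt{\{\F,\S\},\{\neg\}}$ by keeping $K$ and $a$ unchanged and replacing in $\varphi$ every occurrence of a subformula $\G\psi$ with $\neg\F\neg\psi$. This substitution is local, hence computable in logarithmic space; the resulting formula $\varphi'$ is a temporal $\{\neg\}$-formula over $\{\F,\S\}$, since we have introduced only the operators $\neg$ and $\F$ (in particular, no constants are needed); and by the equivalence above together with compositionality of the satisfaction relation, $\struct{p}{K},0\vDash\varphi$ if and only if $\struct{p}{K},0\vDash\varphi'$ for every path $p$ starting in $a$. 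Hence $\mcinst{\varphi,K,a}\in\dt{\{\G,\S\},\emptyset}$ iff $\mcinst{\varphi',K,a}\in\dt{\{\F,\S\},\{\neg\}}$, so $\dt{\{\G,\S\},\emptyset}\redlogm\dt{\{\F,\S\},\{\neg\}}$, and the latter problem is \NP-hard. The second step lifts the base: since $\{\neg\}\subseteq\{\wedge,\vee,\neg\}$ and $\neg\in\cN\subseteq\clone{N_{+}}$, Lemma~\ref{lemma:subred} yields $\dt{\{\F,\S\},\{\neg\}}\redlogm\dt{\{\F,\S\},N_{+}}$. Composing the two reductions proves that $\dt{\{\F,\S\},N_{+}}$ is \NP-hard.

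I do not expect a genuine obstacle here, given that Theorem~\ref{theorem:MC(S,G;.) NP-h} is already available: the only points requiring care are that the rewriting $\G\psi\mapsto\neg\F\neg\psi$ keeps the formula inside $\L{\{\F,\S\},\{\neg\}}$ (it does), and that it preserves truth at every position of every path (which is exactly the equivalence $\G\equiv\neg\F\neg$ propagated bottom-up through the formula). If a self-contained reduction is preferred, one can instead reuse the Kripke structure $H(\psi)$ and the formulae $\varphi_i^m$ from the proof of Theorem~\ref{theorem:MC(S,G;.) NP-h} and merely replace the leading $\G$ by $\neg\F\neg$; the correctness argument is then verbatim the one given there.
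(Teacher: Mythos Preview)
Your proposal is correct and follows essentially the same approach as the paper: the key step in both is the equivalence $\G\alpha\equiv\neg\F\neg\alpha$, used to turn the $\{\G,\S\}$-instance from Theorem~\ref{theorem:MC(S,G;.) NP-h} into an $\{\F,\S\}$-instance over $\{\neg\}$, followed by Lemma~\ref{lemma:subred} to pass to an arbitrary base $N_{+}$ with $\cN\subseteq\clone{N_{+}}$. The only cosmetic difference is that the paper reduces directly from \threesat\ and rewrites the single leading $\G$ in the concrete formula $\G(e\S\varphi_1^m)$, whereas you phrase it as a generic logspace reduction from $\dt{\{\G,\S\},\emptyset}$; you even mention the paper's variant explicitly at the end.
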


\ifextended
  \begin{proof}
  By Lemma~\ref{lemma:subred} it suffices to give a reduction from \threesat\ 
  to $\dt{\{\F,\S\},\{\neg\}}$.
  For a 3CNF formula $\psi$, let $\mcinst{\G(e\S\varphi^m_1),H(\psi),s}$ be
  the instance of $\dt{\{\G,\S\},\emptyset}$ as described in the proof of Theorem~\ref{theorem:MC(S,G;.) NP-h}.
  Using $\G \alpha \equiv \neg\F\neg\alpha$,
  it follows that $\G(e\S\varphi^m_1)\equiv\neg\F(\neg(e\S\varphi^m_1))$,
  where the latter is a $\cN$-formula over $\{\F,\S\}$.
  The correctness of the reduction the same line as the proof of Theorem~\ref{theorem:MC(S,G;.) NP-h}.
  \end{proof}
\fi


\ifextended
  \begin{theorem}\label{theorem:MC(S,X;.) NP-c}
   Let $B$ be a finite set of Boolean functions. Then \dt{\{\X,\S\},B} is \NP-hard.
  \end{theorem}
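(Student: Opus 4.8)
The plan is to give a logspace reduction from \threesat\ to \dt{\{\X,\S\},\emptyset}; since the target formulae use no Boolean connective at all, this establishes \NP-hardness of \dt{\{\X,\S\},B} for every finite set $B$ of Boolean functions (one may also invoke Lemma~\ref{lemma:subred} with the empty base). The reduction recycles the Kripke structure $H(\psi)$ and the purely-past formulae $\varphi_1^m,\dots,\varphi_{m+1}^m$ that were constructed in the proof of Theorem~\ref{theorem:MC(S,G;.) NP-h} from a 3CNF formula $\psi$ with $n$ variables and $m$ clauses. The single new idea is that the unbounded future modality $\G$ used there is unnecessary: the only state at which the clause-satisfaction test must be evaluated, namely $q_1$, sits at a fixed position on every path of $H(\psi)$ that starts in the initial state $s$, and that position can be reached by a fixed finite iteration of \X.

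Concretely, I would first recall two facts about $H(\psi)$. Every path $p$ of $H(\psi)$ with $p_0=s$ has the rigid form $(s,l_n,\dots,l_1,q_m,\dots,q_1,t,t,\dots)$, so these paths are in one-to-one correspondence with the assignments $\mathcal{A}_p$ of $\VAR(\psi)$, and on every such path the state $q_1$ occupies position $n+m$. Moreover, Claim~\ref{claim:np4} specialised to $j=1$ says that $\struct{p}{H(\psi)},q_1\vDash\varphi_1^m$ holds if and only if $\mathcal{A}_p$ satisfies $C_1,\dots,C_m$, that is, $\psi$ itself; and $\varphi_1^m$ is built from variables and \S\ alone. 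I would then define the reduction as the map $\psi\mapsto\mcinst{\X^{n+m}\varphi_1^m,H(\psi),s}$, where $\X^{n+m}$ abbreviates $n+m$ nested \X-operators. This map is computable in logarithmic space, since $n+m\le|\psi|$ and writing out the \X-prefix only requires a counter; note also that the auxiliary variable $e$ and the $e\,\S\,(\cdot)$ wrapper of the earlier construction are no longer needed, since we point \X\ directly at $q_1$ rather than sweeping over all positions with $\G$.

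For correctness I would chain equivalences: for every path $p$ of $H(\psi)$ with $p_0=s$ we have $\struct{p}{H(\psi)},0\vDash\X^{n+m}\varphi_1^m$ iff $\struct{p}{H(\psi)},n+m\vDash\varphi_1^m$ iff (because $p_{n+m}=q_1$) $\mathcal{A}_p$ satisfies $\psi$. Hence $H(\psi)$ has a path from $s$ satisfying $\X^{n+m}\varphi_1^m$ precisely when some assignment satisfies $\psi$, so $\psi\in\threesat$ iff $\mcinst{\X^{n+m}\varphi_1^m,H(\psi),s}\in\dt{\{\X,\S\},\emptyset}$. The relation of $H(\psi)$ is total (thanks to the self-loop at $t$ alongside the edges inherited by reversal), so every initial segment extends to an infinite path and there is no extendability issue.

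I do not expect a real obstacle here beyond bookkeeping; the one point to check with care is the rigidity of the path shape of $H(\psi)$ --- that position $n+m$ is occupied by $q_1$ on every path from $s$, and by no other state --- which is immediate from the way the edge relation of $H(\psi)$ is obtained by reversing that of $K(\psi)$ and attaching the sink $t$. Finally, I would remark that \dt{\{\X,\S\},B} in fact lies in \NP, hence is \NP-complete: the truth value of an $\{\X,\S\}$-formula $\varphi$ at position $0$ of a path depends only on its first $d+1$ states, where $d\le|\varphi|$ bounds the nesting depth of \X\ in $\varphi$ and the \S-subformulae look only into the past within this initial segment; so one may guess a path prefix of length $d+1$ starting in $a$, extend it arbitrarily to an infinite path (possible since the relation is total), and evaluate $\varphi$ in polynomial time.
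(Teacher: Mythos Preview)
Your proposal is correct and follows essentially the same route as the paper: reuse $H(\psi)$ and the purely-past formula $\varphi_1^m$ from Theorem~\ref{theorem:MC(S,G;.) NP-h}, and replace the outer $\G$ by a fixed $\X$-prefix pointing directly at $q_1$. The paper writes $\X^{n+m+1}$ rather than your $\X^{n+m}$, but by the positional convention it itself sets up in the earlier proof ($q_1$ sits at index $n+m$ on every path from $s$) your exponent is the one that matches; your added \NP\ upper bound is a bonus not contained in the paper's proof.
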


  \begin{proof}
  To prove \NP-hardness, we give a reduction from \threesat\ to \dt{\{\X,\S\},\emptyset}.
  For a 3CNF formula $\psi$, let $H(\psi)$ be the structure
  as described in the proof of Theorem~\ref{theorem:MC(S,G;.) NP-h}.
  The reduction function maps $\psi$ to $\mcinst{\X^{n+m+1} \varphi_1,H(\psi),s}$.
  The $\X^{n+m+1}$ ``moves'' to state $q_1$ on any path in $H(\psi)$.
  The correctness proof follows the same line as the proof of Theorem~\ref{theorem:MC(S,G;.) NP-h}.
  \end{proof}
\fi


An upper bound better than \PSPACE\ for the intractable cases with the until-operator
or the since-operator remains open. We will now show that one canonical way to prove an \NP\ upper bound fails, 
in showing that these problems do not have the ``short path property'',
which claims that a path in the structure that fulfills the formula
has length polynomial in the length of the structure and the formula.
Hence, it will most likely be nontrivial to obtain a better upper bound.

We will now sketch such families of structures and formulae
using an inductive definition.
 Let $G_1, G_2, \ldots$ be the family of graphs presented 
in Figures~\ref{fig:exponential size graph 1} and \ref{fig:exponential size graph 2}.
Notice that $G_i$ is inserted into $G_{i+1}$ using the obvious lead-in and lead-out arrows.
  \begin{figure}
  \begin{center}
   \begin{minipage}{0.4\textwidth}
     \includegraphics{fig-exp_size_graph.1}
     \caption{The graph $G_1$}
     \label{fig:exponential size graph 1}
    \end{minipage}
    \begin{minipage}{0.55\textwidth}
      \includegraphics{fig-exp_size_graph.3}
      \caption{The graph $G_{i+1}$}
      \label{fig:exponential size graph 2}
    \end{minipage}
    \end{center}
  \end{figure}
  The truth assignments for these graphs are as follows:
  $$
  G_1: \begin{array}{r|l}
  x^1_1 & b_1 \\
  \hline
  x^1_2 & a_1 \\
  \hline
  x^1_3 & a_1,c_1 \\
  \end{array} \ \ \
  G_{i+1}: \begin{array}{r|l}
  \displaystyle x^{i+1}_1 & \bigwedge_{j=1}^{i+1}a_j \\
  \hline
  \displaystyle x^{i+1}_2 & a_{i+1} \\
  \hline
  \displaystyle x^{i+1}_3 & \bigwedge_{j=1}^{i+1}a_j, c_{i+1} \\
  \hline
  x\in G_i & \mathtext{truth assignment from } G_i, b_{i+1}
  \end{array}
  $$
  Now the formulae are defined as follows: 
\begin{gather*}
\varphi_1 = (a_1\U b_1)\U c_1, \mbox{ ~ ~ and ~ ~ }
\varphi_{i+1} = ((a_{i+1}\U\varphi_i)\U b_{i+1})\U c_{i+1}.
\end{gather*}

  The rough idea behind the construction is as follows: 
To satisfy the formula $\varphi_1$ in $G_1,$ the path has to repeat the circle once. 
In the inductive construction, this leads to an exponential number of repetitions. 

\section{The good fragments: tractability results}
\label{sec:nl}

This subsection is concerned with fragments of LTL that have a tractable model-checking problem.
We will provide a complete analysis for these fragments by proving that model checking
for all of them is \NL-complete or even solvable in logarithmic space.
This exhibits a surprisingly large gap in complexity between easy and hard fragments.

The following lemma establishes \NL-hardness for all tractable fragments.
\ifextended
\else
  It is proven in \cite{bms+07}.
\fi

\begin{lemma}
  \label{lemma:MC(F|G|X;.) NL-h}
  Let $B$ be a finite set of Boolean functions.
  Then \mc{\{\F\},B}, \linebreak
  \mc{\{\G\},B}, and \mc{\{\X\},B} are \NL-hard.
\end{lemma}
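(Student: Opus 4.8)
The plan is to reduce the standard \NL-complete reachability problem $s$-$t$-\textsc{Conn} (given a directed graph, a source $s$, and a target $t$, is $t$ reachable from $s$?) to each of the three model-checking problems. Since $B$ may be empty, the temporal formula in each target instance must use \emph{no} propositional operators at all, so I need a single propositional variable to mark the target state and a temporal formula that merely asserts ``a state where this variable holds is eventually (or always, or some fixed number of steps ahead) visited.''

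First I would handle \mc{\{\F\},B}. Given an instance $(V, E, s, t)$ of reachability, I build a Kripke structure $K = (W, R, \eta)$ with $W = V \cup \{z\}$ for a fresh sink state $z$, $R = E \cup \{(z,z)\} \cup \{(v,z) \mid v \in V\}$ (the edges to $z$ and the self-loop on $z$ make $R$ total, which our Kripke structures require; adding them cannot create new reachabilities to $t$ as long as $t \ne z$), and $\eta$ defined so that a fresh variable $x$ is true exactly in state $t$. The query formula is $\F x$. Then there is a path $p$ in $K$ with $p_0 = s$ and $\struct{p}{K},0 \vDash \F x$ iff some state satisfying $x$ is reachable from $s$, i.e.\ iff $t$ is reachable from $s$. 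This is clearly a logspace reduction. For \mc{\{\X\},B} the same structure works but with a subtlety: \X\ only looks a fixed distance ahead, so I instead use the formula $\F$-free surrogate $\X^{|V|} x$ together with a modified structure in which, once $t$ is entered, the computation loops on $t$ forever (add $(t,t)$ to $R$ and make $x$ true only at $t$); if $t$ is reachable at all it is reachable within $|V|-1 < |V|$ steps, and then the path can sit on $t$ so that $x$ holds at step $|V|$, while if $t$ is unreachable no path can have $x$ true at step $|V|$. The exponent $|V|$ is written in unary as a string of \X's, which is polynomial and logspace-computable.

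For \mc{\{\G\},B} I would use the dual idea: reachability of $t$ from $s$ is equivalent to \emph{non}-reachability failing, but \G\ is a ``for all futures'' operator, so I reduce instead from the complement, using the fact that \NL\ is closed under complement (Immerman--Szelep\-cs\'enyi) — equivalently, I reduce directly from \textsc{non-}$s$-$t$-\textsc{Conn}, which is also \NL-complete. Build $K$ on $W = V \cup \{z\}$ as before, let $y$ be a fresh variable true in every state \emph{except} $t$ (and true in $z$), keep all edges plus edges $v \to z$ and the self-loop on $z$, but crucially \emph{remove} all out-edges of $t$ and add only $t \to t$ so that a path reaching $t$ is trapped. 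Then $\struct{p}{K},0 \vDash \G y$ for some path from $s$ iff there is an infinite path from $s$ avoiding $t$ entirely iff $t$ is not reachable from $s$ (the sink $z$ guarantees such an avoiding path always has somewhere to go once it leaves the part of the graph that can reach $t$). Hence \textsc{non-}$s$-$t$-\textsc{Conn} $\redlogm$ \mc{\{\G\},B}, giving \NL-hardness.

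The main obstacle, and the only place requiring care, is the totality requirement on $R$ combined with the absence of propositional operators: I cannot write $\neg x$ or conjoin a ``haven't reached the sink'' guard, so all the bookkeeping must be pushed into the structure and the variable labelling. The design choices above — a universal sink $z$ for the \F\ and \X\ cases, and for the \G\ case trapping paths at $t$ while labelling with the complementary variable $y$ — are exactly what makes the one-variable, operator-free formulae $\F x$, $\X^{|V|}x$, and $\G y$ capture reachability (resp.\ its complement) faithfully. After that, verifying the ``iff'' in each case is a routine unwinding of the semantics, and logspace-computability of all three reductions is immediate.
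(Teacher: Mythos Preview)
Your reductions for \mc{\{\F\},B} and \mc{\{\X\},B} are fine and match the paper's in spirit (the paper achieves totality via the reflexive closure of $E$ rather than a fresh sink, but either device works).

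The \mc{\{\G\},B} case, however, is broken. You claim
\[
  \text{``there is an infinite path from $s$ avoiding $t$''}
  \quad\Longleftrightarrow\quad
  \text{``$t$ is not reachable from $s$''},
\]
but the left-to-right implication fails in your structure: whenever $s\neq t$, the path $(s,z,z,z,\dots)$ is an infinite path avoiding $t$ (you made $y$ true at $z$), so $\G y$ is satisfied at $s$ regardless of whether $t$ is reachable. Hence your map sends \emph{every} non-trivial instance of \textsc{non}-$s$-$t$-\textsc{Conn} to a positive instance, and it is not a reduction. The problem is intrinsic to the approach: as soon as you add a universal escape hatch to ensure totality, you give the existential path quantifier in \mc{} a way to dodge $t$ for free; removing the hatch leaves dead ends, and then totality fails.

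The paper sidesteps this by reducing from a different \NL-hard problem: ``given a directed graph $G$ and a vertex $a$, does $G$ contain an infinite path starting at $a$?'' (shown \NL-hard separately, via a layered copy of $G$). In the target Kripke structure one keeps the original edges, attaches to every dead-end vertex $v$ a fresh absorbing state $\tilde v$, and sets $y$ true on the original vertices and false on each $\tilde v$. Then $\G y$ is satisfiable from $a$ iff some path from $a$ never runs into a dead end, i.e.\ iff $G$ has an infinite path from $a$. The key difference is that the ``bad'' states $\tilde v$ are placed exactly where paths would otherwise get stuck, so there is no universal escape that trivialises the instance.
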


\ifextended
  \begin{proof}
    First consider \mc{\{\F\},B}.
    We reduce the accessibility problem for digraphs, GAP, to \mc{\{\F\},\emptyset}.
    The reduction is via the following logspace computable function.
    Given an instance $\mcinst{G,a,b}$ of GAP, where $G=(V,E)$ is a digraph and $a,b \in V$,
    map it to the instance $\mcinst{\F y, K(G), a}$ of \mc{\{\F\},\emptyset}
    with $K(G) = (V, E^+, \eta)$, where $E^+$ denotes the reflexive closure of $E$,
    and $\eta$ is given by $\eta(b) = \{y\}$ and $\eta(v) = \emptyset$, for all $v \in V - \{b\}$.
    It is immediately clear that there is a path from $a$ to $b$ in $G$
    if and only if there is a path $p$ in $K(G)$ starting from $a$ such that $\struct{p}{K(G)},0 \vDash \F y$.
  
    \par\medskip\noindent
    For \mc{\{\X\},B}, we use an analogous reduction from GAP to \mc{\{\X\},\emptyset}.
    Given an instance $\mcinst{G,a,b}$ of \GAP, where $G = (V,E)$,
    transform it into the instance $\mcinst{\X^{|V|}y, K(G), a\big}$ of \mc{\{\X\},\emptyset}
    with the Kripke structure $K(G)$ from above.
    Now it is clear that there is a path from $a$ to $b$ in $G$
    if and only if there is a path of length $|V|$ from $a$ to $b$ in the reflexive structure $K(G)$,
    if and only if there is a path $p$ in $K(G)$ starting from $a$ such that $\struct{p}{K(G)},0 \vDash \X^{|V|}y$.
  
    \par\medskip\noindent
    Now consider \mc{\{\G\},B}.
    We reduce the following problem to \mc{\{\G\},\emptyset}.
    Given a directed graph $G=(V,E)$ and a vertex $a \in V$, is there an infinite path
    in $G$ starting at $a$? It is folklore that this is an \NL-hard problem
    (see Lemma~\ref{lem:existence_of_infinite_path_is_NL-hard} in the Appendix).
    Given an instance $\mcinst{G,a}$ of this problem, transform it into
    the instance $\mcinst{\G y, K'(G), a}$ of \mc{\{\G\},\emptyset}, where $K'(G) = (V',E',\eta)$.
    Here
    $V' = V \,\dot\cup\, \{\tilde{v} \mid v \in V,~ v~\text{has no successor in}~V\}$,
    $E' = E \cup \{(v,\tilde{v}),~(\tilde{v},\tilde{v}) \mid \tilde{v} \in V'\}$,
    $\eta(v) = y$ for all $v \in V$, and $\eta(\tilde{v}) = \emptyset$, for all $\tilde{v} \in V'$.
    It is immediately clear that there is an infinite path in $G$ starting at $a$ if and only if
    there is a path $p$ in $K'(G)$ starting from $a$ such that $\struct{p}{K'(G)},0 \vDash \G y$.
  \end{proof}
\fi

It now remains to establish upper complexity bounds. Let $C$ be one of the clones \cN, \cE, \cV, and \cL,
and let $B$ be a finite set of Boolean functions such that $\clone B \subseteq C$.
Whenever we want to establish \NL-membership for some problem \mc{\cdot,B}, it will suffice to assume
that formulae are given over one of the bases $\{\neg,0,1\}$, $\{\wedge,0,1\}$, $\{\vee,0,1\}$, or
$\{\oplus,0,1\}$, respectively. This follows since these clones only contain constants, projections, and multi-ary versions of \NOT, \AND, \OR, and $\oplus$, respectively.

\begin{theorem}\label{theorem:MC(F,G,X;N) NL-c}
  Let $N_{-}$ be a finite set of Boolean functions such that $\clone{N_{-}} \subseteq \cN.$
  Then \mc{\{\F,\G,\X\},N_{-}} is \NL-complete.
\end{theorem}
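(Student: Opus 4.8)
The plan is as follows. \NL-hardness is immediate from Lemma~\ref{lemma:MC(F|G|X;.) NL-h}, so only the \NL\ upper bound has to be shown. By the remark preceding the theorem it suffices to treat formulae over the base $\{\neg,0,1\}$. Since all operators in $\{\F,\G,\X,\neg\}$ are unary, such a formula $\varphi$ is a \emph{chain}: a word $O_1O_2\cdots O_k$ of operators applied to a single leaf, which is a variable or a constant.

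First I would bring $\varphi$ into a normal form using well-known LTL validities: push all negations down to the leaf via $\neg\F\psi\equiv\G\neg\psi$, $\neg\G\psi\equiv\F\neg\psi$, $\neg\X\psi\equiv\X\neg\psi$, $\neg\neg\psi\equiv\psi$, $\neg 0\equiv 1$, $\neg 1\equiv 0$; commute every \X\ outward past \F\ and \G\ ($\X\F\psi\equiv\F\X\psi$, $\X\G\psi\equiv\G\X\psi$); and collapse the remaining word over $\{\F,\G\}$ using $\F\F\equiv\F$, $\G\G\equiv\G$, $\F\G\F\equiv\G\F$, $\G\F\G\equiv\F\G$. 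Each of these equivalences holds at every position of every path, hence preserves the model-checking answer, and the whole rewriting is easily done in logarithmic space (count the \X's, scan the \F/\G-pattern). The outcome is a formula of the shape $\X^m\,w\,\ell$, where $0\le m\le|\varphi|$, $w$ is one of the five words $\varepsilon,\F,\G,\F\G,\G\F$, and $\ell$ is a variable, a negated variable, or a constant.

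Now let $P\subseteq W$ be the set of states of the Kripke structure in which $\ell$ is true (so $P=W$ if $\ell=1$ and $P=\emptyset$ if $\ell=0$); since $R$ is total, every finite walk extends to an infinite path. For a state $s$ write $A(s)$ for the property ``some path starting in $s$ satisfies $w\ell$ at position~$0$''. Unwinding the semantics, $A(s)$ is equivalent to: $s\in P$, if $w=\varepsilon$; $s$ reaches $P$ in the graph, if $w=\F$; $s$ admits an infinite path all of whose states lie in $P$, if $w=\G$; $s$ reaches some state that admits such an infinite path, if $w=\F\G$; and $s$ reaches some state of $P$ lying on a nonempty cycle, if $w=\G\F$. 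Since $\struct{p}{K},0\vDash\X^m w\,\ell$ iff the suffix $(p_m,p_{m+1},\dots)$ satisfies $w\,\ell$ at~$0$, it follows that $\mcinst{\varphi,K,a}\in\mc{\{\F,\G,\X\},N_{-}}$ iff there is an $m$-step walk $a=p_0,p_1,\dots,p_m$ with $A(p_m)$.

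This last condition lies in \NL: guess the walk step by step, maintaining a counter bounded by $|\varphi|$, and then test $A(p_m)$, which in every case is a polynomially bounded \emph{existential} reachability-type property — ordinary graph reachability, or reachability of a vertex lying on a nonempty cycle, possibly inside the subgraph induced by $P$ — and is therefore in \NL; composing a constant number of \NL\ tests stays in \NL. Together with \NL-hardness this yields \NL-completeness. I expect the only genuinely delicate point to be the normal-form step: one must verify that every word over $\{\F,\G,\X\}$ collapses, modulo LTL equivalence, to the shape $\X^m w$ with $w$ among the five words above (the classical fact that the unary future fragment has only finitely many ``modalities''), that this collapse is logspace-computable, and that the characterisations of $A(s)$ are correct — in particular the harmless interaction between the initial $\X^m$-block and a subsequent \G. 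The remaining case analysis and the \NL\ bookkeeping are routine.
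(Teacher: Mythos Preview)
Your proposal is correct and follows essentially the same approach as the paper: the same normal form $\X^m\,w\,\ell$ obtained via the same LTL equivalences, the same five-way case analysis on $w\in\{\varepsilon,\F,\G,\F\G,\G\F\}$, and the same reduction of each case to graph reachability and cycle detection (possibly in the subgraph induced by the states satisfying the leaf). The only cosmetic difference is that the paper encodes the normal form as two counters and describes an explicit automaton for computing the $\F/\G$-prefix, whereas you leave this as a logspace scan.
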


\begin{proof}
  The lower bound follows from Lemma~\ref{lemma:MC(F|G|X;.) NL-h}.
  For the upper bound, first note that for an LTL formula $\psi$ the following equivalences hold:
  $\F\F \psi \equiv \F \psi$, $\G\G \psi \equiv \G \psi$, $\F\G\F \psi\equiv \G\F \psi$, $\G\F\G \psi \equiv \F\G \psi$,
  $\G\psi \equiv \neg\F\neg\psi$, and $\F\psi \equiv \neg\G\neg\psi$. Furthermore, it is possible to interchange $\X$
  and adjacent $\G$-, $\F$-, or $\neg$-operators without affecting satisfiability. Under these considerations,
  each formula $\varphi \in \L{\{\F,\G,\X\},N_{-}}$ can be transformed without changing satisfiability
  into a normal form
  {\boldmath$
      \varphi' = \X^mP\!\!\sim\!\!y
  $},
  where $P$ is a prefix ranging over the values ``empty string'', \F, \G, \F\G, and \G\F; $m$ is the number
  of occurrences of \X\ in $\varphi$; $\sim$ is either the empty string or $\neg$; and $y$ is a variable
  or a constant. This normal form has two important properties. First, it can be represented in logarithmic
  space using two binary counters $a$ and $b$. The counter $a$ stores $m$,
  and $b$ takes on values $0,\dots,9$ to represent each possible combination of $P$ and $\sim$.
  Note that $a$ takes on values less than $|\varphi|$, and $b$ has a constant range. Hence both
  counters require at most logarithmic space. It is not necessary to store any information about $y$, because
  it can be taken from the representation of $\varphi$.

  Second, $\varphi'$ can be \textit{computed} from $\varphi$ in logarithmic space. The value of $a$ is obtained
  by counting the occurrences of \X\ in $\varphi$, and $b$ is obtained by linearly parsing $\varphi$ with the automaton
  that is given in Figure~\ref{fig:automaton_for_GFneg_prefix}, and which ignores all occurrences of \X.
  \begin{figure}
    \centering
    \includegraphics{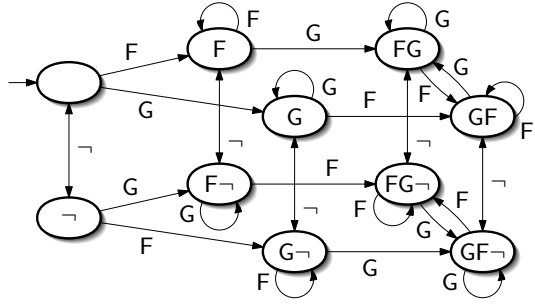}
    \caption{An automaton that computes $P\!\sim$}
    \label{fig:automaton_for_GFneg_prefix}
  \end{figure}

  \ifextended
    The state of this automaton at the end of the passage through $\varphi$ determines the values of $P$ and $\sim$
    in $\varphi$.
    Now let $\varphi$ be an \L{\{\F,\G,\X\},N_{-}}-formula, $K=(W,R,\eta)$ a Kripke structure and $a \in W$.
    If $y$ is constant, the problem is trivial, therefore it remains to consider the case where $y$ is a variable.
    According to the possible values of $P$ and $\sim$ in $\varphi$, there are ten cases to consider.
    We only present the argumentation for those five in which $\sim$ is empty. (For the dual cases,
    kindly replace each occurrence of ``$\in\eta(b)$'' by ``$\not\in\eta(b)$''.)
    In the following list, we assume that $m=0$. As per explanation below, this is not a significant
    restriction.
    \begin{description}
      \item[\boldmath $P$ is empty]
        Then $\mcinst{\varphi,K,a} \in \mc{\{\F,\G,\X\},N_{-}}$ if and only if there is a state $b$ in $K$
        accessible from $a$ via $R$ such that $y\in\eta(b)$.
      \item[\boldmath $P = \F$]
        In this case we have to check whether there is a state $b \in W$ that can be reached from $a$ via
        $R$, and $y\in\eta(b)$.
      \item[\boldmath $P = \G$]
        We define $W'=\{b \in W\mid y\in\eta(b)\}$ and $R'=R\cap W'\times W'$.
        It holds that $\mcinst{\varphi,K,a} \in \mc{\{\F,\G,\X\},N_{-}}$ if and only if there is some
        $b \in W'$ such that $b$ is accessible from $a$ via $R'$ and $b$ belongs to a cycle in $R'$.
      \item[\boldmath $P = \F\G$]
        We can reduce this case to the previous one: $\mcinst{\varphi,K,a} \in \mc{\{\F,\G,\X\},N_{-}}$
        if and only if there is some $b \in W'$ that can be reached from $a$ via $R$, and
        $\mcinst{\G y,K,b}\in\mc{\{\F,\G,\X\},N_{-}}$.
      \item[\boldmath $P = \G\F$]
        We have to check whether there exists some $b \in W$ that can be reached from $a$ via $R$ such that
        $y\in\eta(b)$ and $b$ belongs to a cycle.
    \end{description}
    Since the questions whether there is a path from any vertex to another and whether any vertex belongs to
    a cycle in a directed graph can be answered in \NL, all previously given procedures are \NL-algorithms.
    The restriction $m=0$ is removed by the observation that
    $\mcinst{\X^mP\!\!\sim\!\!y,K,a} \in \mc{\{\F,\G,\X\},N_{-}}$ if and only if there exists some state
    $b$ in $K$ that is accessible from $a$ in $m$ $R$-steps such that
    $\mcinst{P\!\!\sim\!\!y,K,b} \in \mc{\{\F,\G,\X\},N_{-}}$. This reduces the case $m>0$ to $m=0$.
  
    Hence we have found an \NL-algorithm deciding \mc{\{\F,\G,\X\},N_{-}}: Given $\mcinst{\varphi,K,a}$,
    compute $\varphi'$, guess a state $b$ accessible from $a$ in $m$ $R$-steps, apply the procedure
    of one of the above five cases to $\mcinst{\varphi',K,a}$, and accept if the last step was successful.
  \else
    We show in \cite{bms+07}
    how to obtain an \NL\ algorithm using the normal form $\varphi'$.
  \fi
\end{proof}

\begin{theorem}\label{theorem:MC(F,X;V) MC(G,X;E) NL-c}
  \begin{Enum}
   \item
     \label{part:FXV}
     Let $V_{-}$ be a finite set of Boolean functions such that $\clone{V_{-}}\subseteq\cV$.
     Then \mc{\{\F,\X\},V_{-}}\ is \NL-complete.
   \item
     \label{part:GXE}
     Let $E_{-}$ be a finite set of Boolean functions such that $\clone{E_{-}}\subseteq\cE$.\\
     Then \mc{\{\G,\X\},E_{-}}\ is \NL-complete.
  \end{Enum}
\end{theorem}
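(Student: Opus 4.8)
The lower bounds are immediate from Lemma~\ref{lemma:MC(F|G|X;.) NL-h}: $\mc{\{\F\},V_{-}}$ and $\mc{\{\G\},E_{-}}$ are already \NL-hard, hence so are $\mc{\{\F,\X\},V_{-}}$ and $\mc{\{\G,\X\},E_{-}}$. For the upper bounds, recall from the remark preceding the theorem that for \NL-membership it suffices to consider formulae over the base $\{\vee,0,1\}$ in part~\ref{part:FXV} and over $\{\wedge,0,1\}$ in part~\ref{part:GXE}. The two parts are dual, so I treat part~\ref{part:FXV} in detail and indicate the modifications for part~\ref{part:GXE} afterwards.

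The plan rests on the fact that $\F$ and $\X$ both distribute over $\vee$, i.e.\ $\F(\alpha\vee\beta)\equiv\F\alpha\vee\F\beta$ and $\X(\alpha\vee\beta)\equiv\X\alpha\vee\X\beta$. Pushing all $\vee$'s to the top yields $\varphi\equiv\bigvee_{\ell}\psi_{\ell}$, where $\ell$ ranges over the at most $|\varphi|$ leaves of $\varphi$ and $\psi_{\ell}$ is the variable or constant sitting at $\ell$, prefixed by the sequence of temporal operators on the root-to-$\ell$ path. Using $\F\F\equiv\F$ together with the commutation $\X\F\equiv\F\X$, each $\psi_{\ell}$ normalises further to $\X^{c}y$ (when no $\F$ occurs on that path) or to $\X^{c}\F y$ (otherwise), where $c$ is the number of $\X$'s on the path (so $c\le|\varphi|$) and $y$ is the leaf's symbol. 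Since $\varphi$ is a disjunction of the $\psi_{\ell}$, we have $\mcinst{\varphi,K,a}\in\mc{\{\F,\X\},V_{-}}$ iff $\mcinst{\psi_{\ell},K,a}$ is a positive instance for \emph{some} $\ell$, so the subproblems can be handled independently.

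This gives the following \NL\ procedure for part~\ref{part:FXV}: nondeterministically descend from the root of $\varphi$ into one leaf $\ell$ (choosing a branch at each $\vee$-node), maintaining the counter $c\le|\varphi|$ and a bit recording whether an $\F$ has been passed; having reached $\ell$ with symbol $y$, accept if $y=1$, reject if $y=0$, and otherwise guess a walk $a=p_{0},p_{1},\dots,p_{c}$ in $K$ of length $c$ (step by step, which is in \NL\ as $c$ is polynomial), then either check $y\in\eta(p_{c})$ directly in the $\X^{c}y$ case, or guess an $R$-path from $p_{c}$ to a state $b$ with $y\in\eta(b)$ in the $\X^{c}\F y$ case; as $R$ is total, any such finite witness extends to an infinite path. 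All steps are \NL\ checks.

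For part~\ref{part:GXE} the same distribution and commutation laws hold with $\wedge,\G$ replacing $\vee,\F$, so $\varphi\equiv\bigwedge_{\ell}\psi_{\ell}$ with each $\psi_{\ell}$ of the form $\X^{c_{\ell}}y_{\ell}$ or $\X^{c_{\ell}}\G y_{\ell}$. The main obstacle is that a \emph{conjunction} forces one common path to satisfy all $\psi_{\ell}$ at once, so the leaves cannot be treated in isolation. I would resolve this by splitting a candidate path into a prefix of length $m:=\max_{\ell}c_{\ell}\le|\varphi|$ and an infinite tail: guess $a=p_{0},\dots,p_{m}$ and, at each position $t$, verify every constraint active there---a leaf $\X^{c}y$ with $c=t$, or a leaf $\X^{c}\G y$ with $c\le t$, forces $y\in\eta(p_{t})$, and any leaf carrying the constant $0$ forces rejection---where the (polynomially many) leaves are inspected one at a time by an inner logspace scan of $\varphi$. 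Beyond position $m$ exactly the $\G$-constraints (all of them) remain active, so it only remains to check that $p_{m}$ reaches a cycle within the set $W''$ of states $b$ satisfying $y\in\eta(b)$ for every leaf $\X^{c}\G y$ (note $p_{m}\in W''$ is guaranteed by the prefix check); this is an \NL\ reachability-and-cycle test in the subgraph of $K$ induced by $W''$. Combined with the \NL\ lower bound, this yields \NL-completeness in both cases.
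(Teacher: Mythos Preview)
Your proof is correct and follows essentially the same route as the paper: use the distributivity of $\F,\X$ over $\vee$ (resp.\ $\G,\X$ over $\wedge$) together with $\F\F\equiv\F$ and $\X\F\equiv\F\X$ to reduce $\varphi$ to a disjunction (resp.\ conjunction) of terms of shape $\X^{c}y$ or $\X^{c}\F y$ (resp.\ $\X^{c}\G y$), and then decide the disjunction by guessing one term and a reachability witness, or the conjunction by guessing a length-$m$ prefix and a cycle in the restricted state set. The only cosmetic differences are that the paper writes the normalised terms as $\F\X^{c}y$ and $\G\X^{c}y$ and materialises the normal form explicitly in logspace, whereas you compute it on the fly via a tree descent; your intersection $W''$ in part~\ref{part:GXE} is in fact the correct set (the paper's text says ``union'' here, which appears to be a slip).
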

\begin{proof}
  The lower bounds follow from Lemma~\ref{lemma:MC(F|G|X;.) NL-h}.

  \par\medskip\noindent
  First consider the case $\clone{V_{-}}\subseteq\cV.$  It holds that
  $\F (\psi_1 \vee \dots \vee \psi_n)\equiv \F \psi_1 \vee \dots \vee\F \psi_n$
  as well as $\X\F\varphi \equiv \F\X\varphi$ and $\X(\varphi \vee \psi) \equiv \X\varphi \vee \X\psi$.
  Therefore, every formula $\varphi\in \L{\{\F,\X\},V_{-}}$ can be rewritten as
  \[
      \varphi' = \F\X^{i_1} y_1 \vee \dots \vee \F\X^{i_n} y_n \vee \X^{i_{n+1}}y_{n+1} \vee \dots \vee \X^{i_m}y_m,
  \]
  where $y_1,\dots,y_m$ are variables or constants (note that this representation of $\varphi$ can be constructed
  in \LS). Now let $\mcinst{\varphi,K,a}$ be an instance of \mc{\{\F,\X\},V_{-}}, where $K = (W,R,\eta)$,
  and let $\varphi$ be of the above
  form. Thus, $\mcinst{\varphi,K,a} \in \mc{\{\F,\X\},V_{-}}$ if and only if for some $j \in \{n+1,\dots,m\}$, there is a
  state $b \in W$ such that $y_j \in \eta(b)$ and $b$ is accessible from $a$ in exactly $i_j$ $R$-steps or
  if, for some $j \in \{1,\dots,n\}$, there is a state $b \in W$ such that $y_j \in \eta(b)$ and $b$ is accessible
  from $a$ in at least $i_j$ $R$-steps. This can be tested in \NL.

  \par\medskip\noindent
  As for the case $\clone{E_{-}}\subseteq\cE,$ we take advantage of the duality of $\F$ and $\G$, and $\wedge$ and $\vee$,
  respectively. Analogous considerations as above lead to the logspace computable normal form
  \[
      \varphi' = \G\X^{i_1} y_1 \wedge \dots \wedge \G\X^{i_n} y_n \wedge
                 \X^{i_{n+1}}y_{n+1} \wedge \dots \wedge \X^{i_m}y_m.
  \]
  Let $I = \max\{i_1,\dots,i_m\}$. For each $j = 1,\dots,m$, we define $W^j=\{b \in W\mid y_j\in\eta(b)\}$
  and $R^j = R \cap W^j\times W^j$. Furthermore, let $W'$ be the union of $W^j$ for $j = 1,\dots,n$ (!), and let
  $R' = R \cap W'\times W'$. Now $\mcinst{\varphi,K,a} \in \mc{\{\G,\X\},E_{-}}$
  if and only if there is some state $b \in W'$ satisfying the following conditions.
  \begin{itemize}
    \item
      There is an $R$-path $p$ of length at least $I$ from $a$ to $b$, where the first $I+1$ states
      on $p$ are $c_0 = a$, $c_1$, \dots, $c_{I}$.
    \item
      The state $b'$ lies on a cycle in $W'$.
    \item
      For each $j = 1,\dots,n$, each state of $p$ from $c_{i_j}$ to $c_{I}$ is from $W^j$.
    \item
      For each $j = n+1,\dots,m$, the state $c_{i_j}$ is from $W^j$.
  \end{itemize}
  These conditions can be tested in \NL\ as follows. Successively guess $c_1,\dots,c_{I}$ and verify
  their membership in the appropriate sets $W^j$. Then guess $b$, verify whether $b \in W'$, whether
  $b$ lies on some $R'$-cycle, and whether there is an $R'$-path from $c_{I}$ to $b$.
\end{proof}

In the proof of Theorem~\ref{theorem:MC(F,X;V) MC(G,X;E) NL-c},
we have exploited the duality of \F\ and \G, and $\vee$ and $\wedge$, respectively.
Furthermore, the proof relied on the fact that \F\ and $\vee$ (and \G\ and $\wedge$) are interchangeable.
This is not the case for \F\ and $\wedge$, or \G\ and $\vee$, respectively.
Hence it is not surprising that $\mc{\{\F\},\set{\wedge}}$ is \NP-hard (Corollary~\ref{cor:MC(F;E) NP-h}). However, the \NL-membership of $\mc{\{\F,\G\},\set\vee}$ \textit{is} surprising.
Before we formulate this result, we try to provide an intuition for the tractability of this problem.
The main reason is that an inductive view on $\L{\{\F,\G\},\set\vee}$-formulae allows us
to subsequently guess parts of a satisfying path without keeping the previously guessed parts in memory.
This is possible because each $\L{\{\F,\G\},\set\vee}$-formula $\varphi$ can be rewritten as
\begin{equation}
  \label{eq:normal_form_FGV}
  \varphi = y_1 \vee \dots \vee y_n \vee \F z_1 \vee \dots \vee \F z_m \vee \G\psi_1 \vee \dots \vee \G\psi_\ell
            \vee \F\G\psi_{\ell+1} \vee \dots \vee \F\G\psi_k,
\end{equation}
where the $y_i,z_i$ are variables (or constants),
and each $\psi_i$ is an $\L{\{\F,\G\},\set\vee}$-formula of the same form with a strictly smaller nesting depth of \G-operators.
Now, $\varphi$ is \textit{true} at the begin of some path $p$ iff one of its disjuncts is \textit{true} there.
In case none of the $y_i$ or $\F z_i$ is \textit{true}, we must guess one of the $\G\psi_i$ (or $\F\G\psi_j$)
and check whether $\psi_i$ (or $\psi_j$) is \textit{true} on the entire path $p$ (or on $p$ minus some finite number
of initial states).
Now $\psi_i$ is again of the above form.
So we must either find an infinite path
on which $y_1 \vee \dots \vee y_n \vee \F z_1 \vee \dots \vee \F z_m$ is \textit{true} everywhere
(a cycle containing at least $|N|$ states satisfying some $y_i$ or $z_i$ suffices,
where $N$ is the set of states of the Kripke structure),
or we must find a \textit{finite} path satisfying the same conditions
and followed by an infinite path satisfying one of the $\G\psi_i$ (or $\F\G\psi_j$) at its initial point.
Hence we can recursively solve a problem of the same kind with reduced problem size.
Note that it is neither necessary to explicitly compute the normal form for $\varphi$ or one of the $\psi_i$,
nor need previously visited states be stored in memory. 


\begin{theorem}
  \label{theorem:MC(F,G;V) NL-c}
  Let $V_{-}$ be a finite set of Boolean functions such that $\clone{V_{-}}\subseteq\cV.$
  Then \mc{\{\G\},V_{-}} and \mc{\{\F,\G\},V_{-}} are \NL-complete.
\end{theorem}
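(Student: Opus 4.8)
The lower bound is immediate: $\mc{\{\G\},V_{-}}$ is \NL-hard by Lemma~\ref{lemma:MC(F|G|X;.) NL-h}, and since $\L{\{\G\},V_{-}}\subseteq\L{\{\F,\G\},V_{-}}$, so is $\mc{\{\F,\G\},V_{-}}$. For the upper bound I would first, as explained after Theorem~\ref{theorem:MC(F,G,X;N) NL-c} (using Lemmas~\ref{lemma:const} and~\ref{lemma:subred}), restrict to formulae over the base $\{\vee,0,1\}$. Using the LTL equivalences $\F(\alpha\vee\beta)\equiv\F\alpha\vee\F\beta$, $\F\F\alpha\equiv\F\alpha$, $\G\G\alpha\equiv\G\alpha$, and $\G\F\G\alpha\equiv\F\G\alpha$, one brings every $\varphi\in\L{\{\F,\G\},\set\vee}$ into the normal form~(\ref{eq:normal_form_FGV}),
\[
  \varphi \;=\; \textstyle\bigvee_i y_i \;\vee\; \bigvee_j \F z_j \;\vee\; \bigvee_s \G\psi_s \;\vee\; \bigvee_t \F\G\psi_t,
\]
where the $y_i,z_j$ are variables or constants and each $\psi_s,\psi_t$ is of the same shape but of strictly smaller \G-nesting depth (for $\L{\{\G\},\set\vee}$ the \F-disjuncts are absent). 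This rewriting is never carried out explicitly: the algorithm keeps only a pointer into $\varphi$ marking the current subformula and a constant-size register for the surrounding \G/\F\G-prefix, which fits in logarithmic space.

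The algorithm recurses on \G-nesting depth, using the auxiliary predicate $\mathrm{Always}(\psi,c)$, ``is there an infinite path from $c$ along which $\psi$ holds at every position?'' (equivalently $\mcinst{\G\psi,K,c}\in\mc{\{\F,\G\},\set\vee}$). For a normal-form $\varphi$, $\mcinst{\varphi,K,a}$ holds iff some $y_i$ is true at $a$, or some state satisfying some $z_j$ is reachable from $a$, or $\mathrm{Always}(\psi_s,a)$ for some $s$, or $\mathrm{Always}(\psi_t,c)$ for some $t$ and some $c$ reachable from $a$. Writing $\psi^-$ for the literal/\F-part $\bigvee_i y_i\vee\bigvee_j\F z_j$ of $\psi$, I would decide $\mathrm{Always}(\psi,c)$ by the dichotomy sketched before the theorem: either a witnessing path exists along which only $\psi^-$ is ever used --- which, by a short case analysis on where the $z_j$-states lie, reduces to a constant number of reachability and cycle-existence queries in $K$ and in the subgraph induced by the states satisfying some $y_i$, hence is an \NL-test --- or the path must rely on some $\G\psi_s$ or $\F\G\psi_t$ disjunct, which forces its tail (from some state $c'$ on) to witness $\mathrm{Always}(\psi_s,c')$ resp.\ $\mathrm{Always}(\psi_t,c')$ and leaves only a finite prefix to handle (on which $\psi^-$ must hold at each state in the \G-case, with no condition in the \F\G-case); this is a recursive $\mathrm{Always}$-call on a formula of strictly smaller \G-depth. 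Each recursive call is the last action taken, so the procedure is tail-recursive and unfolds into a loop keeping only $O(\log(|\varphi|+|W|))$ bits --- a handful of vertices and one pointer --- plus its nondeterministic guesses; hence it runs in \NL. Correctness follows by induction on the \G-depth, the base case being the reachability and cycle observations above.

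The step I expect to be the main obstacle is the correctness of the dichotomy for $\mathrm{Always}(\psi,c)$ when \F-disjuncts and \G/\F\G-disjuncts occur together. Unlike a literal, a disjunct $\F z_j$ is not a condition on the current state alone: it can be paid for by a $z_j$-state arising arbitrarily far ahead, even inside the later $\mathrm{Always}$-segment. One therefore has to show that whenever a satisfying infinite path exists, one exists in a normalised lasso shape in which the ``finite prefix on which only $\psi^-$ is needed'' can be certified without storing it --- e.g.\ by cutting the prefix at its last $z$-state, treating the unconstrained part before it and the part after it (on which every state must satisfy some $y_i$) separately, and absorbing any $z$-state of the continuation into the prefix analysis --- and to check that the whole recursion genuinely uses only logarithmic work space rather than a recursion stack whose depth is the \G-nesting depth of $\varphi$. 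Everything else is the standard reachability and cycle-detection machinery witnessing membership in \NL.
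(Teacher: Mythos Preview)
Your proposal is essentially the paper's proof. Both arguments rest on the same normal form~(\ref{eq:normal_form_FGV}), the same recursion on \G-nesting depth, the same distinction between a ``now'' query (your $\mcinst{\varphi,K,a}$) and an ``always'' query (your $\mathrm{Always}(\psi,c)$, the paper's \texttt{always} mode), and the same observation that the recursion is tail-recursive so that no stack is needed.

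The only notable stylistic difference is that the paper packages everything into an explicit nondeterministic algorithm \FGValgo\ (Table~\ref{tab:alg_ltl-F,G,V-mc}) whose \textit{while} loop guesses one state at a time, counts up to $|W|+1$, and branches into a recursive call when a \G-disjunct is chosen; correctness is then argued via two claims by induction on $\mu_\G(\varphi)$. You instead describe the base case of $\mathrm{Always}$ structurally, as a ``constant number of reachability and cycle-existence queries.'' That formulation is slightly looser than the paper's: the states on which only $\psi^-$ is used are not simply those in the $y_i$-subgraph, since an $\F z_j$-disjunct may be discharged by a $z_j$-state lying arbitrarily far ahead (possibly inside the recursive tail). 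You already flag exactly this point as the main obstacle, and the paper's counter-to-$|W|+1$ mechanism is precisely the clean way to sidestep it without a separate case analysis on where the $z$-states occur. Apart from this presentational choice, the two proofs coincide.
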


\begin{proof}
  The lower bound follows from Lemma~\ref{lemma:MC(F|G|X;.) NL-h}.
  It remains to show \NL-membership of \mc{\{\F,\G\},V_{-}}.
  For this purpose, we devise the recursive algorithm \FGValgo\ as given in Table \ref{tab:alg_ltl-F,G,V-mc}.
  Note that we have deliberately left out constants.
  This is no restriction, since we have observed in Lemma~\ref{lemma:const} that each constant can be regarded as
  a variable that is set to \textit{true} or \textit{false} throughout the whole Kripke structure.

  \begin{table}
    \centering
    \renewcommand{\arraystretch}{1.1}
    \begin{minipage}{.7\textwidth}
      \textbf{Algorithm {\boldmath \FGValgo}}

      \par\medskip
      \begin{tabular}{@{}l@{\qquad}l@{}}
        \textbf{Input}  & $\varphi \in \L{\{\F,\G\},V_{-}}$                                           \\
                        & Kripke structure $K = (W,R,\eta)$                                         \\
                        & $a \in W$                                                                 \\
                        & additional parameter $\textit{mode} \in \{\texttt{now},\texttt{always}\}$ \\[2pt]
        \textbf{Output} & \textbf{accept} or \textbf{reject}
      \end{tabular}

      \par\medskip
      \begin{algorithmic}[1]
        \STATE{%
          $c \leftarrow 0$; \quad
          $\psi \leftarrow \varphi$; \quad
          $b \leftarrow a$; \quad
          $\textit{Ffound} \leftarrow \texttt{false}$%
        }
        \WHILE{$c \le |W|$}
          \IF{$\psi = \alpha_0 \vee \alpha_1$\quad (for some $\alpha_0,\alpha_1$)\quad}
            \STATE{guess $i \in \{0,1\}$}
            \STATE{$\psi \leftarrow \alpha_i$}
          \rule[-10pt]{0pt}{5pt}
          \ELSIF{$\psi = \F\alpha$\quad (for some $\alpha$)\quad}
            \STATE{$\textit{Ffound} \leftarrow \texttt{true}$}
            \STATE{$\psi \leftarrow \alpha$}
          \rule[-10pt]{0pt}{5pt}
          \ELSE[$\psi$ is some $\G\alpha$ or a variable]
            \IF[process encountered \F]{\textit{Ffound}}
              \STATE{guess $n$ with $0 \le n \le |W|$}
              \FOR[if $n=0$, ignore this loop]{$i = 1,2,\dots,n$}
                \STATE{$b \leftarrow \text{guess some $R$-successor of $b$}$}
              \ENDFOR
            \ENDIF
            \rule[-10pt]{0pt}{5pt}
            \IF{$\psi = \G\alpha$\quad (for some $\alpha$)\quad}
              \STATE{\textbf{call} $\FGValgo(\alpha,K,b,\texttt{always})$}
            \rule[-10pt]{0pt}{5pt}
            \ELSE[$\psi$ is a variable]
              \IF{$\psi \notin \eta(b)$}
                \STATE{\textbf{reject}}
              \ENDIF
              \IF{$\textit{mode} = \texttt{always}$}
                \STATE{$c \leftarrow c+1$}
                \STATE{$b \leftarrow \text{guess some $R$-successor of $b$}$}
                \STATE{$\textit{Ffound} \leftarrow \texttt{false}$}
                \STATE{$\psi \leftarrow \varphi$}
              \ELSE
                \STATE{\textbf{accept}}
              \ENDIF
            \rule[-10pt]{0pt}{5pt}
            \ENDIF
          \ENDIF
        \ENDWHILE
        \STATE{\textbf{accept}}
      \end{algorithmic}
    \end{minipage}
    \renewcommand{\arraystretch}{1.5}
    \caption{The algorithm \FGValgo}
    \vspace{-.5\baselineskip}
    \label{tab:alg_ltl-F,G,V-mc}
  \end{table}

  The parameter \textit{mode} indicates the current ``mode'' of the computation. The idea is as follows.
  In order to determine whether $\varphi$ is satisfiable at the \textit{initial} point
  of some structure starting at $a$ in $K$, the algorithm has to be in mode \texttt{now}.
  This, hence, is the default setting for the first call of \FGValgo. As soon as
  the algorithm chooses to process a \G-subformula $\G\alpha$ of $\varphi$, it has to determine
  whether $\alpha$ is satisfiable at \textit{every} point in some structure starting
  at the currently visited state in $K$. It therefore changes into \texttt{always} mode
  and calls itself recursively with the first parameter set to $\alpha$, see Line 17.

  Hence, given an instance $\mcinst{\varphi,K,a}$ of the problem \mc{\{\F,\G\},V_{-}}, we have to invoke
  $\FGValgo(\varphi,K,a,\texttt{now})$ in order to determine whether there is
  a satisfying path for $\varphi$ in $K$ starting at $a$. It is easy to see that
  this call always terminates: First, whenever the algorithm calls itself recursively,
  the first argument of the new call is a strict subformula of the original first argument.
  Therefore there can be at most $|\varphi|$ recursive calls. Second, within each call,
  each passage through the \textit{while} loop (Lines 2--32) either decreases $\psi$
  or increases $c$. Hence, there can be at most $|\varphi|\cdot(|W|+1)$ passages through
  the \textit{while} loop until the algorithm accepts or rejects.

  \FGValgo\ is an \NL\ algorithm: The values of all parameters and
  programme variables are either subformulae of the
  original formula $\varphi$, states of the given Kripke structure $K$, counters of range $0,\dots,|W|+1$,
  or Booleans. They can all be represented using $\lceil\log|\varphi|\rceil$,
  $\lceil\log(|W|+1)\rceil$, or constantly many bits. Furthermore, since the algorithm uses
  no \textit{return} command, the recursive calls may re-use the space provided for all
  parameters and programme variables, and no return addresses need be stored.

  It remains to show the correctness of \FGValgo, which we will do in two steps.
  in \texttt{always} mode, which will be shown by induction on the nesting depth
  of the \G-operator in $\varphi$. We denote this value by $\mu_\G(\varphi)$.
  Claim~\ref{claim:FGV2} will then ensure the correct behaviour in \texttt{now} mode.
  \ifextended
  \else
    Both claims are proven in \cite{bms+07}.
  \fi

  \begin{Claim}
    \label{claim:FGV1}
    For each $\varphi \in \L{\{\F,\G\},\cV}$, each $K = (W,R,\eta)$, and each $a \in W$:
    
      $\mcinst{\G\varphi,K,a} \in \mc{\{\F,\G\},V_{-}}$
      ~~
      $\Leftrightarrow$
      ~~
      \text{there is an accepting run of~}\\ \mbox{}\hfill
      $\FGValgo(\varphi,K,a,\texttt{always})$.
    
  \end{Claim}%
  \ifextended

    \begin{Proofofclaim}[\ref{claim:FGV1}]
      \textit{For the base case of the induction,}~ let $\mu_\G(\varphi) = 0$. Because of the equivalences
      $\F(\psi_1 \vee \psi_2) \equiv \F\psi_1 \vee \F\psi_2$ and $\F\F\psi \equiv \F\psi$,
      we may assume w.l.o.g.\ that any occurrence of the \F-operator is in front of some variable in
      $\varphi$. If we think of $\varphi$ as a tree, this means that \F-operators can only occur
      in direct predecessors of leaves.
      Note that the algorithm computes this normal form implicitly: Whenever it guesses a path from the root ($\varphi$)
      to some leaf (a variable) in the tree
      and encounters an \F-operator in Line 6, the flag \textit{Ffound} is set. Only after processing all
      $\vee$-operators on the remaining part of the path, the \F-operator is processed in Lines 10--15.
      Now let $\VAR_1(\varphi)$ be all variables that occur in the scope of an \F-operator in $\varphi$,
      and let $\VAR_0(\varphi)$ be all other variables in $\varphi$.
  
      \par\smallskip\noindent
      \textit{For the {\boldmath``$\Rightarrow$''} direction,}~ suppose $\mcinst{\G\varphi,K,a} \in \mc{\{\F,\G\},V_{-}}$.
      Then there exists a path $p$ in $K$ such that $p_0 = a$, and for all $i \ge 0$,
      $\struct{p}{K},i \vDash \varphi$. This means that, for each $i$, either there exists some $x_i \in \VAR_0(\varphi)$
      such that $\struct{p}{K},i \vDash x_i$, or there is some $x_i \in \VAR_1(\varphi)$ such that
      $\struct{p}{K},i \vDash \F x_i$. Now it can be seen that there is a non-rejecting sequence of runs through the \textit{while} loop
      in Lines 2--32 after which $c$ has value $|W|+1$, which then leads to the \textit{accept} in Line 33:
  
      Consider
      the begin of an arbitrary single run through the \textit{while} loop in Line 2. Let $p_i$ be the
      current value of $b$. If $x_i \in \VAR_0(\varphi)$, then the algorithm can ``guess its way through
      the tree of $\varphi$'' in Lines 3--5 and finally reaches Line 19 with $\psi = x_i$.
      It does not reject in Line 20, increases $c$ in Line 23, guesses $p_{i+1}$ in
      Line 24, and resets \textit{Ffound} and $\psi$ appropriately in Lines 25, 26. Otherwise,
      if $x_i \in \VAR_1(\varphi)$, then there is some $n \ge 0$ such that $p_{i+n}$ satisfies
      $x_i$. It is safe to assume that $n \le |W|$ because otherwise the path from $p_i$ to $p_{i+n}$
      would describe a cycle within $K$ which could be replaced by a shorter, more direct, path
      without affecting satisfiability of the relevant subformulae in the states $p_0,\dots,p_i$.
      Now the algorithm can proceed as in the previous case, but, in addition, it has to guess the
      correct value of $n$ and the sequence $p_{i+1},\dots,p_{i+n}$ in Lines 10--15.
  
      \par\smallskip\noindent
      \textit{For the {\boldmath``$\Leftarrow$''} direction,}~ let there be an accepting run of
      $\FGValgo(\varphi,K,a,\texttt{always})$. Since the algorithm is in \texttt{always}
      mode, and $\varphi$ is \G-free, the acceptance can only take place in Line 33, without a recursive
      call in Line 17.
      Hence the counter $c$ reaches value $|W|+1$
      in the \textit{while} loop in Lines 2--32.
  
      Let $p=p_0,p_1,\dots,p_m$ be the sequence of states guessed in this run in Lines 13 and 24,
      where $p_0 = a$. Furthermore, let $i_0,\dots,i_{|W|+1}$ be an index sequence that determines a subsequence
      of $p$ such that
      \begin{itemize}
        \item
          $0 = i_0 < i_1 < \dots < i_{|W|+1} = m$,\quad and
        \item
          for each $j > 0$, $p_{i_j}$ is the value assigned to $b$ in Line 24
          after having set $c$ to value $j$ in Line 23.
      \end{itemize}
      Now it is clear that for all $j = 0,\dots,|W|$, there must be a variable $x_j$ such that
      $x_j \in \eta(p_{i_{j+1}-1})$. If $x_j \in \VAR_0(\varphi)$, then $p_{i_{j+1}} = p_{i_j}+1$,
      and each structure $p'$ extending $p$ beyond $p_m$ satisfies $x_j$ (and hence $\varphi$)
      at $p_{i_j}$. Otherwise $x_j \in \VAR_1(\varphi)$, and the accepting run of the algorithm has guessed
      the states $p_{i_j},\dots,p_{i_{j+1}-1}$ in Line 13. In this case, each structure $p'$ extending $p$
      beyond $p_m$ satisfies $\F x_j$ (and hence $\varphi$) at $p_{i_j},\dots,p_{i_{j+1}-1}$.
      From these two cases, we conclude that each such $p'$ satisfies $\varphi$ in all states $p_0,\dots,p_m$.
  
      We now restrict attention to the states $p_{i_1-1},\dots,p_{i_{|W|+1}-1}$. Among these $|W|+1$ states,
      some of the $|W|$ states of $K$ has to occur twice. Assume $p_{i_j-1}$ and $p_{i_k-1}$ represent the
      same state from $K$, where $j < k$. Then we can create an (infinite) structure $p''$ from $p$ that consists
      of states $p_0,\dots,p_{i_k-1}$, followed by an infinite repetition of the sequence
      $p_{i_j},\dots,p_{i_k-1}$. It is now obvious that $p''$ satisfies $\varphi$ in every state, hence
      $p'',0 \vDash \varphi$, that is, $\mcinst{\G\varphi,K,a} \in \mc{\{\F,\G\},V_{-}}$.
  
      \par\medskip\noindent
      \textit{For the induction step,}~ let $\mu_\G(\varphi) > 0$. For the same reasons as above,
      we can assume that any \F-operator only occurs in front of variables or in front of some \G-operator
      in $\varphi$. This ``normal form'' is taken care of by setting \textit{Ffound} to
      \texttt{true} when \F\ is found (Line 7) and processing this occurrence of \F\ only when a variable
      or some \G-operator is found (Lines 10--15).
  
      \par\smallskip\noindent
      \textit{For the {\boldmath``$\Rightarrow$''} direction,}~ suppose $\mcinst{\G\varphi,K,a} \in \mc{\{\F,\G\},V_{-}}$.
      Then there exists a path $p$ in $K$ such that $p_0 = a$, and for all $i \ge 0$,
      $\struct{p}{K} \vDash \varphi$. We describe an accepting run of $\FGValgo(\varphi,K,a,\texttt{always})$.
      Consider a single passage through the \textit{while} loop with the following configuration. The programme
      counter has value $2$, $c$ has value at most $|W|$, $b$ has value $p_i$, and $\psi$ has value $\varphi$.
      Since $\struct{p}{K} \vDash \varphi$, there are four possible cases. The argumentation for the first two of them
      is the same as in the base case.
  
      \newlength{\links}
      \settowidth{\links}{\textsl{Case 4.}~~}
      \begin{list}{
        blub
      }{
        \leftmargin\links
        \labelwidth\links
      }
        \item[\textsl{Case 1.}]
          $\struct{p}{K},i \vDash x$, for some $x \in \VAR_0(\varphi)$.
  
        \item[\textsl{Case 2.}]
          $\struct{p}{K},i \vDash \F x$, for some $x \in \VAR_1(\varphi)$.
  
        \item[\textsl{Case 3.}]
          $\struct{p}{K},i \vDash \G\alpha$, for some maximal \G-subformula $\G\alpha$ of $\varphi$
          that is \textit{not} in the scope of some \F-operator.
  
          This means that $\alpha$ is \textit{true} everywhere on the path $p_i,p_{i+1},p_{i+2},\dots$. Hence,
          due to the induction hypothesis, $\FGValgo(\alpha,K,b_i,\texttt{always})$ has
          an accepting run. By appropriate guesses in Line 4, the current call of the algorithm can reach that
          accepting recursive call in Line 17.
  
        \item[\textsl{Case 4.}]
          $\struct{p}{K},i \vDash \G\alpha$, for some maximal \G-subformula $\G\alpha$ of $\varphi$
          that \textit{is} in the scope of some \F-operator.
  
          By combining the arguments of Cases 3 and 2, we can find an accepting run for this case.
      \end{list}
  
      \par\noindent
      If only Cases 1 or 2 occur more than $|W|$ times in a sequence, then $c$ will finally take on
      value $|W|+1$, and this call will accept in Line 31. Otherwise, whenever one of Cases 3 and 4
      occurs, than the acceptance of the new call---and hence of the current call---is due
      to the induction hypothesis.
  
      \par\smallskip\noindent
      \textit{For the {\boldmath``$\Leftarrow$''} direction,}~ let there be an accepting run of
      $\FGValgo(\varphi,K,a,\texttt{always})$. Since the algorithm is in \texttt{always}
      mode, the acceptance can only take place in Line 33 or in the recursive call in Line 17. If the run
      accepts in Line 33, the same arguments as in the base case apply. If the acceptance is via the
      recursive call, then let $p = p_0,\dots,p_m$ be the sequence of states guessed such that $p_0 = a$,
      and $p_m$ is the value of $b$ when the recursive call with $\G\alpha$ takes place.
      Due to the induction hypothesis, $\mcinst{\G\alpha,K,b_m} \in \mc{\{\F,\G\},V_{-}}$ and, hence,
      there is an infinite structure $p'$ extending $p$ beyond $p_m$ such that $\struct{(p')}{K},m \vDash \G\varphi$.
      Furthermore, we can use the same argumentation as in the base case to show that,
      for each $i \leq m$, $\struct{(p')}{K},i \vDash \varphi$. Therefore, $\struct{(p')}{K},0 \vDash \G\varphi$,
      which proves $\mcinst{\G\varphi,K,a} \in \mc{\{\F,\G\},V_{-}}$.
    \end{Proofofclaim}
  \fi

  \begin{Claim}
    \label{claim:FGV2}
    For each $\varphi \in \L{\{\F,\G\},V_{-}}$, each $K = (W,R,\eta)$, and each $a \in W$:
    \[
      \mcinst{\varphi,K,a} \in \mc{\{\F,\G\},V_{-}}
      ~~
      \Leftrightarrow
      ~~
      \text{there is an accepting run of~}
      \FGValgo(\varphi,K,a,\texttt{now})
    \]
  \end{Claim}%

  \ifextended
    \begin{Proofofclaim}[\ref{claim:FGV2}]
      \textit{For the {\boldmath``$\Rightarrow$''} direction,}~ suppose $\mcinst{\varphi,K,a} \in \mc{\{\F,\G\},V_{-}}$.
      Then there exists a path $p$ in $K$ such that $p_0 = a$ and $\struct{p}{K},0 \vDash \varphi$.
      We describe an accepting run of $\FGValgo(\varphi,K,a,\texttt{now})$.
      Consider the first passage through the \textit{while} loop with the following configuration. The programme
      counter has value $2$, $c$ has value $0$ (this value does not change in \textit{now} mode),
      $b$ has value $a$, and $\psi$ has value $\varphi$.
      Since $\struct{p}{K},0 \vDash \varphi$, there are four possible cases. The argumentation for them is very similar to
      that in the proof of Claim \ref{claim:FGV1}.
    
      \settowidth{\links}{\textsl{Case 4.}~~}
      \begin{list}{
        blub
      }{
        \leftmargin\links
        \labelwidth\links
      }
        \item[\textsl{Case 1.}]
          $\struct{p}{K},0 \vDash x$, for some $x \in \VAR_0(\varphi)$.
    
          As in the proof of Claim \ref{claim:FGV1}, the algorithm can guess the appropriate disjuncts
          in Lines 3--5, does not reject in Line 20 and accepts (it is in \textit{now} mode!)
          in Line 28.
    
        \item[\textsl{Case 2.}]
          $\struct{p}{K},0 \vDash \F x$, for some $x \in \VAR_1(\varphi)$.
    
          As in the proof of Claim \ref{claim:FGV1}, there exists some $n$ with $0 \le n \le |W|$ such that $b_n$ satisfies
          $x_i$. The algorithm can proceed as in the previous case, but, in addition, it has to guess the
          correct value of $n$ and the sequence $p_1,\dots,p_n$ in Lines 10--15.
    
        \item[\textsl{Case 3.}]
          $\struct{p}{K},0 \vDash \G\alpha$, for some maximal \G-subformula $\G\alpha$ of $\varphi$
          that is \textit{not} in the scope of some \F-operator.
    
          This means that $\alpha$ is \textit{true} everywhere on the path $p$. Hence,
          due to the induction hypothesis, $\FGValgo(\alpha,K,b_i,\texttt{now})$ has
          an accepting run. By appropriate guesses in Line 4, the current call of the algorithm can reach that
          accepting recursive call in Line 17.
    
        \item[\textsl{Case 4.}]
          $\struct{p}{K},0 \vDash \G\alpha$, for some maximal \G-subformula $\G\alpha$ of $\varphi$
          that \textit{is} in the scope of some \F-operator.
    
          By combining the arguments of Cases 3 and 2, we can find an accepting run for this case.
      \end{list}
    
      \par\smallskip\noindent
      \textit{For the {\boldmath``$\Leftarrow$''} direction,}~ suppose there exists an accepting run of
      $\FGValgo(\varphi,K,a,\texttt{now})$. Since the algorithm is in \texttt{now}
      mode, the acceptance can only take place in Line 28 or in the recursive call in Line 17. If the run
      accepts in Line 28, then there is some variable $x$ such that either
      $x \in \VAR_0(\varphi)$ and $x \in \eta(a)$, or $x \in \VAR_1(\varphi)$ and the run guesses a path
      $p_0,\dots,p_m$ with $p_0 = a$ and $x \in \eta(p_m)$. In both cases, each structure $p'$ extending
      the sequence of states guessed so far, satisfies $\varphi$ at $a$.
      On the other hand, if the run accepts in the recursive call, we can argue as in the proof
      of Claim \ref{claim:FGV1}.
    \end{Proofofclaim}
  \fi
\end{proof}

Unfortunately, the above argumentation fails for \mc{\{\G,\X\},V} because of the following considerations.
The \NL-algorithm in the previous proof relies on the fact
that a satisfying path for $\G\psi$, where $\psi$ is of the form \eqref{eq:normal_form_FGV},
can be divided into a ``short'' initial part satisfying the disjunction of the atoms,
and the remaining end path satisfying one of the $\G\psi_i$ at its initial state.
When guessing the initial part, it suffices to separately guess each state and consult $\eta$.

If \X\ were in our language, the disjuncts would be of the form $\X^{k_i}y_i$ and $\X^{\ell_i}\G\psi_i$.
Not only would this make the guessing of the initial part more intricate. It would also require
memory for processing each of the previously satisfied disjuncts $\X^{k_i}y_i$. An adequate modification of
\FGValgo\ would require more than logarithmic space. We have shown \NP-hardness for \mc{\{\G,\X\},V}
in Theorem~\ref{theorem:MC(G,X;V) NP-h}.

\begin{theorem}\label{theorem:MC(X;L) NL-c}
  Let $L_{-}$ be a finite set of Boolean functions such that $\clone{L_{-}}\subseteq\cL.$
  Then \mc{\{\X\},L_{-}} is \NL-complete.
\end{theorem}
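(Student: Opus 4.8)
The plan is to establish \NL-completeness of \mc{\{\X\},L_{-}} by combining the \NL-hardness from Lemma~\ref{lemma:MC(F|G|X;.) NL-h} with an \NL\ upper bound. For the upper bound I would work over the base $\{\oplus,0,1\}$, which suffices since $\clone{L_{-}}\subseteq\cL$ contains only constants, projections, and multi-ary \XOR. The first step is to put the input formula into a normal form: using the equivalences $\X(\psi_1\oplus\dots\oplus\psi_k)\equiv\X\psi_1\oplus\dots\oplus\X\psi_k$ and $\X 0\equiv 0$, $\X 1\equiv 1$, every formula $\varphi\in\L{\{\X\},L_{-}}$ can be rewritten in logspace as an \XOR\ of terms of the form $\X^{i}y$ where $y$ is a variable or a constant. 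Constant terms can be collapsed into a single parity bit $c_0$, and after grouping the variable terms by (variable, nesting depth) and cancelling pairs modulo $2$, we obtain $\varphi'=c_0\oplus\X^{i_1}y_1\oplus\dots\oplus\X^{i_r}y_r$ with pairwise distinct pairs $(y_j,i_j)$ and $i_j\le|\varphi|$.

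The second step is to observe that this normal form cannot in general be stored in logspace, because $r$ may be linear in $|\varphi|$ and each pair needs logarithmically many bits, so $r\log|\varphi|$ bits are too many; hence I would \emph{not} precompute $\varphi'$ but instead read the pairs $(y_j,i_j)$ off the input on demand. Let $I=\max_j i_j\le|\varphi|$. Then $\mcinst{\varphi,K,a}\in\mc{\{\X\},L_{-}}$ iff there is a path $a=c_0,c_1,\dots,c_I$ in $K$ (of length $I$) such that $c_0\oplus\bigl(\bigoplus_{j=1}^{r}[\,y_j\in\eta(c_{i_j})\,]\bigr)=1$, where $[\cdot]$ denotes the Boolean $\{0,1\}$-value. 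The \NL\ algorithm guesses the states $c_1,\dots,c_I$ one at a time, each time verifying the edge relation; while doing so it maintains a single parity bit $v$, initialised to $c_0$, and for the current index $t$ it scans the input formula once to find all terms $\X^{t}y$ and XORs $[\,y\in\eta(c_t)\,]$ into $v$ for each. It accepts iff $v=1$ at the end. The working storage is: the current state $c_t$ and its successor (logspace), the counter $t\le|\varphi|$ (logspace), the pointer used to scan $\varphi$ (logspace), and the single bit $v$; so the whole procedure runs in nondeterministic logspace.

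The main obstacle, and the point that must be handled carefully, is precisely the fact that the natural normal form is not logspace-representable, so the argument must avoid materialising it and instead interleave parsing the formula with the guessed path. Concretely, for each guessed state $c_t$ the algorithm needs to reparse $\varphi$, recognising which maximal \XOR-terms have the shape $\X^{t}y$; this requires a little care because the parse must respect the tree structure of the $\oplus$-operators (and the presence of projections, which act as identities). Since linear parsing of a formula over a fixed finite base, together with computing the nesting depth of \X\ in front of a given leaf, is a logspace task, this causes no difficulty in principle, but it is the step where correctness of the reduction must be argued in detail. The remaining correctness claim---that the parity condition above is equivalent to $\struct{p}{K},0\vDash\varphi$ for the path $p$ extending $c_0,\dots,c_I$ arbitrarily---follows by a routine induction on the structure of $\varphi$, using that the truth value of $\X^{t}y$ at the initial state depends only on $c_t$ and that $\oplus$ is computed correctly bit by bit. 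Finally, the lower bound is immediate from Lemma~\ref{lemma:MC(F|G|X;.) NL-h}, which already gives \NL-hardness of \mc{\{\X\},\emptyset}, hence of \mc{\{\X\},L_{-}}.
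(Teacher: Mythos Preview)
Your proposal is correct and follows essentially the same approach as the paper: both push \X\ through $\oplus$ to reach the normal form $\X^{i_1}p_1\oplus\dots\oplus\X^{i_\ell}p_\ell$, then guess the path one state at a time while maintaining a single parity bit and, at each step $t$, rescanning the input formula to XOR in the contributions of all terms with nesting depth $t$. Your extra step of cancelling duplicate $(y,i)$-pairs modulo~2 is unnecessary (the parity update handles this automatically), but you are more explicit than the paper about why the normal form is not materialised and is instead read off the input on demand.
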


  \begin{table}[t]
    \centering
    \renewcommand{\arraystretch}{1.1}
    \textbf{Algorithm {\boldmath \XLalgo}}

    \par\medskip
    \parbox{.49\textwidth}{
      \textbf{Input}                                                                 \\
      \hspace*{7mm}$\varphi' = \X^{i_1}p_1 \oplus \dots \oplus \X^{i_\ell}p_\ell$    \\
      \hspace*{7mm}Kripke structure $K = (W,R,\eta)$                                 \\
      \hspace*{7mm}$a \in W$                                                         \\[4pt]
      \textbf{Output}                                                                \\
      \hspace*{7mm}\textbf{accept} or \textbf{reject}
    }
    \hfill
    \parbox{.49\textwidth}{
      \begin{algorithmic}[1]
        \STATE{%
          $\textit{parity} \leftarrow 0$;~
          $b \leftarrow a$;~
          $k \leftarrow 0$%
        }
        \WHILE{$k \le m$}
          \FOR{$j = 1,\dots,\ell$}
            \IF{$i_j = k$ and $p_j \in \eta(b)$}
              \STATE{$\textit{parity} \leftarrow 1-\textit{parity}$}
            \ENDIF
          \ENDFOR
          \STATE{$k \leftarrow k+1$}
          \STATE{$b \leftarrow \text{guess some $R$-successor of $b$}$}
        \ENDWHILE
        \STATE{\textbf{return} $\textit{parity}$}
      \end{algorithmic}
    }
    \caption{The algorithm \XLalgo}
    \label{tab:XLalgo}
    \renewcommand{\arraystretch}{1.5}
  \end{table}

\begin{proof}
  The lower bound follows from Lemma~\ref{lemma:MC(F|G|X;.) NL-h}.

  For the upper bound, let $\varphi \in \L{\{\X\},L_{-}}$ be a formula, $K = (W,R,\eta)$ a Kripke structure,
  and $a \in W$ a state. Let $m$ denote the maximal nesting depth of \X-operators
  in $\varphi$. Since for any $k$-ary Boolean operator $f$ from $L_{-}$, the formula
  $\X f(\psi_1,\dots,\psi_k)$ is equivalent to $f(\X\psi_1,\dots\X\psi_k)$,
  $\varphi$ is equivalent to a formula $\varphi' \in \L{\{\X\},L_{-}}$ of the form
  {\boldmath$
      \varphi' = \X^{i_1}p_1 \oplus \dots \oplus \X^{i_\ell}p_\ell\,
  $},
  where $0 \le i_j \le m$ for each $j = 1,\dots,\ell$. It is not necessary to compute $\varphi'$
  all at once, because it will be sufficient to calculate $i_j$ each time the variable $p_j$ is encountered
  in the algorithm \XLalgo\ given in Table~\ref{tab:XLalgo}.

  It is easy to see that \XLalgo\ returns $1$ if and only if $\varphi$ is satisfiable.
  From the used variables, it is clear that \XLalgo\ runs in nondeterministic logarithmic space.~~~~
\end{proof}

\par\medskip
In the fragment with \S\ as the only temporal operator,
\S\ is without effect, since we can never leave the initial state.
Hence, any formula $\alpha\S\beta$ is satisfied at the initial state of any structure $K$
if and only if $\beta$ is.
This leads to a straightforward logspace reduction from \mc{\{\S\},\cBF} to \mc{\emptyset,\cBF}:
Given a formula $\varphi \in \L{\{\S\},\cBF}$,
successively replace every subformula $\alpha\S\beta$ by $\beta$ until all occurrences of \S\ are eliminated.
The resulting formula $\varphi'$ is initially satisfied in any structure $K$ iff $\varphi$ is.

Now \mc{\emptyset,\cBF} is the Formula Value Problem,
which has been shown to be solvable in logarithmic space in \cite{lynch77}.
Thus we obtain the following result.

\begin{theorem}\label{theorem:MC(S;BF) in L}\label{theorem:MC(S;.) in L}
  Let $B$ be a finite set of Boolean functions. Then $\mc{\{\S\},B} \in \LS$.
\end{theorem}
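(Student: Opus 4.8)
The plan is to formalize the fact that the since-operator carries no information at position~$0$ of a path: since a path $p$ has no states before $p_0$, a subformula $\alpha\S\beta$ evaluated at position~$0$ collapses to $\beta$ evaluated at position~$0$. This lets me rewrite the input formula into an equivalent purely propositional formula and thereby reduce \mc{\{\S\},B} in logarithmic space to the Formula Value Problem \mc{\emptyset,B}, which is in \LS\ by Lynch~\cite{lynch77}.

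Concretely, first I would define, for every temporal $B$-formula $\varphi$ over $\{\S\}$, a propositional $B$-formula $\varphi'$ by structural recursion: a variable maps to itself; a propositional node $f(\varphi_1,\dots,\varphi_n)$ with $f\in B$ maps to $f(\varphi_1',\dots,\varphi_n')$ (in particular a constant maps to itself); and $\alpha\S\beta$ maps to $\beta'$. Since this transformation only deletes material and never introduces new operators, $\varphi'\in\L{\emptyset,B}$. Next I would prove, by induction on the structure of $\varphi$, that for every Kripke structure $K=(W,R,\eta)$ and every path $p$ in $K$, $\struct{p}{K},0\vDash\varphi$ if and only if $\struct{p}{K},0\vDash\varphi'$. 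The propositional cases are immediate from the induction hypothesis, and the only interesting case is $\varphi=\alpha\S\beta$: by the semantics of \S, $\struct{p}{K},0\vDash\alpha\S\beta$ holds iff there is $\ell\le 0$ with $\struct{p}{K},\ell\vDash\beta$ and $\struct{p}{K},j\vDash\alpha$ for all $\ell<j\le 0$; but $\ell\le 0$ forces $\ell=0$, so the condition on $\alpha$ is vacuous and the statement reduces to $\struct{p}{K},0\vDash\beta$, which by the induction hypothesis is $\struct{p}{K},0\vDash\beta'=\varphi'$.

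Then I would assemble the reduction. The map $\mcinst{\varphi,K,a}\mapsto\mcinst{\varphi',K,a}$ leaves $K$ and $a$ untouched, and by the equivalence above $\mcinst{\varphi,K,a}\in\mc{\{\S\},B}$ if and only if $\mcinst{\varphi',K,a}\in\mc{\emptyset,B}$ (here I use that $R$ is total, so some path starts at $a$, and that the truth of the propositional formula $\varphi'$ at position~$0$ of such a path depends only on $\eta(a)$). Since $\mc{\emptyset,B}$ is the Formula Value Problem, which is in \LS\ by~\cite{lynch77}, and \LS\ is closed under logspace reductions, it remains only to check that $\varphi\mapsto\varphi'$ is logspace-computable. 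I expect this to be the one point needing a little care: $\varphi'$ is exactly the subtree of the parse tree of $\varphi$ reachable from the root by descending into every argument of each propositional node but only into the second argument of each \S-node, and such a subtree can be emitted by a standard logspace traversal of the input — one must only make sure not to perform the ``successive replacement'' by naive iteration (which could blow up the work tape) but as a single pass. Everything else is routine.
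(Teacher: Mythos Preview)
Your proposal is correct and follows essentially the same approach as the paper: both observe that $\alpha\S\beta$ collapses to $\beta$ at position~$0$, replace \S-subformulae accordingly, and invoke Lynch's logspace result for the Formula Value Problem. Your treatment is in fact slightly more careful than the paper's, which simply says ``successively replace every subformula $\alpha\S\beta$ by $\beta$'' without addressing the logspace subtlety you flag.
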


In our classification of complexity, which is based on logspace reductions $\redlogm$,
a further analysis of \S-fragments is not possible.
However, a more detailed picture emerges if stricter reductions are considered,
see \cite[Chapter 2]{sch07}.

\begin{theorem}\label{theorem:MC(S,F;V) NL-c}
  Let $V_{-}$ be a finite set of Boolean functions such that $\clone{V_{-}}\subseteq\cV.$
  Then \mc{\{\S,\F\},V_{-}} is \NL-complete.
\end{theorem}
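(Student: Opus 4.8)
The lower bound is immediate from Lemma~\ref{lemma:MC(F|G|X;.) NL-h}: since $\{\F\}\subseteq\{\S,\F\}$, every instance of $\mc{\{\F\},V_{-}}$ is an instance of $\mc{\{\S,\F\},V_{-}}$, so the latter is \NL-hard. For the upper bound the plan is to show that the since-operator is \emph{powerless} in this fragment: when a formula is evaluated at the \emph{initial} state of a path, every occurrence of $\S$ can be eliminated, leaving an equivalent formula over $\{\F\}$ and $\vee$, whose model-checking problem is already known to be in \NL\ (Theorem~\ref{theorem:MC(F,X;V) MC(G,X;E) NL-c}\,\ref{part:FXV}, using $\L{\{\F\},V_{-}}\subseteq\L{\{\F,\X\},V_{-}}$). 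This is the natural quantitative counterpart of the remark before Theorem~\ref{theorem:MC(S,G;.) NP-h}: $\S$ only becomes useful when another operator can \emph{force} a visit to a state with a nonempty past, and $\F$ alone cannot do this.

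As noted before Theorem~\ref{theorem:MC(F,G,X;N) NL-c}, we may assume $V_{-}=\{\vee,0,1\}$, so formulae are built from variables, the constants $0,1$, $\vee$, $\F$, and $\S$. The elimination rests on two observations about the semantics \emph{at position $0$} of an arbitrary path $p$ in an arbitrary $K$: (i) $\struct{p}{K},0\vDash\varphi_1\S\varphi_2$ iff $\struct{p}{K},0\vDash\varphi_2$, since the only admissible $\ell\le 0$ is $\ell=0$ and the range ``$\ell<j\le 0$'' is empty; and (ii) $\struct{p}{K},0\vDash\F(\varphi_1\S\varphi_2)$ iff $\struct{p}{K},0\vDash\F\varphi_2$ --- for the nontrivial direction note that if $\varphi_2$ holds at some $\ell\ge 0$ then $\varphi_1\S\varphi_2$ holds at $\ell$ with an empty ``in between'' range, and conversely $\varphi_1\S\varphi_2$ at $j$ forces $\varphi_2$ at some $\ell$ with $0\le\ell\le j$. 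Combining (i) and (ii) with $\F\F\psi\equiv\F\psi$, $\F(\varphi_1\vee\varphi_2)\equiv\F\varphi_1\vee\F\varphi_2$, and distributivity of $\vee$, I would define a logspace translation $\varphi\mapsto\varphi^{\circ}$ by structural recursion, carrying a one-bit flag recording whether the current subformula lies underneath an $\F$: a variable produces the atom $y$ (flag off) or $\F y$ (flag on); an $\F$ turns the flag on and recurses on its argument; $\vee$ recurses on both arguments; $\varphi_1\S\varphi_2$ recurses only on $\varphi_2$ with the flag unchanged; constants are handled analogously, using $\F 1\equiv 1$ and $\F 0\equiv 0$. Then $\varphi^{\circ}$ is a disjunction of atoms of the forms $y$, $\F y$, $0$, $1$; its size is linear in $|\varphi|$ (its atoms correspond exactly to the variable and constant occurrences of $\varphi$ that do not lie in the left operand of any $\S$), it can be produced --- or its atoms merely enumerated together with their flags --- in logarithmic space, and a routine induction on $\varphi$ (using (i), (ii) for the inductive steps involving $\S$) shows that for every $K$, every state $a$, and every path $p$ with $p_0=a$ we have $\struct{p}{K},0\vDash\varphi$ iff $\struct{p}{K},0\vDash\varphi^{\circ}$.

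Hence $\mcinst{\varphi,K,a}\mapsto\mcinst{\varphi^{\circ},K,a}$ is a logspace many-one reduction of $\mc{\{\S,\F\},V_{-}}$ to $\mc{\{\F\},V_{-}}$. Finally, $\mcinst{\varphi^{\circ},K,a}$ is a yes-instance iff some disjunct of $\varphi^{\circ}$ is satisfiable along some path starting at $a$: a disjunct $1$ (or $\F 1$) always is; a disjunct $0$ (or $\F 0$) never is; a disjunct $y$ is iff $y\in\eta(a)$; and a disjunct $\F y$ is iff some state carrying $y$ is reachable from $a$ in $K$ --- and then, $R$ being total, the finite witnessing path extends to an infinite one. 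All of these are graph-reachability checks, so $\mc{\{\S,\F\},V_{-}}\in\NL$, and together with the lower bound we obtain \NL-completeness.

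The work is essentially conceptual: the crux is observation (ii), which pinpoints why $\F$ --- in contrast to the $\G$, $\X$, and $\neg$ used in the hardness proofs of Theorems~\ref{theorem:MC(U;.) NP-h}--\ref{theorem:MC(S,F;E|N) NP-h} --- cannot put the since-operator to any use at the initial point of a path. Once this is seen, the two-mode recursion and the logspace bookkeeping needed to produce (or enumerate the atoms of) $\varphi^{\circ}$ are routine.
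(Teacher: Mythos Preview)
Your proposal is correct and follows essentially the same approach as the paper: the paper also proves the upper bound by establishing the equivalences $\struct{p}{K},0\vDash\alpha\S\beta\Leftrightarrow\struct{p}{K},0\vDash\beta$ and $\struct{p}{K},0\vDash\F(\alpha\S\beta)\Leftrightarrow\struct{p}{K},0\vDash\F\beta$ (your (i) and (ii)), together with $\F\F\alpha\equiv\F\alpha$ and $\F(\alpha\vee\beta)\equiv\F\alpha\vee\F\beta$, and uses them to obtain a logspace reduction to $\mc{\{\F\},V_{-}}$. The only cosmetic difference is that the paper phrases the reduction as ``successively replace the outermost $\alpha\S\beta$ by $\beta$,'' whereas you spell out the equivalent one-pass recursion with an $\F$-flag and go on to describe the final \NL\ decision procedure explicitly.
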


\begin{proof}
  The lower bound follows from Lemma~\ref{lemma:MC(F|G|X;.) NL-h}.
  For the upper bound, we will show that \mc{\{\S,\F\},V_{-}} can be reduced to \mc{\{\F\},V_{-}}
  by disposing of the \S-operator as follows.
  Consider an arbitrary Kripke structure $K$ and a path $p$ therein.
  Then the following equivalences hold.

  \noindent
  \parbox{.44\textwidth}{
    \begin{alignat}{3}
      \struct{p}{K},0 & \vDash \alpha\S\beta       & & \text{~~iff~~} & \struct{p}{K},0 & \vDash \beta                 \label{eq:SFV_S}  \\
      \struct{p}{K},0 & \vDash \F(\alpha\S\beta)   & & \text{~~iff~~} & \struct{p}{K},0 & \vDash \F\beta               \label{eq:SFV_SF}
    \end{alignat}
  }
  \hfill
  \parbox{.54\textwidth}{
    \begin{alignat}{3}
      \struct{p}{K},0 & \vDash \F(\alpha\vee\beta) & & \text{~~iff~~} & \struct{p}{K},0 & \vDash \F\alpha \vee \F\beta \label{eq:SFV_Fv} \\
      \struct{p}{K},0 & \vDash \F\F\alpha          & & \text{~~iff~~} & \struct{p}{K},0 & \vDash \F\alpha              \label{eq:SFV_FF}
    \end{alignat}
  }

  Statements \eqref{eq:SFV_Fv} and \eqref{eq:SFV_FF} are standard properties
  and follow directly from the definition of satisfaction for \F\ and $\vee$.
  Statement \eqref{eq:SFV_S} is simply due to the fact that there is no state in the past of $p_0$.
  As for \eqref{eq:SFV_SF}, we consider both directions separately.
  Assume that $\struct{p}{K},0 \vDash \F(\alpha\S\beta)$.
  Then there is some $i \ge 0$ such that $\struct{p}{K},i \vDash \alpha\S\beta$.
  This implies that there is some $j$ with $0 \le j \le i$ and $\struct{p}{K},j \vDash \beta$.
  Hence, $\struct{p}{K},0 \vDash \F\beta$.
  For the other direction, let $\struct{p}{K},0 \vDash \F\beta$.
  Then there is some $i \ge 0$ such that $\struct{p}{K},i \vDash \beta$.
  This implies $\struct{p}{K},i \vDash \alpha\S\beta$.
  Hence, $\struct{p}{K},0 \vDash \F(\alpha\S\beta)$.

  Now consider an arbitrary formula $\varphi \in \L{\{\S,\F\},V_{-}}$.
  Let $\varphi'$ be the formula obtained from $\varphi$
  by successively replacing the outermost \S-subformula $\alpha\S\beta$ by $\beta$
  until all occurrences of \S\ are eliminated.
  This procedure can be performed in logarithmic space,
  and the result $\varphi'$ is in $\L{\{\F\},V_{-}}$.
  Due to \eqref{eq:SFV_S}--\eqref{eq:SFV_FF}, for any path $p$ in any Kripke structure $K$, it holds that
  $
    \struct{p}{K},0 \vDash \varphi \text{~if and only if~} \struct{p}{K},0 \vDash \varphi'.
  $
  Hence, the mapping $\varphi \mapsto \varphi'$ is a logspace reduction
  from \mc{\{\S,\F\},V_{-}} to \mc{\{\F\},V_{-}}.
\end{proof}

\section{Conclusion, and open problems: the ugly fragments}
\label{sec:concl}

We have almost completely separated the model-checking problem for Linear Temporal Logic
with respect to arbitrary combinations of temporal and propositional operators
into tractable and intractable cases.
We have shown that all tractable MC problems are at most \NL-complete or even easier to solve.
This exhibits a surprisingly large gap in complexity between tractable and intractable cases.
The only fragments that we have not been able to cover by our classification
are those where only the binary \XOR-operator is allowed.
However, it is not for the first time that this constellation has been difficult to handle,
see \cite{bhss06,bss+07}.
Therefore, these fragments can justifiably be called ugly.

The borderline between tractable and intractable fragments is somewhat diffuse
among all sets of temporal operators without \U.
On the one hand, this borderline is not determined by a single set of propositional operators
(which is the case for the satisfiability problem, see \cite{bss+07}).
On the other hand, the columns \cE\ and \cV\ do not, as one might expect, behave dually.
For instance, while \mc{\{\G\},\cV} is tractable, \mc{\{\F\},\cE} is not---although
\F\ and \G\ are dual, and so are \cV\ and \cE.

Further work should find a way to handle the open \XOR\ cases from this paper
as well as from \cite{bhss06,bss+07}.
In addition, the precise complexity of all hard fragments not in bold-face type in Table \ref{tab:overview}
could be determined.
Furthermore, we find it a promising perspective to use our approach for obtaining a fine-grained analysis
of the model-checking problem for more expressive logics, such as CTL, CTL*, and hybrid temporal logics.

  \bibliographystyle{alpha}
\newcommand{\etalchar}[1]{$^{#1}$}


  \newpage
  \appendix
\section{Known Facts from Graph Theory}

\begin{lemma}
  \label{lem:existence_of_infinite_path_is_NL-hard}

  The following problem is \NL-hard.~
  Given a directed graph $G = (V,E)$ and a node $a \in V$, is there an infinite path
  in $G$ starting at $a$?
\end{lemma}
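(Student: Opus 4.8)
The plan is to reduce the $s$--$t$-reachability problem for directed graphs (GAP) --- which is well known to be \NL-complete, and hence in particular \NL-hard --- to the problem of deciding the existence of an infinite path.

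First I would observe that the obvious attempt, namely: given $\langle G,s,t\rangle$, attach a self-loop at $t$ and ask whether an infinite path starts at $s$, does \emph{not} work, because $G$ may already contain a cycle reachable from $s$, which would produce an infinite path regardless of whether $t$ is reachable. To get around this I would first pass to an \emph{acyclic} graph by the standard layering trick: given $G=(V,E)$ with $n=|V|$, form the graph $H$ on vertex set $\{(v,i)\mid v\in V,\ 0\le i\le n\}$, with an edge from $(u,i)$ to $(v,i+1)$ whenever $(u,v)\in E$ and $i<n$. Then $H$ is acyclic, since the second coordinate strictly increases along every edge, and $t$ is reachable from $s$ in $G$ if and only if $(t,j)$ is reachable from $(s,0)$ in $H$ for some $j\le n$: a walk of length $j$ from $s$ to $t$ in $G$ corresponds to a path from $(s,0)$ to $(t,j)$ in $H$, and conversely the first-coordinate projection of such a path is a walk from $s$ to $t$ in $G$. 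This construction is plainly computable in logarithmic space.

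Then I would build $G'$ from $H$ by adding one fresh vertex $z$ with a self-loop $(z,z)$, together with edges $((t,j),z)$ for all $0\le j\le n$, and declare $(s,0)$ to be the start node; the reduction maps $\langle G,s,t\rangle$ to $\langle G',(s,0)\rangle$, and it is logspace computable. For correctness: if $t$ is reachable from $s$ in $G$, take a simple path $s=v_0,\dots,v_k=t$ with $k\le n$; then $(s,0),(v_1,1),\dots,(t,k),z,z,\dots$ is an infinite path in $G'$ starting at $(s,0)$. Conversely, along any infinite path from $(s,0)$ the second coordinate strictly increases as long as the path stays among the vertices of $H$, so after at most $n$ steps the path must leave $H$ (it cannot terminate, being infinite, and it cannot loop inside $H$, which is acyclic); the only edges leaving $H$ are those into $z$, so some $(t,j)$ lies on the path and is therefore reachable from $(s,0)$ in $H$, whence $t$ is reachable from $s$ in $G$. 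This yields \NL-hardness of the infinite-path problem.

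The only genuine obstacle is the presence of cycles already in $G$; once they are neutralised by the layering step, everything else is routine bookkeeping.
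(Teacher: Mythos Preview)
Your proof is correct and follows essentially the same approach as the paper: both reduce from GAP by first passing to an acyclic layered copy of $G$ (so that pre-existing cycles cannot spuriously yield infinite paths) and then introducing a cycle that is reachable only through a copy of the target node. The sole cosmetic difference is that the paper creates this cycle by adding edges from every copy of $b$ back to $a^1$, whereas you add a fresh self-looping sink $z$ reachable from every copy of $t$; both devices work for exactly the same reason.
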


\begin{proof}
  We reduce from the graph accessibility problem (GAP), which is defined as follows.
  Given a directed graph $G = (V,E)$ and two nodes $a,b \in V$, is there a path
  in $G$ from $a$ to $b$? This problem is known to be \NL-complete~\cite{sav73}.

  For the reduction, consider an arbitrary instance $\mcinst{G,a,b}$ of GAP, where $G = (V,E)$
  and $a,b \in V$. Let $|V| = n$. We transform $G$ into a new graph $G'$ that consists
  of $n$ ``layers'' each of which contains a copy of the nodes from $V$. Whenever there
  is an edge from node $v$ to node $w$ in $G$, the new graph $G'$ will have edges from
  each copy of $v$ to the copy of $w$ on the next layer. This destroys all cycles
  from $G$. Now we add an edge from each copy of $b$ to the first copy of $a$.

  More formally, transform $\mcinst{G,a,b}$ into $\mcinst{G',a^1}$, where $G' = (V',E')$ with
  \begin{align*}
    V' & = \{v^i \mid v \in V \text{ and } 1 \le i \le n\},             \\
    E' & = \{(v^i,w^{i+1}) \mid (v,w) \in E \text{ and } 1 \le i < n\}
           ~\cup~\{(b^i,a^1) \mid 1 \le i \le n\}.
  \end{align*}

  It is easy to see that this transformation is a logspace reduction.
  Let the size of a graph be determined by the size of its adjacency matrix.
  Hence $G$ has size $n^2$, and $G'$ is of size $n^4$. Apart from the
  representation of $G'$, the only space required by the described transformation
  is spent for four counters that take values between $1$ and $n$. With their help,
  each bit of the new adjacency matrix is set according to the definition of $E'$,
  where only a look-up in the old adjacency matrix is required.

  It remains to prove the following claim.

  \begin{Claim}
    \label{claim:graph_problems}
    For each directed graph $G = (V,E)$ and each pair of nodes $a,b \in V$,
    there exists a path in $G$ from $a$ to $b$ if and only if there exists an
    infinite path in $G'$ starting at $a^1$.
  \end{Claim}

  \begin{Proofofclaim}[\ref{claim:graph_problems}]
    ``$\Rightarrow$''.~
    Suppose there is a path in $G$ from $a$ to $b$. 
    W.l.o.g.
    we can assume
    that no node occurs more than once on this path, $a$ and $b$ included. Hence there exist
    nodes $c_1,\dots,c_m \in V$ with $m \le n$ such that $c_1 = a$, $c_m = b$, and for each
    $i = 1,\dots,n-1$, $(c_i,c_{i+1}) \in E$. Due to its construction, $G'$ has the cycle
    $(c_1^1,~ c_2^2,~ \dots,~ c_{m}^m,~ a^1)$ that contains $a^1$. Hence $G'$ has an infinite path
    starting at $a^1$.

    \par\medskip\noindent
    ``$\Leftarrow$''.~
    Suppose there is an infinite path $p$ in $G'$ starting at $a^1$. Since $G'$ is finite, some node
    must occur infinitely often on $p$. This, together with the layer-wise construction of $G'$, implies
    that there are infinitely many nodes of layer 1 on $p$. Among layer-1 nodes, only $a^1$ has ingoing edges.
    Hence $a^1$ must occur infinitely often on $p$. Now the path from some occurrence of $a^1$ to the
    next is a cycle, where the predecessor node of $a^1$ must be some $b^m$. This implies that there
    is a path in $G'$ from $a^1$ to $b^m$. Due to the construction of $G'$, this corresponds to
    a path in $G$ from $a$ to $b$.
  \end{Proofofclaim}
\end{proof}

\end{document}